\newcommand{\Comments}{1}
\newcommand{\mynote}[2]{\ifnum\Comments=1\textcolor{#1}{#2}\fi}
\newcommand{\mytodo}[2]{\ifnum\Comments=1%
  \todo[linecolor=#1!80!black,backgroundcolor=#1,bordercolor=#1!80!black]{#2}\fi}
\crefname{hypothesis}{Hypothesis}{Hypotheses}
\newcommand{\HH}{\mathbb{H}}
\newcommand{\RR}{\mathbb{R}}
\newcommand{\Lev}{\mathbb{L}}
\newcommand{\Ham}{\HH}
\newcommand{\abs}[1]{\lvert#1\rvert}
\newcommand{\indicator}[1][\ ]{ [\![ {#1} ]\!] \phantom{]}}
\title{Levenshtein Graphs: Resolvability, Automorphisms \& Determining Sets\thanks{Submitted to the editors DATE.
\funding{This work has been partially funded by the NSF grant No. 1836914}}}
\author{Perrin E. Ruth
\and Manuel E. Lladser\thanks{Department of Applied Mathematics, University of Colorado, Boulder, CO 80309
  (\email{manuel.lladser@colorado.edu}).}
  }
\begin{document}

\maketitle

\begin{abstract}
We introduce the notion of Levenshtein graphs, an analog to Hamming graphs but using the edit distance instead of the Hamming distance; in particular, Levenshtein graphs allow for underlying strings (nodes) of different lengths. We characterize various properties of these graphs, including a necessary and sufficient condition for their geodesic distance to be identical to the edit distance, their automorphism group and determining number, and an upper bound on their metric dimension. Regarding the latter, we construct a resolving set composed of two-run strings and an algorithm that computes the edit distance between a string of length $k$ and any single-run or two-run string in $O(k)$ operations.
\end{abstract}

\begin{keywords}
edit distance, graph embedding, Hamming graph, Levenshtein graph, multilateration, node2vec, resolving set
\end{keywords}

\begin{AMS}
05C12, 05C85, 68R10, 68W32
\end{AMS}

\section{Introduction}

For a general unweighted graph $G = (V,E)$, a set $R\subset V$ is called resolving when for all $u,v\in V$, if $d(u,r)=d(v,r)$ for each $r\in R$ then $u=v$. Here and in what follows, $d(\cdot,\cdot)$ denotes the geodesic distance between pairs of vertices in the corresponding graph. $\beta(G)$, the metric dimension of $G$, is defined as the size of a smallest possible resolving set of $G$~\cite{Sla75,HarMel76}. The problem of finding the metric dimension of an arbitrary graph is NP-Complete~\cite{Coo71, GarJoh79, KhuRagRos96}. Nevertheless, when the distance matrix of a graph can be computed explicitly, resolving sets of size $\big(1+(1+o(1))\ln|V|\big)\cdot\beta(G)$ may be found using the so-called Information Content Heuristic (ICH)~\cite{HauSchVie12}. For a concise exposition of metric dimension see~\cite{TilFroLla19}, and for a detailed exposition see~\cite{TilFroLla21}.

An appealing aspect of resolving sets is their utility to represent nodes in graphs as Euclidean vectors---offering an alternative to other graph embedding techniques such as node2vec~\cite{GroLes16}. Indeed, if $R=\{r_1,\ldots,r_n\}$ of cardinality $n$ resolves $G$, then the transformation $d(v|R):=(d(v,r_1),\ldots,d(v,r_n))$, from $V$ into $\RR^n$, represents nodes in $G$ as $n$-dimensional vectors in a one-to-one manner. Further, $d(\cdot|R)$ maps nearby nodes in $G$ into tuples with similar coordinates in $\RR^n$. In particular, if the geodesic distance is of relevance for a node classification problem, resolving set based embeddings induce natural numerical features for the nodes in a graph~\cite{TilLla19}. Of course, the smaller the cardinality of a resolving set, the smaller the dimension of the associated Euclidean space, which motivates the study of metric dimension, and of algorithms capable of efficiently finding small resolving sets.

The Hamming distance between two strings $u$ and $v$ of the same length, denoted as $h(u,v)$, is the total number of mismatches between $u$ and $v$. (The length of a string $w$ is denoted $|w|$.) Up to a graph isomorphism, the Hamming graph $\HH_{k,a}$, with $k,a\ge1$ integers, has as vertices all strings of length $k$ formed using the characters in $\{0,\ldots,a-1\}$, and two vertices $u$ and $v$ are neighbors if and only if $h(u,v) = 1$. As a result, the geodesic distance between nodes in $\HH_{k,a}$ is precisely their Hamming distance; in particular, Hamming graphs are connected. We call $k$ the dimension and $a$ the alphabet size of $\HH_{k,a}$, respectively. 

Much is known already about Hamming graphs, including their automorphism group~\cite{ChaBer06} and their asymptotic metric dimension. Indeed~\cite{JiaPol19}:
\[\beta(\HH_{k,a})\sim\frac{2k}{\log_a(k)},\hbox{ as }k\to\infty,\]
and because the proof of this result is constructive, a resolving set of $\HH_{k,a}$ of approximate relative size $2k/\log_a(k)$ may be found for $k$ large enough. Otherwise, starting from a resolving set of $H_{k-r,a}$ of some size $s$ (e.g., obtained using the ICH), a resolving set for $\HH_{k,a}$ of size $s+r\lfloor a/2\rfloor$ may be found recursively in $O(ar^2)$ time~\cite{TilLla19}. Recent work has shown how to identify unnecessary nodes in a resolving set~\cite{Laietal19}; which may provide better non-asymptotic estimates for $\beta(\HH_{k,a})$.

As mentioned earlier, resolving sets of graphs are useful to represent their nodes as Euclidean vectors. In particular, resolving sets in Hamming graphs may be used to represent symbolic sequences (e.g., words and genomic sequences) numerically. Unfortunately, this capability is limited to sequences of the same length, and a chief motivation of this paper is to overcome this equal length limitation.

The Levenshtein distance~\cite{Lev66} (aka edit distance) between two strings $u$ and $v$ of possibly different lengths is defined as the minimal number of character substitutions, deletions, or insertions required to transform one string into the other. We denote this quantity as $\ell(u,v)$. Since the Hamming distance can be thought of as the minimal number of substitutions to transform one string into the other, if $|u|=|v|$ then $\ell(u,v)\le h(u,v)$. 

The Levenshtein distance can also be described as the least possible score (i.e. total number of mismatches, insertions or deletions) of an alignment between strings~\cite{DurEddKroMit98}. Traditionally, insertions and deletions are called ``indels,'' and denoted with the symbol $\mathtt{-}$. To fix ideas, equations (\ref{ex:align1})-(\ref{ex:align3}) display three alignments between the strings $001$ and $01$. The score of the alignment $A$ in (\ref{ex:align1}) is two because the second 0 in the first row is mismatched with the character 1 in the second row, and the 1 in the first row is aligned against an indel. Similarly, the scores of alignments $B$ and $C$ are one. Since the score of any alignment between different strings must be one or larger, $B$ and $C$ are optimal alignments and $\ell(001,01)=1$.
\begin{eqnarray}
A &:=& \begin{array}{cccc}
\mathtt{0} & \mathtt{0} & \mathtt{1}\\
\mathtt{0} & \mathtt{1} & \mathtt{-} 
\end{array}
\label{ex:align1}\\
B &:=& \begin{array}{cccc}
\mathtt{0} & \mathtt{0} & \mathtt{1}\\
\mathtt{0} & \mathtt{-} & \mathtt{1} 
\end{array}
\label{ex:align2}\\
C &:=& \begin{array}{cccc}
\mathtt{0} & \mathtt{0} & \mathtt{1}\\
\mathtt{-} & \mathtt{0} & \mathtt{1} 
\end{array}
\label{ex:align3}
\end{eqnarray}

Optimal alignments can be determined and scored through a well-known dynamic programming approach, which has been invented many times in different contexts~\cite{Lev66,NeeWun70,WagFis74}. For strings $u=u_1\ldots u_m$ and $v=v_1\ldots v_n$ of lengths $m$ and $n$, respectively, where $u_i$ and $v_j$ denote alphabet characters, this algorithm computes the columns (or rows) of the $m\times n$ matrix with entries $d_{i,j} := \ell (u_1\ldots u_i,v_1\ldots v_j)$ via the recursion:
\begin{equation}
d_{i,j} = \min\Big\{d_{i-1,j-1} + \indicator[u_i \neq v_j],d_{i-1,j} + 1,d_{i,j-1} + 1\Big\}.
\label{eq:dyn_prog}
\end{equation}
Here $\indicator[\cdot]$ is the indicator function of the proposition within. The time complexity of this algorithm is $O(mn)$, which is expensive for long pairs of strings; however, by focusing on the diagonals of the matrix $(d_{i,j})$, as oppose to its columns or rows, it is possible to speed up the calculations to an $O\big({\ell(u,v)}\cdot\min\{m,n\}\big)$ complexity~\cite{Ukk85}.

\subsection{Preliminaries and related work}

To overcome the length limitation of Hamming graphs, we adopt the following definition.

\begin{definition} 
For integers $0\le k_1\le k_2$ and $a\ge2$, the Levenshtein graph $\Lev_{k_1,k_2;a}$ has as vertices all strings of a length between $k_1$ and $k_2$ (inclusive) formed using the characters in $\{0,\ldots,a-1\}$, and two nodes $u$ and $v$ are connected by an edge iff $\ell(u,v)=1$. We denote the vertex and edge set of this graph as $V_{k_1,k_2;a}$ and $E_{k_1,k_2;a}$, respectively. (See Figure~\ref{fig:matrix_labels}.)
\label{def:levGr}
\end{definition}

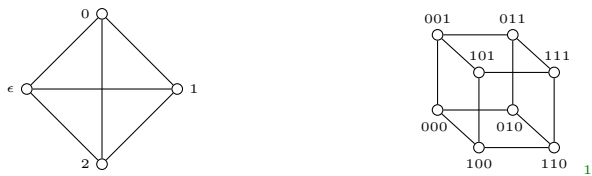
\begin{figure}[h]
    \centering
    \begin{tiny}
    \hfill
    \begin{subfigure}[b]{0.4\textwidth}
        \centering
        \begin{tikzpicture}[scale=1.0,
        node/.style={circle, draw, fill=blue!0,
        inner sep=0pt, minimum size = 4pt}]
            \node [node] (eps) at (0,1) [label=left:$\epsilon$] {};
            \node [node] (0) at (1,2) [label=left:$0$] {};
            \node [node] (1) at (2,1) [label=right:$1$] {};
            \node [node] (2) at (1,0) [label=left:$2$] {};
            \foreach \from/\to in {eps/0,eps/1,eps/2,0/1,0/2,1/2}
                { \draw (\from) -- (\to);} 
        \end{tikzpicture}
        \label{fig:Complete}
    \end{subfigure}
    \begin{subfigure}[b]{0.4\textwidth}
        \centering
        \begin{tikzpicture}[scale = 1.0,
        style/.style={circle, draw, fill=blue!0,
        inner sep=0pt, minimum size = 4pt}]
            \node [style] (000) at (0,1) [label=below:$000\,\,$] {};
            \node [style] (001) at (0,2) [label=above:$001$] {};
            \node [style] (010) at (1,1) [label=below:$010\,\,$] {};
            \node [style] (011) at (1,2) [label=above:$011$] {};
            \node [style] (100) at (0.55,0.5) [label=below:$100$] {};
            \node [style] (101) at (0.55,1.5) [label=above:$\,\,101$] {};
            \node [style] (110) at (1.55,0.5) [label=below:$110$] {};
            \node [style] (111) at (1.55,1.5) [label=above:$\,\,111$] {};
            \foreach \from/\to in {000/001,000/010,000/100,001/011,001/101,010/011,010/110,100/101,100/110,011/111,101/111,110/111}
                { \draw (\from) -- (\to);} 
        \end{tikzpicture}
      \ref{fig:matrix_labels}
    \end{subfigure}
    \hfill
    \end{tiny}
    \caption{Visual representation of $\mathbb{L}_{0,1;3}$ (left), and  $\mathbb{L}_{3,3;2}$ (right).}
    \label{fig:matrix_labels}
\end{figure}

Observe that, for $k_1\le k\le k_2$, the subgraph of nodes in $\Lev_{k_1,k_2;a}$ of length $k$ is precisely $\HH_{k,a}$. Further, only nodes of equal or consecutive length can be neighbors in $\Lev_{k_1,k_2;a}$ (see Figure~\ref{fig:Lev_Graph_Ex}).

Ahead we write $\mathbb{L}_{k;a}$ as shorthand for $\mathbb{L}_{0,k;a}$. Accordingly, we denote the vertex and edge set of $\mathbb{L}_{k;a}$ as $V_{k;a}$ and $E_{k;a}$, respectively. The empty string, denoted as $\epsilon$, is the only vertex of length zero in this graph. Besides, we define $\mathbb{L}_a$ as the graph with vertex set $\cup_{k\ge1}V_{k;a}$ where two nodes $u$ and $v$ of arbitrary length are neighbors if and only if $\ell(u,v)=1$. All nodes in $\mathbb{L}_a$ have finite length.

Various other notions of Levenshtein graphs have been considered in the literature, usually motivated by specific applications. One common definition is that two nodes are neighbors when their Levenshtein distance is underneath some threshold. For instance, Pisanti, Et, and Diderot~\cite{PisEtDid98} define Levenshtein graphs over a vertex set of arbitrary genes, and two genes $u$ and $v$ are joined by an edge when $\ell(u,v)\le t$; which they use to test random graphs as viable models for genomic data. Instead, Sala et al.~\cite{Saletal15} define the vertex set of Levenshtein graphs as $\{0,\ldots,a-1\}^k$, and $u$ and $v$ are neighbors only when $\ell(u,v)\le 2t$; they use this to help expand on information about the number of common subsequences and supersequences a pair of strings have. Zhong, Heinicke, and Rayner~\cite{ZhoHeiRay19} define the vertex set of the Levenshtein graph to have nodes corresponding to microRNAs in mice and people, and $u$ and $v$ are connected by an edge only when $\ell(u,v)\leq 3$. Finally, Stahlberg~\cite{Sta11} defines the vertex set of Levenshtein graphs from all strings of a given set $M$ as well as all strings that lie on a shortest path between two strings in $M$, and nodes $u$ and $v$ are then joined by an edge if and only if $\ell(u,v)=1$. 

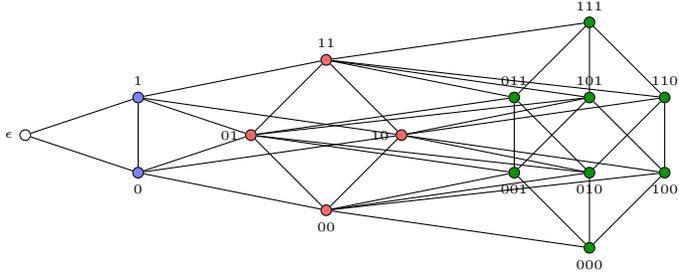
\begin{figure}
\centering
    \begin{tiny}
    \begin{tikzpicture}[scale=1.0,
    node_0/.style={circle, draw, fill=blue!0,
        inner sep=0pt, minimum size = 4pt},
    node_1/.style={circle, draw, fill=blue!50,
        inner sep=0pt, minimum size = 4pt},
    node_2/.style={circle, draw, fill=red!60,
        inner sep=0pt, minimum size = 4pt},
    node_3/.style={circle, draw, fill=black!40!green,
        inner sep=0pt, minimum size = 4pt}]
        
        \node [node_0] (e) at (0,1) [label=left:$\epsilon$] {};
        
        \node [node_1] (0) at (1.5,.5) [label=below:$0$] {};
        \node [node_1] (1) at (1.5,1.5) [label=above:$1$] {};
        
        \node [node_2] (00) at (4,0) [label=below:$00$] {};
        \node [node_2] (11) at (4,2) [label=above:$11$] {};
        \node [node_2] (01) at (3,1) [label=left:$01$] {};
        \node [node_2] (10) at (5,1) [label=left:$10$] {};
        
        \node [node_3] (000) at (7.5,-0.5) [label=below:$000$] {};
        \node [node_3] (001) at (6.5,.5)   [label=below:$001$] {};
        \node [node_3] (010) at (7.5,.5)   [label=below:$010$] {};
        \node [node_3] (100) at (8.5,.5)   [label=below:$100$] {};
        \node [node_3] (011) at (6.5,1.5)  [label=above:$011$] {};
        \node [node_3] (101) at (7.5,1.5)  [label=above:$101$] {};
        \node [node_3] (110) at (8.5,1.5)  [label=above:$110$] {};
        \node [node_3] (111) at (7.5,2.5)  [label=above:$111$] {};
            
        \foreach [count=\i] \x/\y in {e/0, e/1, 0/1,
            0/00, 0/01, 0/10, 1/01, 1/10, 1/11,
            00/01, 00/10, 01/11, 10/11,
            00/000, 00/001, 00/010, 00/100,
            01/001, 01/010, 01/011, 01/101,
            10/010, 10/100, 10/101, 10/110,
            11/011, 11/101, 11/110, 11/111,
            000/001, 000/010, 000/100,
            001/011, 001/101, 010/011, 010/110, 100/101, 100/110,
            011/111, 101/111, 110/111}
                \draw (\x) -- (\y);
    \end{tikzpicture}
    \end{tiny}
    \caption{Visual representation of $\mathbb{L}_{3;2}$. The sub-graphs of all strings of fixed length are Hamming graphs: the white, blue, red, and green nodes form $\HH_{0,2}$, $\HH_{1,2}$, $\HH_{2,2}$, and $\HH_{3,2}$, respectively.}
    \label{fig:Lev_Graph_Ex}
\end{figure}

Since $\Lev_{k,k;a}$ is isomorphic to $\HH_{k,a}$; Levenshtein graphs include Hamming graphs as special cases. Nevertheless, as pointed out in~\cite{VarKusGoy16}, which implicitly uses a notion similar to ours, Levenshtein graphs cannot be represented as Cartesian products when $k_1<k_2$. This makes their study particularly challenging.

In this manuscript we initiate a study of Levenshestein graphs---as given in Definition~\ref{def:levGr}. The manuscript is based on the recent Honors Thesis by the first author~\cite{Rut21}. 

\subsection{Paper organization}

In Section \ref{sec:LevGr}, we show that Levenshtein graphs are always connected, and provide a necessary and sufficient condition for the geodesic distance to coincide with the edit distance between pairs of nodes. Unlike Hamming graphs, the edit and geodesic distance between all pairs of nodes in a Levenshtein graph is not necessarily the same. For instance, in $\Lev_{3,3;2}$, $\ell(010,101)=2$ but $d(010,101)=3$ (see Figure~\ref{fig:matrix_labels}). Nevertheless, in $\Lev_{0,3;2}$, $d(010,101)=2$ (see Figure~\ref{fig:Lev_Graph_Ex}).

In Section \ref{sec:runRes}, we show a formula to describe the edit distance of an arbitrary string to a string with at most two runs (a run is a maximal substring of a single repeated character in a string). This formula leads to an algorithm to compute the distance from any string $u$ to any string with at most two runs in $O(|u|)$ time, which is faster than many common methods of computing the edit distance.  The results in sections~\ref{sec:metDim}-\ref{sec:auto} rely heavily on Section \ref{sec:runRes}. In Section~\ref{sec:metDim}, we construct a resolving of $\Lev_{k_1,k_2;a}$ of size $O\left(a k_2(k_2-k_1+1)\right)$ explicitly. Since nodes on this set have at most two runs, we may utilize the algorithm from Section \ref{sec:runRes} to multilaterate efficiently any string of length between $k_1$ and $k_2$.

In Section \ref{sec:auto}, we characterize the automorphism group of Levenshtein graphs, which has fixed size $2a!$ when $k_1<k_2$ and $k_2\ge2$. Finally, in Section \ref{sec:detSet}, we address the determining number of Levenshtein graphs. This notion is useful for describing graph automorphisms. For a given graph $G=(V,E)$, a set $S\subset V$ is called determining if whenever $f$ and $g$ are automorphisms of $G$ such that $f(s)=g(s)$, for all $s\in S$, then $f=g$. The determining number of a graph is the size of its smallest determining set. For $k_1<k_2$ with $k_2\ge2$ and $(k_2,a)\ne(2,2)$, we show that the determining number of $\Lev_{k_1,k_2;a}$ is $\lceil a/k_2\rceil$.

\section{Geodesic versus Edit Distance, and Connectivity}
\label{sec:LevGr}

The geodesic distance between pairs of nodes in a Hamming graph is equal to their Hamming distance; however, as already pointed out in the Introduction, this is not necessarily the case for Levenshtein graphs. The main result in this section is the following one.

\begin{theorem}
Levenshtein graphs are connected, and the geodesic distance between every pair of nodes on $\Lev_{k_1,k_2;a}$ is equal to their Levenshtein distance if and only if $k_1<k_2$ or $k_1 = k_2\leq 2$. If $k>2$ then the geodesic distance in $\Lev_{k,k;a}$ is the Hamming distance.
\label{thm:geo_dist}
\end{theorem}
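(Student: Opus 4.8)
The plan is to treat the three assertions separately, with essentially all of the work concentrated in proving that $d(u,v)=\ell(u,v)$ on $\Lev_{k_1,k_2;a}$ when $k_1<k_2$.

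\emph{Connectivity and the trivial inequality.} First I would note that any vertex $w$ with $|w|>k_1$ has a neighbour of length $|w|-1$ (delete one character; the result still lies in $V_{k_1,k_2;a}$), so by induction every vertex is connected to some string of length $k_1$; since the subgraph induced on the strings of length $k_1$ is a copy of the connected graph $\HH_{k_1,a}$, the whole graph $\Lev_{k_1,k_2;a}$ is connected. Moreover, because every edge joins strings at edit distance $1$, the triangle inequality for the metric $\ell$ applied along a geodesic gives $\ell(u,v)\le d(u,v)$ for all $u,v$; so the only question is when the reverse inequality holds.

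\emph{The case $k_1<k_2$.} Here I would prove $d(u,v)\le\ell(u,v)$ by exhibiting a walk of length $\ell(u,v)$ inside the graph. Starting from an optimal alignment of $u$ and $v$, with $s$ mismatch columns, $p$ insertion columns and $q$ deletion columns (so $s+p+q=\ell(u,v)$ and $|v|=|u|+p-q$), I would first perform the $s$ substitutions fixing the mismatches; these keep the length equal to $|u|\in[k_1,k_2]$ and produce a string $u'$ that agrees with $v$ on every aligned position, so that $v$ is obtained from $u'$ by $q$ deletions and $p$ insertions. The crux is a reordering lemma: if $x$ has $|x|\in[k_1,k_2]$ and $y$, also of length in $[k_1,k_2]$, is obtained from $x$ by $p$ insertions and $q$ deletions, then these $p+q$ single edits can be ordered so that every intermediate string has length in $[k_1,k_2]$. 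I would prove this greedily---perform an insertion whenever the current length is $<k_2$ and insertions remain, otherwise a deletion if the current length is $>k_1$ and deletions remain---and argue the process cannot stall before it is finished: a stalled state would need the current length to equal $k_2$ with no deletions left, or equal $k_1$ with no insertions left, or lie strictly between with neither kind left, and in each case the hypothesis $|y|\in[k_1,k_2]$ forces that in fact no edits remain, a contradiction. Applying the lemma to $x=u'$, $y=v$ completes the walk and gives $d(u,v)\le s+p+q=\ell(u,v)$. This lemma, and the verification that the greedy rule never gets stuck, is the one genuinely non-routine step; everything else is bookkeeping or follows from standard metric facts.

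\emph{The case $k_1=k_2=k$.} Here $\Lev_{k,k;a}$ is isomorphic to $\HH_{k,a}$, whose geodesic distance is the Hamming distance $h$; this already proves the final sentence of the theorem and reduces the distance question to deciding when $h(u,v)=\ell(u,v)$ for all length-$k$ strings. For $k\le 2$ this holds by a one-line case analysis (a substitution changes the Hamming distance to the current string by exactly $1$ while an indel changes the length, so $\ell(u,v)=1$ already forces $h(u,v)=1$, and for $k=2$ nothing further can occur). For $k\ge 3$ it fails: the strings $u=010\,0^{k-3}$ and $v=101\,0^{k-3}$ satisfy $d(u,v)=h(u,v)=3$ yet $\ell(u,v)=2$ (delete the leading character of $u$, then insert a $1$ so that it becomes the third character). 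Combining this with the two previous paragraphs yields both the claimed equivalence and the last sentence of the statement.
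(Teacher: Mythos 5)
Your proposal is correct and follows essentially the same route as the paper: the lower bound $\ell(u,v)\le d(u,v)$ via the triangle inequality along a geodesic, the upper bound for $k_1<k_2$ by converting an optimal alignment into a walk inside the graph, and the case $k_1=k_2$ by identifying $\Lev_{k,k;a}$ with $\HH_{k,a}$ and exhibiting $010$ versus $101$ (padded) as a counterexample for $k>2$. The only substantive difference is how the indels are scheduled: the paper first deletes the length surplus and then alternates insert/delete in pairs so the length oscillates between $|v|$ and $|v|\pm1$, whereas you use a greedy rule with a stall analysis that merely keeps all intermediate lengths in $[k_1,k_2]$ --- both are valid, and your stall argument is a clean alternative way to see that the walk never leaves the vertex set.
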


This theorem is a direct consequence of the following three lemmas. 

Ahead, the length of a path is understood as the number edges that compose it. In addition, $w_{(n)}$ and $w^{(n)}$ denote the prefix and suffix of length $n$ of a word $w$, respectively.

\begin{lemma}
Let $k_1 < k_2$. For all nodes $u$ and $v$ in $\Lev_{k_1,k_2;a}$, there is a path of length $\ell(u,v)$ that connects $u$ with $v$. In particular, $\Lev_{k_1,k_2;a}$ is connected, and for all $u,v\in V_{k_1,k_2;a}$, $d(u,v)\le \ell(u,v)$.
\label{lem:La}
\end{lemma}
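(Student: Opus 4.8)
The plan is to prove the statement by induction on $\ell(u,v)$, building the desired path one edge at a time. The base case $\ell(u,v)=0$ is trivial ($u=v$, path of length $0$), so assume $\ell(u,v)=\delta\ge 1$ and that the claim holds for all pairs at edit distance $<\delta$. Fix an optimal alignment between $u$ and $v$ realizing the score $\delta$; this alignment contains at least one non-trivial column (a substitution, an insertion, or a deletion). The key construction is to produce an intermediate node $w$ with $\ell(u,w)=1$ and $\ell(w,v)=\delta-1$ by ``applying one edit operation'' from that alignment to $u$, and then invoke the induction hypothesis on the pair $(w,v)$. The only subtlety is that we must keep $w$ inside $V_{k_1,k_2;a}$, i.e. $k_1\le|w|\le k_2$; this is exactly where the hypothesis $k_1<k_2$ is used, since applying an insertion or deletion changes the length by one.

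First I would handle the easy case: if the chosen optimal alignment has a substitution column, set $w$ to be $u$ with that single character replaced so as to agree with $v$ in that position. Then $|w|=|u|$, so $w\in V_{k_1,k_2;a}$ automatically, $\ell(u,w)=1$, and the alignment restricted to the remaining columns certifies $\ell(w,v)\le\delta-1$ (and it cannot be smaller, else $\ell(u,v)<\delta$); apply induction. If the optimal alignment has no substitution column, then it consists only of matches and indels, so $|u|\ne|v|$; without loss of generality $|u|<|v|$ (the argument is symmetric, or one may invoke $\ell(u,v)=\ell(v,u)$). Now pick an insertion column — a position where the alignment matches an indel in the $u$-row against a character $c$ in the $v$-row — and let $w$ be $u$ with $c$ inserted at the corresponding location, so $|w|=|u|+1$ and $\ell(u,w)=1$. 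The remaining alignment columns show $\ell(w,v)\le\delta-1$. The point to check is $|w|\le k_2$: since $|u|<|v|\le k_2$ we get $|w|=|u|+1\le k_2$, so $w\in V_{k_1,k_2;a}$. Applying the induction hypothesis to $(w,v)$ (which also lies in a Levenshtein graph with $k_1<k_2$) yields a path of length $\delta-1$ from $w$ to $v$; prepending the edge $\{u,w\}$ gives a path of length $\delta$ from $u$ to $v$.

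The conclusions then follow immediately: connectivity holds because every pair of nodes is joined by a path, and $d(u,v)\le\ell(u,v)$ because the path just constructed has length exactly $\ell(u,v)$. I expect the main obstacle to be purely bookkeeping rather than conceptual: one must argue carefully that removing one column from an optimal alignment yields a valid alignment of $w$ against $v$ whose score is $\delta-1$, and that no shorter alignment exists (otherwise $u$ and $v$ would be closer than $\delta$), so that the induction hypothesis applies to a pair at edit distance exactly $\delta-1$. A clean way to avoid column-surgery altogether is to use the standard fact that any edit sequence of length $\delta$ transforming $u$ into $v$ can be reordered, and that its first step produces an intermediate string $w$ with $\ell(u,w)=1$ and $\ell(w,v)=\delta-1$; the length constraint $k_1<k_2$ then guarantees $w$ never leaves $V_{k_1,k_2;a}$ along the way, since the running string length always stays between $\min(|u|,|v|)$ and $\max(|u|,|v|)$, which is contained in $[k_1,k_2]$.
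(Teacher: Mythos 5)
Your induction-on-$\ell(u,v)$ strategy is sound in outline, but the case analysis contains a genuine error. You claim that if an optimal alignment has no substitution column then ``it consists only of matches and indels, so $|u|\ne|v|$.'' This is false: an indel-only alignment can have equally many insertions and deletions, so $|u|=|v|$ is entirely possible, and for some pairs it is forced. Take $u=010$ and $v=101$ in $\Lev_{0,3;2}$: here $\ell(u,v)=2$, and \emph{every} optimal alignment consists of exactly one insertion and one deletion (a score-$2$ alignment of equal-length strings cannot use one indel by parity, and cannot use two substitutions since $h(u,v)=3$). Your recipe for this case --- ``pick an insertion column and insert'' --- then produces a string of length $k_2+1$, outside $V_{k_1,k_2;a}$. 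The closing remark that ``the running string length always stays between $\min(|u|,|v|)$ and $\max(|u|,|v|)$'' is false for the same reason: when $|u|=|v|$ and the optimal edit sequence contains indels, the very first indel leaves that interval. The gap is repairable: when $|u|=|v|$ and the alignment is indel-only, both an insertion and a deletion column are available, and since $k_1<k_2$ at least one of $|u|+1\le k_2$ or $|u|-1\ge k_1$ holds, so you can choose the edit that keeps $w$ in the vertex set. But as written the proof prescribes the wrong edit and fails on a concrete instance.

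For comparison, the paper's proof is not inductive: it fixes one alignment, reorders its columns into a mismatch block, a block of $\big||u|-|v|\big|$ unpaired deletions, and $k$ insertion--deletion pairs, and then walks the path explicitly --- substitutions first, then the unpaired deletions, then each insertion/deletion pair as a length-$2$ detour. The case split you missed appears there explicitly: each detour is insert-then-delete when $|v|<k_2$ (lengths stay $\le k_2$) and delete-then-insert when $|v|=k_2$ (lengths stay $\ge k_2-1\ge k_1$). That is precisely the choice your argument needs to make and does not.
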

\begin{proof}
We show something more general, namely, for any alignment between two nodes $\Lev_{k_1,k_2;a}$, there is a path of the same length as the alignment score that connects them, while visiting only nodes of a length between the shortest and longest of the two.

Consider a fixed alignment $A$ between two nodes $u$ and $v$. Define $\delta:=|u|-|v|$. Since  alignment scores are invariant under permutations of their rows, as well as their columns, we may assume without any loss of generality that $|u|\ge|v|$, and that $A$ is of the form:
\[A=\left.\begin{array}{c} u_0\\ v_0\end{array}\right|
\begin{array}{c} u_1\\ \mathtt{-}^\delta\end{array}\left|
\begin{array}{c} u_2\\ \mathtt{-}^k\end{array}\right|
\begin{array}{c} \mathtt{-}^k\\
v_2\end{array};
\]
where the $u_i$'s and $v_i$'s are nodes in $\Lev_{k_1,k_2;a}$ such that $|u_0|=|v_0|\ge0$, $|u_1|=\delta$, $|u_2|=|v_2|=k$ for some $k\ge0$, and $\mathtt{-}^n$ denotes $n$ consecutive gaps.

Let $s_0$ denote the score of the alignment associated with $u_0$ and $v_0$ above. Clearly, we can construct a path of length $s_0$ from $u=u_0u_1u_2$ to $v_0u_1u_2$ substituting, one at a time, the mismatched characters in $u_0$ by the corresponding characters in $v_0$. Since substitutions do not alter the length of a node, all nodes in this path have length $|u|$.

Next, we can construct a path of length $\delta$ from $v_0u_1u_2$ to $v_0u_2$ deleting, one at a time, the characters in $u_1$. In particular, the nodes in this path have a (decreasing) length between $|v_0u_1u_2|=|u|$ and $|v_0u_2|=|v|$, inclusive.

We can now construct a path of length $2k$ from $v_0u_2$ to $v_0v_2=v$, stitching the following paths of length 2. When $|v|<k_2$, each of these paths is obtained by inserting a character from $v_2$, and subsequently deleting another in $u_2$. As a result, all nodes in these paths have a length between $|v|$ and $|v|+1\le k_2$, inclusive. The short paths are:
\begin{align*}
& v_0\,u_2^{(k)}\, {v_2}_{(0)},\,
v_0\,u_2^{(k-1)}\,{v_2}_{(0)},\,
v_0\,u_2^{(k-1)}\,{v_2}_{(1)}; \\
& v_0\,u_2^{(k-1)}\,{v_2}_{(1)},\,
v_0\,u_2^{(k-2)}\,{v_2}_{(1)},\,
v_0\,u_2^{(k-2)}\,{v_2}_{(2)};\\
& \vdots \\
& v_0\,u_2^{(1)}\,{v_2}_{(k-1)},\,
v_0\,u_2^{(0)}\,{v_2}_{(k-1)},\,
v_0\,u_2^{(0)}\,{v_2}_{(k)}.
\end{align*}
Similarly, when $|v| = k_2$, each of these paths is obtained by deleting a character in $v_2$, and subsequently inserting a character from $u_2$. All nodes in these paths have a length between $|v|$ and $|v|-1 \ge k_1$ inclusive.

Appending all the previous paths, we obtain a path from $u$ to $v$ of length $s_0+\delta+2k$, which is precisely the score of $A$. This shows the lemma because each node in this path is contained in $\Lev_{k_1,k_2;a}$.
\end{proof}

\begin{lemma}
Let $k_1<k_2$. For all nodes $u$ and $v$ in $\Lev_{k_1,k_2;a}$, $d(u,v) \geq\ell(u,v)$.
\label{lem:GeoGeqLev}
\end{lemma}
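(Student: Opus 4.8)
The plan is to exploit the triangle inequality for the Levenshtein distance together with the fact that adjacent nodes in the graph are at edit distance exactly one. First I would recall that $\ell(\cdot,\cdot)$ is genuinely a metric on the set of all finite strings: it is symmetric, vanishes only on equal strings, and satisfies the triangle inequality $\ell(x,z)\le\ell(x,y)+\ell(y,z)$ (this is standard, since concatenating an optimal edit sequence from $x$ to $y$ with one from $y$ to $z$ yields an edit sequence from $x$ to $z$). The restriction of this metric to $V_{k_1,k_2;a}$ is still a metric, and by Definition~\ref{def:levGr} two nodes are adjacent in $\Lev_{k_1,k_2;a}$ precisely when their edit distance is $1$.

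The key step: take any path $u = w_0, w_1, \ldots, w_m = v$ in $\Lev_{k_1,k_2;a}$, where $m = d(u,v)$ is the length of a shortest such path (this exists because $\Lev_{k_1,k_2;a}$ is connected by Lemma~\ref{lem:La}). Since consecutive nodes $w_{i-1}, w_i$ are adjacent, $\ell(w_{i-1},w_i) = 1$ for each $i$. Applying the triangle inequality repeatedly gives
\[
\ell(u,v) \;\le\; \sum_{i=1}^{m}\ell(w_{i-1},w_i) \;=\; \sum_{i=1}^{m} 1 \;=\; m \;=\; d(u,v).
\]
This is exactly the desired inequality, and it does not even use the hypothesis $k_1 < k_2$ — that hypothesis is only needed to guarantee connectivity, which is supplied by Lemma~\ref{lem:La} (and in fact holds whenever $k_1 \le k_2$, but the statement is phrased for the case of interest).

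I do not anticipate a genuine obstacle here; the only thing to be careful about is making sure the triangle inequality is invoked for the global edit metric on all strings and then noting that the path stays inside $V_{k_1,k_2;a}$, so every term $\ell(w_{i-1},w_i)$ is indeed $1$. One could alternatively phrase this as: "the edit distance is a graph metric lower bound," i.e. any metric $\rho$ on $V$ with $\rho(u,v) \le 1$ whenever $uv \in E$ automatically satisfies $\rho \le d$; applying this with $\rho = \ell$ finishes the proof. Combined with Lemma~\ref{lem:La}, this yields $d(u,v) = \ell(u,v)$ for all $u,v$ when $k_1 < k_2$.
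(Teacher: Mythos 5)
Your proposal is correct and follows essentially the same route as the paper: take a geodesic path $u=w_0,\ldots,w_m=v$, note that each edge contributes $\ell(w_{i-1},w_i)=1$, and apply the triangle inequality for the Levenshtein metric to conclude $\ell(u,v)\le m=d(u,v)$. The paper's proof is the same argument, also relying on Lemma~\ref{lem:La} only to guarantee that $d(u,v)$ is finite.
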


\begin{proof}
Clearly, $d(u,v)=0$ if and only if $\ell(u,v)=0$. Thus, without loss of generality, we may assume that $n:=d(u,v)\ge1$. Due to Lemma~\ref{lem:La}, $n$ is finite; in particular, there is in $\Lev_{k_1,k_2;a}$ a (simple) path $w_0=u,\ldots,w_n=v$ of length $n$ that connects $u$ and $v$. Since $d(w_i,w_{i+1})=\ell(w_i,w_{i+1})=1$, the triangular inequality implies that:
\begin{align*}
d(u,v)=\sum_{i=0}^{n-1}d(w_i,w_{i+1})=\sum_{i=0}^{n-1}\ell(w_i,w_{i+1})\ge\ell(u,v),
\end{align*}
which shows the lemma.
\end{proof}

\begin{lemma}
For all $k\ge0$, $\Lev_{k,k;a}=\mathbb{H}_{k;a}$; in particular, $\Lev_{k,k;a}$ is connected. Further, the geodesic distance between every pair of nodes on $\Lev_{k,k;a}$ is equal to their Levenshtein distance if and only if $k\leq 2$.
\label{lem:geokka}
\end{lemma}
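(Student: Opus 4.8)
The identity $\Lev_{k,k;a}=\HH_{k,a}$ is immediate from the definitions: both have vertex set $\{0,\ldots,a-1\}^k$, and since two strings of equal length $k$ are at Levenshtein distance $1$ exactly when they differ in a single substitution (an insertion or deletion would change the length), the edge sets coincide with the Hamming-distance-$1$ edges. Connectivity of $\HH_{k,a}$ is classical (or follows from Lemma \ref{lem:La} applied with any $k_1<k_2$ containing this layer, but it is cleaner to just note one can fix mismatches one character at a time). So the content of the lemma is the ``if and only if'' about when the geodesic distance on $\HH_{k,a}$ equals the Levenshtein distance.

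For the ``if'' direction I would handle $k\in\{0,1,2\}$ directly. For $k\le 1$ the graph is complete ($\HH_{0,a}$ is a single vertex; $\HH_{1,a}=K_a$), so every two distinct vertices are at geodesic distance $1$, and likewise their Levenshtein distance is $1$ (a single substitution), so the two distances agree. For $k=2$: the geodesic distance in $\HH_{2,a}$ equals the Hamming distance $h(u,v)\in\{0,1,2\}$, so I must check $\ell(u,v)=h(u,v)$ for all pairs of length-$2$ strings. The only way $\ell$ could be strictly smaller than $h$ is if $h(u,v)=2$ but $\ell(u,v)=1$; but $\ell(u,v)=1$ between equal-length strings forces a single substitution, hence $h(u,v)=1$, a contradiction. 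Alternatively, and perhaps more in the spirit of the paper, for $h(u,v)=2$ one checks $\ell(u,v)=2$: one cannot do better than two edit operations when the strings differ in both positions and have the same length. So equality holds throughout for $k\le 2$.

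For the ``only if'' direction (equivalently, the last sentence: if $k>2$ the geodesic distance in $\Lev_{k,k;a}$ is the Hamming distance, which then differs from $\ell$ on some pair), I would first re-establish that the geodesic distance in $\HH_{k,a}$ is $h(u,v)$: $\ge$ because each edge changes exactly one coordinate, so a path of length $n$ changes at most $n$ coordinates; $\le$ by the one-substitution-at-a-time path. Then I need a single witnessing pair for $k>2$ where $h$ and $\ell$ differ. The natural choice is the ``cyclic shift'' type example already flagged in the text: take $u=0\,1\,0^{k-2}$ and $v=1\,0^{k-1}$ wait—those have a length issue; better, mimic the $010$ vs.\ $101$ example, e.g.\ $u = 0\,1\,0\cdots$ and $v$ its shift, giving $h(u,v)$ large but $\ell(u,v)$ small (a deletion at one end plus an insertion at the other realizes a cyclic shift in $2$ edits). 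Concretely, with $a\ge 2$ and $k\ge 3$, let $u=010^{k-2}$ and $v=0^{k-2}10$; then $\ell(u,v)\le 2$ (delete the leading $0$ turning $u$ into $10^{k-2}$—hmm, need to recheck), so I will pick the cleanest shift pair and verify $h\ge 3 > 2\ge\ell$. That exhibits a pair on which geodesic distance ($=h$) exceeds $\ell$, completing the ``only if''.

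The main obstacle is purely bookkeeping: constructing an explicit length-$k$ pair, valid for \emph{every} $k>2$ and every $a\ge2$, with Hamming distance at least $3$ but Levenshtein distance at most $2$, and verifying both bounds cleanly; the structural facts ($\Lev_{k,k;a}=\HH_{k,a}$, geodesic $=$ Hamming, $\ell\le h$ for equal lengths) are all routine. I expect the write-up to lean on a shift-by-one example of the form $u=0\,1\,0^{k-2}$ versus a rotation, checking $\ell\le 2$ via one deletion and one insertion and $h\ge 3$ by direct inspection.
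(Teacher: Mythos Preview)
Your overall strategy matches the paper's proof almost exactly: verify $\Lev_{k,k;a}=\HH_{k,a}$ by noting that $\ell(u,v)=1$ forces a substitution when $|u|=|v|$, dispose of $k\le 2$ by the same ``$\ell=1$ implies $h=1$'' observation, and for $k>2$ exhibit a pair with $h\ge 3$ but $\ell\le 2$.

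The only real issue is your counterexample for $k>2$. Neither of your tentative candidates works: for $u=010^{k-2}$ versus $v=0^{k-2}10$ one has $u=v$ when $k=3$ and $h(u,v)=2$ for $k\ge 4$; for $u=010^{k-2}$ versus its cyclic rotation $10^{k-1}$ one again gets $h=2$. So the ``$h\ge 3$'' part fails in both cases. The paper takes instead the full alternating string $u=0101\cdots$ of length $k$ and its bit-flip $v=1010\cdots$; then $h(u,v)=k\ge 3$, while aligning $\mathtt{-}u$ against $v\mathtt{-}$ matches every interior column and gives $\ell(u,v)\le 2$. This is exactly the ``delete one end, insert the other'' mechanism you describe, but applied to a string whose shift-by-one differs from it in \emph{every} position rather than just two. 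Once you swap in that pair, your proof is complete and coincides with the paper's.
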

\begin{proof}
To show the first claim, it suffices to show that $\Lev_{k,k;a}$ and $\mathbb{H}_{k,a}$ have the same edges. Indeed, if $h(u,v)=1$ then $u$ and $v$ can be aligned perfectly except for one mismatch. In particular, $\ell(u,v)\le1$. But, since $u\ne v$, $\ell(u,v)>0$, hence $\ell(u,v)=1$. Conversely, if $\ell(u,v)=1$ then an optimal alignment between $u$ and $v$ consists of a single mismatch, or a single indel. Since the latter is not possible because $|u|=|v|$, $h(u,v)=1$, which shows the claim.

Due to the first claim, $d(u,v)=h(u,v)$ for all pair of nodes $u,v$ in $\Lev_{k,k;a}$. We use this to show the second claim, assuming, without loss of generality, that $u\ne v$.

The second claim is trivial when $k=0$. If $k=1$ then, as we argued before, $\ell(u,v)=1=h(u,v)=d(u,v)$. Instead, if $k=2$ and $h(u,v)=1$ then, as we just argued, $\ell(u,v)=1=h(u,v)=d(u,v)$. Otherwise, if $k=2$ but $h(u,v)=2$ then Lemma~\ref{lem:GeoGeqLev} implies that $0<\ell(u,v)\le2$; however, $\ell(u,v)=1$ is not possible because the optimal alignment between $u$ and $v$ would then have to use a single indel, which in turn is not possible because $u$ and $v$ are of the same length. Hence, $\ell(u,v)=2$ and again $\ell(u,v)=h(u,v)=d(u,v)$. 

Finally, if $k>2$, and since $a\ge2$, there is in $\Lev_{k,k;a}$ a node $u$ of length $k$ formed by alternating $0$'s and $1$'s. Let $v$ be the flip of $u$. Then $h(u,v)=k$ but $\ell(u,v)\le2$ because the strings $\mathtt{-}u$ and $v\mathtt{-}$ align perfectly except for their ends; in particular, $h(u,v)>\ell(u,v)$ i.e. $d(u,v)>\ell(u,v)$.    
\end{proof}

\section{Levenshtein distance to a string with at most two runs}
\label{sec:runRes}

In this section, we obtain rather explicit formulas for the edit distance between an arbitrary string and another one with at most two runs. These will prove useful for studying the resolvability of Levenshtein graphs and their automorphism group.

In what follows the total number of occurrences of an alphabet character $\alpha$ in a string $w$ is denoted $N_\alpha(w)$, whereas the number of runs in $w$ is denoted $r(w)$. For example, $N_0(01121)=1$, $N_1(01121)=3$, $N_2(01121)=1$, and $r(01121)=4$.

The main result in this section is the following.

\begin{theorem}
Let $l,r\ge0$ be integers and $\alpha,\beta$ different alphabet characters. Then, for all string $w$:
\begin{align}
\label{ide:L21run} \ell(w,\alpha^l) &= \max\{\abs w,l\} - \min\{N_\alpha(w),l\};\\
\label{ide:L22run} \ell(w,\alpha^l\beta^r) &= \min_{i_0\le i \le i_1} \ell\big(w_{(i)},\alpha^l\big) + \ell\big(w^{(|w|-i)},\beta^r\big);
\end{align}
where $i_0 := \max\{0,\min\{l,|w|-r\}\}$ and $i_1 := \min\{|w|,\max\{l,|w|-r\}\}$.
\label{thm:masterruns}
\end{theorem}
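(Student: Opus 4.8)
The plan is to prove the two identities separately, starting with the single-run formula \eqref{ide:L21run}, and then leveraging it to establish the two-run formula \eqref{ide:L22run} by a careful analysis of optimal alignments.

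For \eqref{ide:L21run}, I would reason directly about alignments between $w$ and $\alpha^l$. In any alignment, each of the $l$ copies of $\alpha$ in the bottom row is either matched against an $\alpha$ in $w$ (contributing $0$ to the score), matched against a non-$\alpha$ character of $w$ (contributing $1$), or aligned against an indel (contributing $1$); similarly each character of $w$ not aligned to one of those $l$ symbols sits against an indel. To get a lower bound: if the alignment matches exactly $m$ of the $\alpha$'s in $\alpha^l$ against $\alpha$'s in $w$, then $m \le \min\{N_\alpha(w), l\}$, and the score is at least $(|w| - m) + (l - m) - \text{(matched pairs)}$ — more cleanly, one argues the score is at least $\max\{|w|, l\} - m \ge \max\{|w|,l\} - \min\{N_\alpha(w),l\}$ by counting that at least $\max\{|w|,l\} - m$ columns are ``bad.'' For the matching upper bound, I would exhibit an explicit alignment: align $\min\{N_\alpha(w), l\}$ of the $\alpha$'s in $w$ against $\alpha$'s in $\alpha^l$ (possible since $w$ contains $N_\alpha(w)$ copies of $\alpha$), and handle everything else with substitutions and indels, checking the score equals $\max\{|w|,l\} - \min\{N_\alpha(w),l\}$ by a short case split on whether $|w| \ge l$ and whether $N_\alpha(w) \ge l$. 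This part is routine once the bookkeeping is set up.

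For \eqref{ide:L22run}, the key structural observation is that an optimal alignment between $w$ and $\alpha^l\beta^r$ can be taken to respect the split point: there is an index $i$ such that the columns covering the $\alpha^l$ block involve only the prefix $w_{(i)}$ and the columns covering the $\beta^r$ block involve only the suffix $w^{(|w|-i)}$. Given such an alignment, its score decomposes as $\ell$-optimal-ish contributions on the two pieces, and since the two halves are themselves optimizing alignments of $w_{(i)}$ against $\alpha^l$ and of $w^{(|w|-i)}$ against $\beta^r$, the score is at least $\ell(w_{(i)},\alpha^l) + \ell(w^{(|w|-i)},\beta^r)$; conversely concatenating optimal alignments of any prefix/suffix split gives an alignment of $w$ against $\alpha^l\beta^r$, so $\ell(w,\alpha^l\beta^r) = \min_{0\le i\le |w|} \ell(w_{(i)},\alpha^l) + \ell(w^{(|w|-i)},\beta^r)$. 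The remaining work is to show the minimum is attained in the restricted range $i_0 \le i \le i_1$: intuitively, moving $i$ toward $l$ (the natural split if lengths matched) or toward $|w|-r$ never hurts, because outside $[i_0,i_1]$ one of the two terms is increasing ``too fast'' in a way that dominates any decrease in the other. Concretely, using \eqref{ide:L21run} one can write both $\ell(w_{(i)},\alpha^l)$ and $\ell(w^{(|w|-i)},\beta^r)$ as piecewise-linear functions of $i$ with slopes in $\{-1,0,+1\}$, and then argue that for $i < i_0$ increasing $i$ by one decreases the total (or keeps it equal), and for $i > i_1$ decreasing $i$ by one does the same; this is a finite monotonicity argument on the sum of two explicit max/min expressions.

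The main obstacle I anticipate is the reduction to ``split-respecting'' alignments in the proof of \eqref{ide:L22run}: one must argue that an optimal alignment of $w$ against $\alpha^l\beta^r$ can be rearranged — without increasing its score — so that no column straddles the boundary between the $\alpha$-block and the $\beta$-block in a way that entangles the prefix and suffix of $w$. This requires an exchange/surgery argument on alignments (e.g., tracking the monotone lattice path in the dynamic-programming grid \eqref{eq:dyn_prog} and noting that it crosses the horizontal line at height $l$ at some column $i$, which defines the split), and care is needed because characters of $w$ equal to $\alpha$ could in principle be matched against the $\beta$-block and vice versa. Once that structural lemma is in hand, the rest — the decomposition of the score and the truncation of the index range to $[i_0,i_1]$ — is elementary algebra with the closed form from \eqref{ide:L21run}.
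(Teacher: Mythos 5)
Your proposal is correct and follows essentially the same route as the paper: the one-run formula via the ``alignment length minus number of matches'' lower bound together with an explicit optimal alignment, and the two-run formula by splitting any alignment at the boundary between the $\alpha$-block and the $\beta$-block to get the unrestricted minimum over $0\le i\le|w|$, then shrinking the index range to $[i_0,i_1]$ by a monotonicity argument on piecewise-linear functions whose unit increments are bounded by $1$. The ``main obstacle'' you anticipate is in fact handled trivially in the paper: since alignment columns are ordered, every column lies either over the $\alpha^l$ part or the $\beta^r$ part of the bottom row, so the segmentation into a prefix-versus-$\alpha^l$ alignment and a suffix-versus-$\beta^r$ alignment is automatic and no exchange argument is needed.
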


A noteworthy consequence of this theorem is the following.

\begin{corollary}
If $u$ and $v$ are strings such that $|u|=|v|$, and $u$ or $v$ have at most two runs, then $\ell(u,v)=h(u,v)$.
\label{cor:Lev-Ham}
\end{corollary}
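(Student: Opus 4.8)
The plan is to deduce Corollary~\ref{cor:Lev-Ham} from Theorem~\ref{thm:masterruns} together with the elementary bound $\ell(u,v)\le h(u,v)$ noted in the Introduction (valid whenever $|u|=|v|$). By symmetry of both $\ell$ and $h$, I may assume it is $v$ that has at most two runs; write $v=\alpha^l\beta^r$ with $\alpha\neq\beta$ (the one-run case being $r=0$), and set $w:=u$, $k:=|u|=|v|=l+r$. Since we already know $\ell(u,v)\le h(u,v)$, it suffices to prove the reverse inequality $\ell(u,v)\ge h(u,v)$.

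First I would dispose of the one-run case $v=\alpha^l$ with $l=k=|w|$. Here identity~\eqref{ide:L21run} gives $\ell(w,\alpha^l)=\max\{k,l\}-\min\{N_\alpha(w),l\}=k-N_\alpha(w)$, because $l=k\ge N_\alpha(w)$. On the other hand, aligning $w$ against $\alpha^k$ position-by-position, the number of mismatches is exactly the number of positions of $w$ not equal to $\alpha$, i.e.\ $h(w,\alpha^k)=k-N_\alpha(w)$. Hence $\ell=h$ in this case.

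For the two-run case $v=\alpha^l\beta^r$ with $l,r\ge1$ and $l+r=k$, I would use identity~\eqref{ide:L22run}. The key observation is that when $|w|=l+r$, the split point $i$ ranges over an interval forcing $i_0=i_1=l$: indeed $i_0=\max\{0,\min\{l,|w|-r\}\}=\max\{0,\min\{l,l\}\}=l$ and $i_1=\min\{|w|,\max\{l,|w|-r\}\}=\min\{l+r,\max\{l,l\}\}=l$. So the minimum in~\eqref{ide:L22run} collapses to the single term
\[
\ell(w,\alpha^l\beta^r)=\ell\big(w_{(l)},\alpha^l\big)+\ell\big(w^{(r)},\beta^r\big).
\]
Now apply the one-run case (the previous paragraph) to each summand: $\ell(w_{(l)},\alpha^l)=h(w_{(l)},\alpha^l)$ counts the mismatches among the first $l$ positions of $w$ against $\alpha$, and $\ell(w^{(r)},\beta^r)=h(w^{(r)},\beta^r)$ counts the mismatches among the last $r$ positions of $w$ against $\beta$. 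Their sum is precisely the total number of positions where $w$ disagrees with $\alpha^l\beta^r$ in the position-by-position alignment, which is $h(w,\alpha^l\beta^r)$. This establishes $\ell(u,v)\ge h(u,v)$, and combined with $\ell(u,v)\le h(u,v)$ finishes the proof.

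I do not anticipate a serious obstacle here; the only thing to be careful about is the bookkeeping that makes the optimization interval in~\eqref{ide:L22run} degenerate when $|w|=l+r$ (the case $r=0$ or $l=0$ should also be checked to be consistent, though it reduces to the one-run identity directly), and making sure the one-run formula is applied with the correct string lengths ($|w_{(l)}|=l$ and $|w^{(r)}|=r$) so that the $\max$ terms in~\eqref{ide:L21run} pick out the string length rather than the exponent. Everything else is a direct substitution.
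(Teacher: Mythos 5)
Your proposal is correct and follows essentially the same route as the paper: reduce by symmetry to $v$ having at most two runs, use identity~(\ref{ide:L21run}) to settle the one-run case, and observe that for $|w|=l+r$ the optimization interval in~(\ref{ide:L22run}) degenerates to the single split point $i=l$, so the two-run case reduces to two one-run cases whose Hamming counts add up to $h(u,v)$. The only difference is cosmetic: you make the collapse $i_0=i_1=l$ explicit (and invoke the superfluous bound $\ell\le h$), whereas the paper states the single-term decomposition directly.
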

\begin{proof}
Suppose that $|u|=|v|=k$, and write $u=u_1\cdots u_k$ with $u_1,\ldots,u_k$ alphabet characters. Without any loss of generality assume that $r(v)\le2$.

If $r(v)=0$ then $u=v$; in particular, $\ell(u,v)=0=h(u,v)$. Instead, if $r(v)=1$ then $v=\alpha^k$ for some alphabet character $\alpha$, and Equation~(\ref{ide:L21run}) implies that \[\ell(u,v)=k-N_\alpha(u)=\sum_{i=1}^k\indicator[u_i\ne\alpha]=h(u,v).\]
Finally, if $r(v)=2$ then $v=\alpha^l\beta^{k-l}$ for some integer $1\le l<k$ and alphabet characters $\alpha\ne\beta$. Hence, from Equation~(\ref{ide:L21run}), and the previous result for when $r(v)=1$, we find that
\begin{align*}
\ell(u,v) &=\ell(u_1\cdots u_l,\alpha^l)+\ell(u_{l+1}\cdots u_k,\beta^{k-l})\\
&=h(u_1\cdots u_l,\alpha^l)+h(u_{l+1}\cdots u_k,\beta^{k-l})=h(u,v),    
\end{align*}
as claimed.
\end{proof}

The proof of Theorem~\ref{thm:masterruns} follows from the next two results. Equation~(\ref{ide:L21run}) is a direct consequence of Lemma~\ref{lem:oneRun}, and equation~(\ref{ide:L22run}) follows from Lemma~\ref{lem:two_runs}.

\begin{lemma}
For all string $w$, if $l\ge 0$ and $\alpha$ is an alphabet character then: $\ell(w,\alpha^l) = \max\{\abs w,l\} - \min\{N_\alpha(w),l\}.$ 
\label{lem:oneRun}
\end{lemma}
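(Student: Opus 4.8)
The plan is to prove the identity $\ell(w,\alpha^l) = \max\{|w|,l\} - \min\{N_\alpha(w),l\}$ by establishing the two inequalities separately, using the alignment characterization of edit distance from the Introduction.

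\textbf{Upper bound.} First I would exhibit an explicit alignment between $w$ and $\alpha^l$ whose score equals $\max\{|w|,l\} - \min\{N_\alpha(w),l\}$. Set $m := \min\{N_\alpha(w),l\}$. The idea is to match $m$ of the $\alpha$-characters in $w$ against $m$ of the $\alpha$-characters in $\alpha^l$ (this contributes $0$ to the score), and then deal with the leftovers. After using $m$ columns for these matches, there remain $|w| - m$ unmatched characters on the $w$-side and $l - m$ unmatched $\alpha$'s on the $\alpha^l$-side. If $|w| \ge l$, then $m = \min\{N_\alpha(w), l\} \le l$, so there are $l - m$ leftover $\alpha$'s; pair each against a non-$\alpha$ character of $w$ (a mismatch, since there are $|w| - N_\alpha(w) \ge l - N_\alpha(w) \ge l - m$ of them when $N_\alpha(w)\le l$, and when $N_\alpha(w) > l$ we have $m = l$ so no leftover $\alpha$'s at all), then align the remaining $|w| - m - (l-m) = |w| - l$ characters of $w$ against indels. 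The total score is $(l - m) + (|w| - l) = |w| - m$. Symmetrically, if $l > |w|$, the leftover characters of $w$ (all non-$\alpha$, of which there are $|w| - m$) get mismatched against $\alpha$'s and the remaining $l - |w|$ $\alpha$'s go against indels, for a score of $(|w| - m) + (l - |w|) = l - m$. Either way the score is $\max\{|w|,l\} - m$, giving $\ell(w,\alpha^l) \le \max\{|w|,l\} - \min\{N_\alpha(w),l\}$. I would want to state this case analysis cleanly, perhaps unifying it by noting the number of matched positions is $m$, the number of indels is $\big||w|-l\big|$, and the number of mismatches is $\max\{|w|,l\} - m - \big||w|-l\big| = \min\{|w|,l\} - m \ge 0$.

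\textbf{Lower bound.} For the reverse inequality, consider any optimal alignment $A$ between $w$ and $\alpha^l$. Let $p$ be the number of columns that are matches (necessarily aligning an $\alpha$ from $w$ against an $\alpha$ from $\alpha^l$, since the bottom row is all $\alpha$'s), let $q$ be the number of mismatch columns, and let $g$ be the number of columns with an indel. The bottom row contributes $p + q$ non-gap entries among the $\alpha$'s and the columns with an indel in the top row account for the rest, so $l = p + q + (\text{indels in the top row})$; likewise $|w| = p + q + (\text{indels in the bottom row})$. The score is $\text{score}(A) = q + g = q + (\text{indels in top row}) + (\text{indels in bottom row}) = q + (l - p - q) + (|w| - p - q) = |w| + l - 2p - q$. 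Now $p \le N_\alpha(w)$ since matched top-row characters are $\alpha$'s, and $p \le l$ since matched bottom-row characters lie in $\alpha^l$; hence $p \le \min\{N_\alpha(w),l\} = m$. Also $p + q \le \min\{|w|, l\}$ (both $w$ and $\alpha^l$ supply at least $p+q$ non-gap characters to the matched-or-mismatched columns). Therefore $\text{score}(A) = |w| + l - 2p - q = (|w| + l) - (p + q) - p \ge (|w|+l) - \min\{|w|,l\} - m = \max\{|w|,l\} - m$. Since $A$ was optimal, $\ell(w,\alpha^l) \ge \max\{|w|,l\} - \min\{N_\alpha(w),l\}$, and combining with the upper bound finishes the proof.

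\textbf{Main obstacle.} The computations are routine bookkeeping; the one place to be careful is making the upper-bound construction valid in all regimes (in particular checking that there are enough non-$\alpha$ characters in $w$ to absorb the leftover $\alpha$'s as mismatches, which hinges on the inequality $\min\{|w|,l\} - \min\{N_\alpha(w),l\} \ge 0$). An alternative, cleaner route would avoid explicit alignments entirely and instead induct on $|w| + l$ using the dynamic-programming recursion (\ref{eq:dyn_prog}): strip the last character of $w$ (or of $\alpha^l$) and check that the claimed formula satisfies the same min-of-three recursion with the correct boundary values $\ell(\epsilon,\alpha^l) = l$ and $\ell(w,\epsilon) = |w|$. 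The induction is arguably the least error-prone write-up, though it requires a short case split on whether the last character of $w$ equals $\alpha$ and on which of $|w|, N_\alpha(w), l$ achieve the relevant extrema; I would likely present the alignment argument for intuition and fall back on the induction if the case analysis there becomes unwieldy.
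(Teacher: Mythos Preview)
Your proposal is correct and follows essentially the same two-inequality strategy as the paper: a lower bound from the observation that any alignment has length at least $\max\{|w|,l\}$ and at most $\min\{N_\alpha(w),l\}$ matches (the paper states this in one line as ``score $=$ length $-$ matches''; you unpack it into the variables $p,q,g$), and an upper bound by exhibiting an explicit alignment with the target score (the paper normalizes by assuming $\alpha^{N_\alpha(w)}$ is a prefix of $w$ and then shuffles columns, whereas you describe the matching more abstractly, but the constructions are equivalent). Your suggested inductive alternative via the recursion~(\ref{eq:dyn_prog}) is not used in the paper.
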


\begin{proof}
Assume that $w\ne\epsilon$ and $l>0$, otherwise the statement is trivial. The score of an alignment is its length minus the number of matches in it. But the length of an alignment is at least the length of the longest string, and the number of matches is at most the number of characters shared by the strings. In particular, since the edit distance between $w$ and $\alpha^l$ is the score of some optimal alignment, we have that:  $\ell(w,\alpha^l) \ge \max\{\abs w,l\}-\min\{N_\alpha(w),l\}$. 

To complete the proof, it suffices to expose an alignment with the same score as the right-hand side of this inequality. For this let $n:=N_\alpha(\omega)$. Assume first that $\alpha^n$ is a prefix of $w$. We now consider two cases. If $|w|\le l$ then $w=\alpha^nu$, with $N_\alpha(u)=0$, and the following alignment between $w$ and $\alpha^l$ has the desired score:
\[\left.\begin{array}{c}
\alpha^n \\
\alpha^n
\end{array}\right|
\left.\begin{array}{c}
u \\
\alpha^{|w|-n}
\end{array}\right|
\left.\begin{array}{c}
\mathtt{-}^{l-|w|} \\
\alpha^{l-|w|}
\end{array}\right|.\]
Otherwise, if $|w|\ge l$, let $\delta=\min\{n,l\}$ and write $w=\alpha^\delta u v$, with $|u|=l-\delta$ and $|v|=|w|-l$. Now, the following alignment has the desired score:
\[
\left.\begin{array}{c}
\alpha^\delta \\
\alpha^\delta
\end{array}\right|
\left.\begin{array}{c}
u \\
\alpha^{l-\delta}
\end{array}\right|
\begin{array}{c}
v \\
\mathtt{-}^{|w|-l}
\end{array}.\]
The previous argument assumes that $\alpha^n$ is a prefix of $w$. If this is not the case, we may shuffle the columns of the alignments to reproduce $w$ on the top row but without altering their scores. From this, the lemma follows.
\end{proof}

\begin{lemma}
Let $k,l,r\ge0$ be integers. If $w = w_1\cdots w_k$ is a string of length $k$ and $\alpha$, $\beta$ are different alphabet characters then
\[\ell(w,\alpha^l\beta^r) = \min_{i_0\le i \le i_1} \ell(w_{(i)},\alpha^l) + \ell(w^{(k-i)},\beta^r),\]
where $i_0 := \max\{0,\min\{l,k-r\}\}$ and $i_1 := \min\{k,\max\{l,k-r\}\}$.
\label{lem:two_runs}
\end{lemma}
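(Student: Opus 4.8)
\textbf{Proof proposal for Lemma~\ref{lem:two_runs}.}

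The plan is to prove the two inequalities separately. For the ``$\le$'' direction, I would take any index $i$ with $i_0 \le i \le i_1$, concatenate an optimal alignment between $w_{(i)}$ and $\alpha^l$ with an optimal alignment between $w^{(k-i)}$ and $\beta^r$, and observe that this is a valid alignment between $w$ and $\alpha^l\beta^r$ whose score is $\ell(w_{(i)},\alpha^l) + \ell(w^{(k-i)},\beta^r)$. Since the edit distance is the minimum score over all alignments, $\ell(w,\alpha^l\beta^r)$ is bounded above by this sum, and then by the minimum over the allowed range of $i$. The only subtlety here is checking that restricting to $i_0 \le i \le i_1$ does not break the upper bound — but it does not, because the right-hand side is a minimum, so dropping values of $i$ can only increase it; the restriction is there to make the ``$\ge$'' direction go through, not to tighten the bound.

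For the ``$\ge$'' direction, I would start from an optimal alignment $A$ between $w$ and $\alpha^l\beta^r$. The key structural fact is that the $\alpha^l\beta^r$ row of this alignment splits at the run boundary: there is a column after which every non-gap entry of the bottom row is $\beta$ and before which every non-gap entry is $\alpha$. Cutting $A$ at that column produces an alignment between some prefix $w_{(i)}$ of $w$ and $\alpha^l$ (possibly with trailing/leading gaps) and an alignment between the complementary suffix $w^{(k-i)}$ and $\beta^r$, and the score of $A$ is the sum of the scores of these two pieces; hence $\ell(w,\alpha^l\beta^r) \ge \ell(w_{(i)},\alpha^l) + \ell(w^{(k-i)},\beta^r)$ for that particular $i$. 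It then remains to argue that we may assume $i_0 \le i \le i_1$: if the split index $i$ produced by an optimal alignment falls outside $[i_0,i_1]$, I would show that moving it to the nearest endpoint of the interval does not increase the total, using Lemma~\ref{lem:oneRun} to compute both pieces explicitly as $\max\{i,l\} - \min\{N_\alpha(w_{(i)}),l\}$ and $\max\{k-i,r\} - \min\{N_\beta(w^{(k-i)}),r\}$ and checking monotonicity of this sum as $i$ ranges beyond the endpoints.

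The main obstacle I anticipate is precisely this last clamping argument: showing that an optimal split can always be taken with $i \in [i_0,i_1]$. The definitions $i_0 = \max\{0,\min\{l,k-r\}\}$ and $i_1 = \min\{k,\max\{l,k-r\}\}$ are engineered so that for $i < i_0$ the prefix $w_{(i)}$ is ``too short'' relative to $\alpha^l$ (forcing many insertions) while the suffix is correspondingly long, and symmetrically for $i > i_1$; in each such regime the cost of one piece grows at least as fast as the other piece can shrink when $i$ is pushed toward the interval. I would make this precise by writing, for $i \le l$, $\ell(w_{(i)},\alpha^l) = l - N_\alpha(w_{(i)})$ and noting that increasing $i$ by one changes this by $-\indicator[w_{i+1} = \alpha]$, i.e. by $0$ or $-1$, whereas the suffix term changes by $+\indicator[w_{i+1} = \beta]$ when $k - i > r$ — and the two indicators cannot both be zero in the relevant range, so the sum is nonincreasing as $i$ moves up toward $i_0$; the case $i > i_1$ is handled symmetrically by decreasing $i$. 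Combining the two inequalities yields the stated identity.
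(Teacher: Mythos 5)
Your overall strategy is the same as the paper's: establish $\ell(w,\alpha^l\beta^r)=\min_{0\le i\le k}\big(\ell(w_{(i)},\alpha^l)+\ell(w^{(k-i)},\beta^r)\big)$ by splitting an optimal alignment at the run boundary of $\alpha^l\beta^r$ (and, conversely, concatenating alignments of the two pieces), and then invoke Lemma~\ref{lem:oneRun} to argue that the minimizer may be taken in $[i_0,i_1]$. The paper performs this clamping by writing the summand as $f_1(i)+f_2(i)$, where $f_1$ changes by exactly $\mp 1$ per step outside $[i_0,i_1]$ while $|f_2(i+1)-f_2(i)|\le 1$; your plan is the same idea carried out directly on the two terms.

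However, the computation you sketch for the clamping step fails as written. For $i<i_0$ (so $i<l$ and $k-i>r$), the prefix term does change by $-\indicator[w_{i+1}=\alpha]$, but the suffix term $\ell(w^{(k-i)},\beta^r)=\max\{k-i,r\}-\min\{N_\beta(w^{(k-i)}),r\}$ does not change by $+\indicator[w_{i+1}=\beta]$: you have dropped the length term, which decreases by $1$ because $\max\{k-i,r\}=k-i$ in this regime. The correct step change of the suffix term is $-1+\delta$ with $0\le\delta\le\indicator[w_{i+1}=\beta]$, so the total change is at most $-\indicator[w_{i+1}=\alpha]-1+\indicator[w_{i+1}=\beta]\le 0$, and the sum is indeed nonincreasing up to $i_0$. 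With your accounting the total change would be $\indicator[w_{i+1}=\beta]-\indicator[w_{i+1}=\alpha]$, which equals $+1$ whenever $w_{i+1}=\beta$, so your stated argument does not yield monotonicity; moreover, the claim that ``the two indicators cannot both be zero'' is false ($w_{i+1}$ may be a character other than $\alpha$ and $\beta$) and would not rescue the argument even if true. The case $i>i_1$ needs the symmetric correction. Once the $-1$ from the length term is restored, your proof goes through and coincides with the paper's.
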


\begin{proof}
Without loss of generality assume that $k>0$. Define $l_i:=N_\alpha(w_{(i)})$ and $r_i:=N_\beta(w^{(k-i)})$, for $0<i<k$. Furthermore, define $l_i:=0$ and $r_i:=N_\beta(w)$ for $i\le0$, and $l_i:=N_\alpha(w)$ and $r_i:=0$ for $i\ge k$.

Any alignment $A$ between $w$ and $\alpha^l\beta^r$ may be segmented as
\[A=\left.
\begin{array}{c}
u_0 \\
v_0
\end{array}\right|
\begin{array}{c}
u_1 \\
v_1
\end{array},\]
where $u_0$ and $u_1$ correspond to a possibly empty prefix and suffix of $w$, respectively, and $v_0$ and $v_1$ correspond to the strings $\alpha^l$ and $\beta^r$, respectively. ($u_0,u_1,v_0,v_1$ may contain $\mathtt{-}$'s.) Since this also applies to an optimal alignment between $w$ and $\alpha^l\beta^r$, it follows that
\begin{align*}
\ell(w,\alpha^l\beta^r) 
&= \min_{0\le i\le k}\quad\ell(w_{(i)},\alpha^l)+\ell(w^{(k-i)},\beta^r)\\
&= \min_{0\le i\le k}\quad \max\{l,i\}-\min\{l,l_i\}+\max\{r,k-i\}-\min\{r,r_i\}\\
&= \min_{0\le i\le k}\quad\frac{k+|l-i| + |k-r-i| + |l-l_i| - l_i + |r-r_i|  - r_i}{2},
\end{align*}
where for the second identity we have used Lemma~\ref{lem:oneRun}, and for the third one the well-known identities $\max\{a,b\}=(a+b+|a-b|)/2$, and $\min\{a,b\}=(a+b-|a-b|)/2$.

Consider the functions $f_1,f_2:\mathbb{Z}\to\mathbb{Z}$ defined as
\begin{align*}
f_1(i) &:= \frac{k-l-r}{2}+\frac{|l-i|+|k-r-i|}{2}\\
f_2(i) &:= \frac{|l-l_i|+l-l_i}{2}+\frac{|r-r_i|+r- r_i}{2}.
\end{align*}
In particular, $\ell(w,\alpha^l\beta^r)=\min_{0\le i\le k}f_1(i)+f_2(i)$. Next we show that this minimum is achieved at some $i_0\le i\le i_1$.

Observe that up to a constant summand, $f_1(i)$ is the average of the distance from $i$ to $l$, and from $i$ to $k-r$. So $f_1(i)$ is strictly decreasing for $i\le\min\{i,k-r\}$, and strictly increasing for $\max\{i,k-r\}\le i$. In particular, when restricted to the domain $\{0,\ldots,k\}$, $f_1$ is monotone decreasing to the left of $i_0$, constant between $i_0$ and $i_1$, and monotone increasing to the right of $i_1$. Note that $f_1(i)=|u|-l-r$, for $i_0\le i \le i_1$.

On the other hand, observe that $f_2(i)=g(l-l_i)+g(r-r_i)$, where 
\[g(x):=\frac{|x|+x}{2},\hbox{ for }x\in\mathbb{Z};\]
satisfies $|g(x)-g(x-1)|\le1$. In particular, if $w_{i+1}=\alpha$ then $|f_2(i+1)-f_2(i)|\le1$ because $l_{i+1}=l_i+1$ and $r_{i+1}=r_i$. Similarly, if $w_{i+1}=\beta$ then $|f_2(i+1)-f_2(i)|\le1$ because $l_{i+1}=l_i$ and $r_{i+1}=r_i-1$. Finally, if $w_{i+1}\notin\{\alpha,\beta\}$ then $l_{i+1}=l_i$ and $r_{i+1}=r_i$, hence $f_2(i+1)=f_2(i)$. In either case, we find that $|f_2(i+1)-f_2(i)|\le1$ for $0\le i<k$. As a result, since $f_1$ is integer-valued, $f_1+f_2$ is decreasing for $i\le i_0$ but increasing for $i_1\le i$, from which the lemma follows.
\end{proof}

\subsection{Efficient algorithmic calculation}

The proof of Lemma \ref{lem:two_runs} can be ad\-apted into a method (see Algorithm~\ref{alg:tworun}) that finds the distance between an arbitrary string $w$ to a string of the form $v=\alpha^l\beta^r$ in $O(|w|)$ time---assuming that $\alpha,\beta,l,$ and $r$ are known in advance. The algorithm exploits that $f_1(i)$ is constant for $i_0\le i \le i_1$, reducing the calculation of $\ell(w,v)$ to minimizing $f_2$ over the restricted domain. This can be done through a loop where $f_2(i_0)$ can be found directly, and the remaining values can be found recursively by finding $f_2(i+1)-f_2(i)$ through cases depending on $l_i$,  $r_i$, and $w_{i+1}$. This is faster than standard methods of finding the edit distance between strings with $O(|w||v|))$ time complexity .
	
A number of papers suggest methods for effectively computing the edit distance between run-length encoded strings~\cite{ArbLanMit02,MakUkkNav03}. These methods adapt the standard dynamic programming approach to compute $\ell(u,v)$ in $O(r(u)|v|+r(v)|u|)$ time. Comparatively, Algorithm~\ref{alg:tworun} has a few benefits and quirks: it assumes only one string is run-length encoded, it is fast due to specificity, and it provides a formula that is useful for proofs.

\begin{algorithm}
\caption{for computing the edit distance to a two-run string}
\label{alg:tworun}
\begin{algorithmic}
      \STATE \textbf{Input.} $w$ a string, $\alpha\ne\beta$ alphabet characters, and $l,r>0$ integers
      \STATE  \textbf{Output.} $\ell(w,\alpha^l\beta^r)$
      \STATE $k\gets|w|$
      \STATE $i_0 \gets \max\{0,\min\{l,k-r\}\}$
      \STATE $i_1 \gets \min\{k,\max\{l,k-r\}\}$
      \STATE $l_i \gets N_\alpha(w_1\cdots w_{i_0})$ 
      \STATE $r_i \gets N_\beta(w_{i_0+1}\cdots w_k)$
      \STATE $f_2 \gets (|l-l_i|+l-l_i)/2+(|r-r_i|+r-r_i)/2$
      \STATE $m\gets f_2$
      
      \FOR{$i = i_0+1$ to $i_1$}
        \IF{$w_i = \beta$}
            \IF{$r_i\le r$}
                \STATE $f_2 \gets f_2+1$
            \ENDIF
            \STATE $r_i \gets r_i-1$
        \ENDIF
      
        \IF{$w_i = \alpha$ and $l_i < l$}
            \STATE $f_2 \gets f_2-1$
            \STATE $m\gets \min\{m,f_2\}$ 
            \STATE $l_i \gets l_i + 1$
        \ENDIF
      \ENDFOR
    
      \STATE $f_1 \gets (k-l-r)/2+(|k-l-r|)/2$
      \RETURN{$f_1+m$}
\end{algorithmic}
\end{algorithm}

\section{Metric Dimension of Levenshtein Graphs}
\label{sec:metDim}

Recall that a subset of nodes $R$ in a graph $G$ is said to resolve it when $R$ resolves all pairs of different nodes, namely, for all nodes $u$ and $v$, with $u\ne v$, there exists $r\in R$ such that $d(u,r)\ne d(v,r)$. The metric dimension of the graph, $\beta(G)$, is the size of its smallest resolving set. 

The main result in this section are the following bounds on the metric dimension of Levenshtein graphs.

\begin{theorem} For all $0\le k_1\le k_2$ and $a\ge0$:
\begin{equation*}
O\left(\frac{k_2}{\log_a k_2}\right)\le\beta(\Lev_{k_1,k_2;a})\le O\left(a\big((k_2+1)^2-k_1^2\big)\right).
\end{equation*}
In particular, if $\Delta := k_2-k_1+1$ then $\beta(\Lev_{k_1,k_2;a})=O(ak_2\Delta)$.
\label{thm:betaLev}
\end{theorem}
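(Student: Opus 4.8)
The plan is to prove the lower and upper bounds separately, with the bulk of the work being the explicit construction for the upper bound.

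For the lower bound, I would argue as follows. The subgraph of $\Lev_{k_1,k_2;a}$ induced by the nodes of length exactly $k_2$ is isometric — or at least its distances are comparable — to $\HH_{k_2,a}$, and we already know from~\cite{JiaPol19} that $\beta(\HH_{k_2,a})\sim 2k_2/\log_a k_2$. Restricting a resolving set of $\Lev_{k_1,k_2;a}$ and using the distance vectors to length-$k_2$ strings should force at least $\Omega(k_2/\log_a k_2)$ of its coordinates to be "useful" on that layer; more carefully, one shows that a resolving set $R$ of the whole graph, together with the projection map onto distances, yields enough information to resolve $\HH_{k_2,a}$ (possibly after passing to the portion of $R$ at bounded distance from the top layer and using that geodesics between two length-$k_2$ strings can be taken to stay near the top by Lemma~\ref{lem:La} and Corollary~\ref{cor:Lev-Ham}). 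Since this step only needs the order of magnitude, I expect it to be relatively short once the isometric-embedding bookkeeping is pinned down.

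The heart of the theorem is the upper bound $\beta(\Lev_{k_1,k_2;a})=O\big(a((k_2+1)^2-k_1^2)\big)=O(ak_2\Delta)$, which I would establish by exhibiting an explicit resolving set $R$ built entirely out of one-run and two-run strings. The natural candidate is, for each length $m$ with $k_1\le m\le k_2$ (or perhaps $0\le m\le k_2$ to handle short strings), and each pair of distinct characters $\alpha,\beta$, a family of strings $\alpha^l\beta^{m-l}$ for $l=0,\dots,m$; a crude count of such strings is $O(a^2 k_2\Delta)$, and with a slightly more economical choice (e.g. fixing $\alpha$ or $\beta$, or only using $O(a)$ character pairs) one gets $O(ak_2\Delta)$. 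The key claim to verify is that this $R$ resolves every pair $u\ne v$ in $V_{k_1,k_2;a}$. Here Theorem~\ref{thm:masterruns} is the essential tool: for a two-run landmark $\alpha^l\beta^r$, the distance $\ell(\cdot,\alpha^l\beta^r)$ is given explicitly in terms of prefix/suffix character counts, so one can read off, from the distance vector $d(u|R)$, combinatorial invariants of $u$ — such as $N_\alpha(u_{(i)})$ for a controlled range of prefixes $i$, the length $|u|$, and ultimately (by varying $\alpha$) the entire character-count profile of every prefix of $u$, which determines $u$. One has to be a little careful that $d(u,r)$ equals $\ell(u,r)$ rather than some strictly larger geodesic distance; this is exactly where the case analysis of Theorem~\ref{thm:geo_dist} enters — when $k_1<k_2$ distances coincide with edit distances by Lemmas~\ref{lem:La}--\ref{lem:GeoGeqLev}, and when $k_1=k_2$ one instead uses Corollary~\ref{cor:Lev-Ham} together with the fact that Hamming-graph distances are recovered by the same counting argument.

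I expect the main obstacle to be the inversion step: showing that the vector of edit distances from $u$ to the chosen one- and two-run landmarks actually pins down $u$ uniquely, and doing so with only $O(ak_2\Delta)$ landmarks rather than the more wasteful $O(a^2k_2\Delta)$. Concretely, from~(\ref{ide:L21run}) one landmark $\alpha^l$ already gives $\max\{|u|,l\}-\min\{N_\alpha(u),l\}$, and sweeping $l$ over $0,\dots,k_2$ recovers both $|u|$ and $N_\alpha(u)$ for that fixed $\alpha$; the subtlety is recovering the \emph{positions} of characters, for which the two-run identity~(\ref{ide:L22run}) and its minimizing index range $[i_0,i_1]$ must be exploited to extract prefix counts $N_\alpha(u_{(i)})$, and one must check these suffice (across all $\alpha$) to reconstruct $u$ and that the needed $(\alpha,\beta,l,r)$ tuples can be bundled to meet the stated size bound. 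Finally, I would note that since every landmark in $R$ has at most two runs, Algorithm~\ref{alg:tworun} computes each coordinate of $d(u|R)$ in $O(|u|)=O(k_2)$ time, which is the efficiency remark promised in the paper's organization section; this is a corollary of the construction rather than part of the metric-dimension bound itself.
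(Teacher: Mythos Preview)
Your upper-bound strategy is in the right spirit --- the paper also builds a resolving set entirely from one- and two-run strings and leans on Theorem~\ref{thm:masterruns} and Corollary~\ref{cor:Lev-Ham} --- but the execution differs. Rather than attempting to \emph{reconstruct} $u$ from its distance vector (your ``inversion step''), the paper argues pairwise: for each $u\ne v$ it exhibits a specific landmark that distinguishes them. Concretely, the paper first uses $R_0=\{0^{k_2},\ldots,(a-1)^{k_2}\}$ to separate strings with different character-count multisets (via~(\ref{ide:L21run})); then for each relevant length $k$ it uses the set $R_{k,a}$ of~(\ref{eq:res_Set_Fix_k}), built from only the $\lfloor a/2\rfloor$ character pairs $(2n,2n+1)$, to separate any two length-$k$ strings (Lemma~\ref{lem:res}); and finally, to avoid including $R_{k,a}$ for \emph{every} $k_1\le k\le k_2$, it proves (Lemma~\ref{lem:const_red}) that $\theta(R_{k-1;a})\cup R_{k+1;a}$ already resolves same-length permutations, so landmarks at only every other length are needed. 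This yields the $O(ak_2\Delta)$ count directly, without the $a^2$ you were worried about and without ever having to invert the minimum in~(\ref{ide:L22run}). Your reconstruction idea can be made to work for same-length pairs (it is essentially the content of Lemma~\ref{lem:res}, since Corollary~\ref{cor:Lev-Ham} collapses the problem to Hamming distances), but the obstacle you flag --- extracting individual prefix counts from the minimum in~(\ref{ide:L22run}) when $|u|\ne|r|$ --- is real, and the paper simply sidesteps it.

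Your lower-bound argument, however, has a genuine gap. A resolving set $R$ of $\Lev_{k_1,k_2;a}$ need not lie in the top layer, and even for pairs $u,v$ of length $k_2$ the geodesic distance in $\Lev_{k_1,k_2;a}$ is the \emph{Levenshtein} distance when $k_1<k_2$ (Theorem~\ref{thm:geo_dist}), not the Hamming distance; since $\ell$ and $h$ are not comparable in the needed direction (e.g.\ alternating binary strings have $\ell=2$ but $h=k_2$), there is no way to extract from $R$ a resolving set of $\HH_{k_2,a}$ and invoke~\cite{JiaPol19}. The paper's route is much shorter and uses none of this: the diameter of $\Lev_{k_1,k_2;a}$ is at most $k_2$ and $|V_{k_1,k_2;a}|\ge a^{k_2}$, so the general inequality $|V|\le\delta^\beta+\beta$ of~\cite[Theorem~3.6]{KhuRagRos96} gives $a^{k_2}\le(k_2+1)^\beta$, hence $\beta\ge k_2/\log_a(k_2+1)$.
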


Observe that if $\Delta=\Theta(k_2)$ then $\beta(\Lev_{k_1,k_2;a})$ grows at most quadratically in terms of the maximum string length $k_2$. However, if $\Delta=\Theta(1)$ then $\beta(\Lev_{k_1,k_2;a})$ grows linearly with the largest string length. By setting $k_1=k_2$, Theorem~\ref{thm:betaLev} may be applied to Hamming graphs as well. In this case, the lower bound of the Corollary is within a factor of 2 of the true asymptotic value.

The remaining of this section is devoted to proving Theorem~\ref{thm:betaLev}. The lower-bound is almost immediate from the following general inequality~\cite[Theorem 3.6]{KhuRagRos96}: if $G=(V,E)$ is a graph with metric dimension $\beta$ and diameter $\delta$ then $|V| \le \delta^\beta + \beta$. Observe that the diameter of $\Lev_{k_1,k_2;a}$ is at most $k_2$ because $\ell(u,v)\le\max\{|u|,|v|\}$, for all pair of strings $u$ and $v$. So, if $\beta=\beta(\Lev_{k_1,k_2;a})$ then
\begin{equation*}
a^{k_2}\le|V_{k_1,k_2;a}| \le k_2 ^ \beta + \beta \le (k_2 + 1) ^ \beta,
\end{equation*}
from which the left-hand side inequality in Theorem~\ref{thm:betaLev} follows. (In the above argument the inequality $|V_{k_1,k_2;a}|\ge a^{k_2}$, which neglects the parameter $k_1$, may seem absurdly loose; however, this is not the case because $|V_{k_1,k_2;a}|\le2a^{k_2}$.)

The upper-bound in Theorem~\ref{thm:betaLev} follows directly from the following three results.

\begin{lemma}
Let $k_1\le k\le k_2$. In $\Lev_{k_1,k_2;a}$, the following subset of nodes resolves any pair of different strings of length $k$:
\begin{align}
R_{k,a} := \bigcup\limits_{n=0}^{\lfloor a/2 \rfloor-1} \Big\{(2n)^i(2n+1)^{k-i}:0\le i\le k\Big\}. 
\label{eq:res_Set_Fix_k}
\end{align}
\label{lem:res}
\end{lemma}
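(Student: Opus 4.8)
The plan is to show that for two distinct strings $u,v$ of the same length $k$, there is some $r\in R_{k,a}$ with $d(u,r)\neq d(v,r)$. The key simplification is that every node $r$ in $R_{k,a}$ has length $k$ and at most two runs, so by Corollary~\ref{cor:Lev-Ham} we have $\ell(u,r)=h(u,r)$ and $\ell(v,r)=h(v,r)$; and since $u,v,r$ all have length $k$, Theorem~\ref{thm:geo_dist} (or rather the fact that within $\Lev_{k_1,k_2;a}$ the geodesic distance between two length-$k$ nodes realized via a path staying at length $\le k_2$ equals their Levenshtein distance — which holds by Lemmas~\ref{lem:La} and \ref{lem:GeoGeqLev} when $k_1<k_2$, and holds trivially when $k_1=k_2=k$ only if $k\le 2$) we can reduce the problem to Hamming distances. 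Actually the cleanest route: regardless of $k_1<k_2$ or $k_1=k_2$, we have $d(u,r)\le \ell(u,r)$ always (Lemma~\ref{lem:La} when $k_1<k_2$; equality when $k_1=k_2$), so I will instead argue directly that $R_{k,a}$ distinguishes $u,v$ via the \emph{Levenshtein} distance, i.e. $\ell(u,r)\neq\ell(v,r)$ for some $r$, and then note separately that this suffices for the geodesic distance too. Let me re-examine: the lemma asserts $R_{k,a}$ resolves in the geodesic metric of $\Lev_{k_1,k_2;a}$, so I do need $d(\cdot,r)$, not $\ell(\cdot,r)$. I will handle this by observing $d(u,r) = h(u,r)$ for length-$k$ nodes $u,r$ (since a shortest path between equal-length strings can be taken through substitutions only — this is exactly the argument that $\Lev_{k,k;a}=\HH_{k,a}$ and, in the larger graph, that detours to other lengths never help for equal-length endpoints; I would cite or lean on the structure from Section~\ref{sec:LevGr}). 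So the whole problem becomes: show $R_{k,a}$ resolves $\HH_{k,a}$ under the Hamming metric.

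So the core combinatorial claim is: for distinct $u,v\in\{0,\ldots,a-1\}^k$, there exist $n\in\{0,\ldots,\lfloor a/2\rfloor-1\}$ and $i\in\{0,\ldots,k\}$ such that $h\big(u,(2n)^i(2n+1)^{k-i}\big)\neq h\big(v,(2n)^i(2n+1)^{k-i}\big)$. First I would compute $h(w,(2n)^i(2n+1)^{k-i})$ for a generic length-$k$ string $w$. Splitting $w = w_{(i)}\,w^{(k-i)}$, this Hamming distance equals $\big(i - N_{2n}(w_{(i)})\big) + \big((k-i) - N_{2n+1}(w^{(k-i)})\big) = k - N_{2n}(w_{(i)}) - N_{2n+1}(w^{(k-i)})$. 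Hence $h(u,r)=h(v,r)$ for $r=(2n)^i(2n+1)^{k-i}$ iff $N_{2n}(u_{(i)}) + N_{2n+1}(u^{(k-i)}) = N_{2n}(v_{(i)}) + N_{2n+1}(v^{(k-i)})$ for all relevant $i$. Suppose for contradiction that $u\neq v$ but this equality holds for every pair $(n,i)$. Taking $i=0$ gives $N_{2n+1}(u) = N_{2n+1}(v)$ for all $n$, i.e. $u$ and $v$ have the same count of each odd character; taking $i=k$ gives $N_{2n}(u)=N_{2n}(v)$ for all $n$, i.e. same count of each even character — so $u$ and $v$ are anagrams (permutations of each other), using that $\{0,\ldots,a-1\} = \{2n: 0\le n\le\lfloor a/2\rfloor-1\}\cup\{2n+1:0\le n\le\lfloor a/2\rfloor-1\}$ when $a$ is even, and when $a$ is odd the lone leftover character $a-1$ has its count determined by the others since the total length is $k$ for both. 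Then, subtracting the $i$ and $i-1$ equations for a fixed $n$: the quantity $\phi_n(i):=N_{2n}(u_{(i)})-N_{2n+1}(u^{(k-i)})$ must equal $\psi_n(i):=N_{2n}(v_{(i)})-N_{2n+1}(v^{(k-i)})$ for all $i$; and $\phi_n(i)-\phi_n(i-1)$ is $+1$ if $u_i=2n$, $-1$ if $u_i=2n+1$, and $0$ otherwise (using $N_{2n+1}(u^{(k-i)}) - N_{2n+1}(u^{(k-i+1)})$ is $-1$ when $u_i=2n+1$ and $0$ else — careful with signs). I expect this to force $u_i$ and $v_i$ to lie in the same "category" position by position, and then an induction on $i$ (or a direct argument reading off each character) pins down $u=v$, the contradiction.

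The main obstacle — and the step I would spend the most care on — is the final descent that upgrades "all these prefix/suffix count equalities hold" to "$u=v$". The increments $\phi_n(i)-\phi_n(i-1)\in\{-1,0,+1\}$ only tell us, for each $n$, whether position $i$ carries character $2n$, character $2n+1$, or neither; knowing this for all $n$ identifies exactly which of the $2\lfloor a/2\rfloor$ "paired" characters sits at position $i$, or else tells us it is the unpaired character $a-1$ (when $a$ odd). Since the same increment data must hold for $v$, we get $u_i=v_i$ for every $i$, hence $u=v$. I would also double-check the two edge regimes: (i) when $a$ is odd and the leftover character $a-1$ never appears in $R_{k,a}$ at all — the anagram step still forces $N_{a-1}(u)=N_{a-1}(v)$ from the length constraint, and the increment argument still isolates "$u_i = a-1$" as the case where every $\phi_n$ increment is $0$; (ii) the degenerate small cases $a=1$ (where $R_{k,a}=\varnothing$ but also there is only one string of length $k$, so there is nothing to resolve) and $a=2,3$ (where $\lfloor a/2\rfloor - 1$ is $0$, so $n$ ranges over just $\{0\}$ and $R_{k,a}$ consists of the $k+1$ strings $0^i1^{k-i}$ — the argument above goes through verbatim). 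Everything else is bookkeeping with prefix/suffix character counts, so I would keep the writeup at the level of "compute $h$, derive the count identities, run the telescoping-increment argument to conclude $u=v$."
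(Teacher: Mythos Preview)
Your proof is correct, and the core computation (the Hamming distance formula $h(w,(2n)^i(2n+1)^{k-i})=k-N_{2n}(w_{(i)})-N_{2n+1}(w^{(k-i)})$ and its increment in $i$) is exactly what the paper uses. The organization differs, though: the paper picks a position $j$ where $u_j\ne v_j$, sets $\alpha=u_j$, and directly exhibits two consecutive landmarks $\alpha^{j-1}(\alpha{+}1)^{k-j+1}$ and $\alpha^j(\alpha{+}1)^{k-j}$ (or the $(\alpha{-}1,\alpha)$ pair when $\alpha$ is odd) whose distances to $u$ and $v$ disagree. You instead run the contrapositive globally: assume all landmark distances agree, telescope in $i$, and read off $u_i=v_i$ for every $i$. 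Your route is arguably cleaner since it avoids the even/odd case split on $u_j$; the paper's route is slightly more informative since it names the two witnesses explicitly. Note also that your anagram step ($i=0$ and $i=k$) is not actually needed---the increment argument alone forces $u_i=v_i$ for every position.

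Two small issues to clean up. First, your reduction to Hamming distance is muddled: the parenthetical claim that ``detours to other lengths never help for equal-length endpoints'' is false in general (e.g.\ $d(010,101)=2<3=h(010,101)$ in $\Lev_{0,3;2}$). The correct justification is the one you already gave earlier: since each $r\in R_{k,a}$ has at most two runs, Corollary~\ref{cor:Lev-Ham} gives $\ell(u,r)=h(u,r)$ and $\ell(v,r)=h(v,r)$, and then Theorem~\ref{thm:geo_dist} (for $k_1<k_2$) or the identification $\Lev_{k,k;a}=\HH_{k,a}$ (for $k_1=k_2$) converts $\ell$ to $d$. Second, there is a sign slip: you define $\phi_n(i)=N_{2n}(u_{(i)})-N_{2n+1}(u^{(k-i)})$ with a minus, but the quantity that is actually forced equal for $u$ and $v$ is the \emph{sum} $N_{2n}(u_{(i)})+N_{2n+1}(u^{(k-i)})$, and it is the increment of that sum which equals $+1,-1,0$ according as $u_i=2n$, $u_i=2n{+}1$, or neither. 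With the minus sign the increment would be $\indicator[u_i=2n]+\indicator[u_i=2n{+}1]\in\{0,1\}$, which does not separate $2n$ from $2n{+}1$. Your stated increment values show you intended the sum; just fix the definition.
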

\begin{proof}
Let $u=u_1\cdots u_k$ and $v=v_1\cdots v_k$ be nodes in $\Lev_{k_1,k_2;a}$ of the same length $k$ that differ at certain position $j$. Define $\alpha:=u_j$. Without loss of generality assume that $\alpha\ne(a-1)$ when $a$ is odd.

Due to Theorem~\ref{thm:geo_dist}, the geodesic distance between pairs of nodes in $\Lev_{k_1,k_2;a}$ is either their Hamming or Levenshtein distance. But, since nodes in $R_{k,a}$ have at most two runs, Corollary \ref{cor:Lev-Ham} implies that $\ell(u,r)=h(u,r)$ and $\ell(v,r)=h(v,r)$, for each $r\in R$. Hence, the geodesic distance between $u$ and $v$ to any node in $R_{k,a}$ is always the Hamming distance.

If $\alpha$ is even, we claim that $\{\alpha^{j-1}(\alpha\!+\!1)^{k-j+1},\alpha^j(\alpha\!+\!1)^{k-j}\}$ resolves $u$ and $v$. By contradiction suppose otherwise, i.e. assume that $d(u,\alpha^{j-1}(\alpha\!+\!1)^{k-j+1})= d(v,\alpha^{j-1}(\alpha\!+\!1)^{k-j+1})$ and $d(u,\alpha^j(\alpha\!+\!1)^{k-j})= d(v,\alpha^j(\alpha\!+\!1)^{k-j})$. If $\delta$ is the geodesic distance between $u$ (or $v$) and $\alpha^{j-1}(\alpha\!+\!1)^{k-j+1}$ then
\begin{align*}
d(u,\alpha^j(\alpha\!+\!1)^{k-j})
&= h(u,\alpha^j(\alpha\!+\!1)^{k-j})\\
&=\sum_{i=1}^{j-1} \indicator[u_i \ne \alpha] +\indicator[u_j\ne \alpha]+ \sum_{i=j+1}^k \indicator[u_i \ne \alpha+1]\pm\indicator[u_j\ne \alpha+1]\\
&= h(u,\alpha^{j-1}(\alpha\!+\!1)^{k-j+1})-1\\
&= \delta-1.
\end{align*}
On the other hand, since $v_j\ne\alpha$:
\begin{align*}
d(v,\alpha^j(\alpha\!+\!1)^{k-j})
&= h(v,\alpha^j(\alpha\!+\!1)^{k-j})\\
&= \sum_{i=1}^{j-1} \indicator[v_i \ne \alpha] + \indicator[v_j\ne \alpha]+\sum_{i=j+1}^k \indicator[v_i \ne \alpha+1]\pm\indicator[v_j\ne \alpha+1]\\
&=h(v,\alpha^{j-1}(\alpha\!+\!1)^{k-j+1})+1-\indicator[v_j\ne \alpha+1]\\
&\ge \delta,
\end{align*}
implying that $d(u,\alpha^j(\alpha\!+\!1)^{k-j})\ne d(v,\alpha^j(\alpha\!+\!1)^{k-j})$, which is not possible. So, $\{\alpha^{j-1}(\alpha\!+\!1)^{k-j+1},\alpha^j(\alpha\!+\!1)^{k-j}\}$ resolves $u$ and $v$.

Likewise, if $\alpha$ is odd, one can show that $\{(\alpha-1)^{i-1}\alpha^{k-i+1},(\alpha-1)^i\alpha^{k-i}\}$ resolves $u$ and $v$, from which the lemma follows.
\end{proof}

\begin{lemma}
If $\theta$ is the string bijection induced by the transformation $\theta(\alpha) := (\alpha+1) \pmod a$, for $\alpha\in\{0,\ldots,a-1\}$, then the set $\theta(R_{k-1;a})\cup R_{k+1;a}$ resolves all pairs of different strings of length $k$ that are permutations of each other.
\label{lem:const_red}
\end{lemma}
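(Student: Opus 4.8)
The plan is to reduce the problem about strings of length $k$ that are permutations of each other to two already-solved instances of Lemma~\ref{lem:res}, one ``below'' at length $k-1$ and one ``above'' at length $k+1$. Suppose $u$ and $v$ are distinct strings of length $k$ that use exactly the same multiset of characters. First I would record the key distance fact: since every element of $R_{k-1;a}$ and $R_{k+1;a}$ has at most two runs, Corollary~\ref{cor:Lev-Ham} together with Theorem~\ref{thm:geo_dist} tells us that for any such two-run string $r$, $d(w,r)=\ell(w,r)$ for every $w$, and moreover the one-run formula~(\ref{ide:L21run}) and the two-run formula~(\ref{ide:L22run}) apply verbatim. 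This is what lets the length-$k\pm1$ landmarks ``see'' length-$k$ strings in a controlled way.

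Next I would analyze how a landmark of length $k-1$ resolves $u$ versus $v$. A landmark $r\in R_{k-1;a}$ has the form $r=\alpha^i\beta^{\,k-1-i}$ with $\{\alpha,\beta\}=\{2n,2n+1\}$; after applying $\theta$ it becomes $\theta(r)=\theta(\alpha)^i\theta(\beta)^{\,k-1-i}$, again a two-run string, now over a shifted pair of characters. The point of the shift: the lengths no longer match ($k$ versus $k-1$), so by~(\ref{ide:L21run})/(\ref{ide:L22run}), $\ell(w,\theta(r))$ depends on $|w|=k$ (a constant, the same for $u$ and $v$), on the split point, and on the counts $N_{\theta(\alpha)}(w_{(j)})$ and $N_{\theta(\beta)}(w^{(k-j)})$. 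Because $u$ and $v$ are permutations of one another, their total character counts agree, so the only way $\theta(r)$ fails to resolve $u,v$ is if these prefix/suffix counts coincide at the optimal split for both. My strategy is: after the $\theta$-shift, the character $\theta(\alpha)$ over which $\theta(R_{k-1;a})$ measures is one of $\{1,3,5,\dots\}\cup\{0\}$ depending on parity, so collectively the landmarks of length $k-1$ and length $k+1$ measure prefix counts of $u$ against suffix counts over a family of character pairs rich enough to reconstruct, for each character, the number of its occurrences in every prefix of $u$ — which determines $u$ itself. Concretely, the length-$(k+1)$ landmarks $R_{k+1;a}=\{(2n)^{i}(2n+1)^{k+1-i}\}$ used without shift handle the pairs $\{0,1\},\{2,3\},\dots$, and $\theta(R_{k-1;a})$ handles the shifted pairs $\{1,2\},\{3,4\},\dots$ (plus the wrap-around pair), so that every adjacent pair of characters $\{\gamma,\gamma+1\}$ is covered by at least one side; this is exactly the redundancy the union is designed to provide.

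The technical heart, and the step I expect to be the main obstacle, is the counting/monotonicity argument showing that if $d(u,s)=d(v,s)$ for all $s\in\theta(R_{k-1;a})\cup R_{k+1;a}$ then $u=v$. The plan is to mimic the telescoping computation in the proof of Lemma~\ref{lem:res}: fix the first position $j$ where $u_j\ne v_j$, set $\alpha:=u_j$ (reducing to $\alpha\notin\{a-1\}$ by symmetry when $a$ is odd), and compare $d(\cdot,r')$ and $d(\cdot,r'')$ for two consecutive landmarks $r',r''$ that split at positions $j-1$ and $j$ over a pair containing $\alpha$ — with the caveat that the relevant pair lands in $\theta(R_{k-1;a})$ or in $R_{k+1;a}$ depending on whether $\alpha$ is even or odd, which is precisely why both pieces are needed in the union. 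Using formula~(\ref{ide:L22run}) one shows that moving the nominal split by one shifts the two distances in opposite directions for $u$ (because $u_j=\alpha$) but not symmetrically for $v$ (because $v_j\ne\alpha$), forcing $d(u,r'')-d(u,r')$ and $d(v,r'')-d(v,r')$ to differ — the same $\pm1$ bookkeeping as in Lemma~\ref{lem:res}, except now one must also check that the optimal split indices $i_0,i_1$ from Theorem~\ref{thm:masterruns} actually straddle $j$ for these landmarks (so the derivative-of-$f_2$ cases are the operative ones), which requires tracking that $|w|=k$ sits in the right range relative to $l=j$ or $l=j-1$ and $r=k\pm1-l$. Once that case check is in place, the contradiction closes exactly as before, proving the two chosen landmarks resolve $u$ and $v$, and hence the union resolves all such pairs.
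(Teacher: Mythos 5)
Your high-level architecture matches the paper's: fix the first index $j$ at which $u$ and $v$ disagree, observe that the relevant prefix and suffix character counts then agree for $u$ and $v$ (since $u_{(j-1)}=v_{(j-1)}$ and the remaining suffixes are permutations of each other), and apply the two-run formula of Theorem~\ref{thm:masterruns} to a landmark over a character pair containing $u_j$, drawn from $R_{k+1;a}$ or $\theta(R_{k-1;a})$ so that every pair of consecutive characters is available. However, the step you defer as ``the same $\pm1$ bookkeeping as in Lemma~\ref{lem:res}'' is exactly where the argument fails as described, and it is not a routine check on $i_0,i_1$. Because the landmarks have length $k\pm1$, Lemma~\ref{lem:two_runs} gives a \emph{minimum over two split points}, and whether the favorable split is the one attained depends not only on $u_j,v_j$ but on the adjacent characters $u_{j+1},v_{j+1}$. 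Concretely, if $u_j$ is odd and $u_{j+1}=u_j-1$, the $R_{k+1;a}$ landmark over the pair $\{u_j-1,u_j\}$ does not separate $u$ from $v$: at the alternative split the surplus character of the length-$(k+1)$ landmark is absorbed by $u_{j+1}$, and $\ell(u,\cdot)$ drops to the same value as the bound for $\ell(v,\cdot)$. One is then forced onto the $\theta(R_{k-1;a})$ landmark over $\{u_j,u_j+1\}$, where the computation splits again according to whether $v_{j+1}=v_j-1$. A second complication you gloss over: your ``wrap-around pair'' does not exist in $R_{k+1;a}$, and for odd $a$ the character $a-1$ occurs in no landmark of $R_{k+1;a}$ while $0$ occurs in no landmark of $\theta(R_{k-1;a})$; hence the reduction ``assume $\alpha$ is not the last character'' cannot be applied to $u_j$ and $v_j$ simultaneously, and $v_j=a-1$ needs separate treatment. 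These interactions are what force the paper's proof into five distinct cases, each with its own landmark and computation; your sketch identifies none of them, so the proposal has a genuine gap at its load-bearing step.

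Two further points. The claim in your second paragraph that the landmark distances allow one to ``reconstruct, for each character, the number of its occurrences in every prefix of $u$'' is unsubstantiated and not true in the form stated (a two-run landmark of length $k\pm1$ reports only a minimum over two prefix/suffix count combinations), and it is not needed. Relatedly, the paper does not use a two-landmark telescoping at all: in each case it exhibits a \emph{single} landmark and compares $\ell(u,\cdot)$ and $\ell(v,\cdot)$ directly via the shared prefix/suffix counts, which is what the permutation hypothesis buys and what makes the computation tractable despite the minimum over split points.
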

\begin{proof}
Recall that $w_{(n)}$ and $w^{(n)}$ denote the prefix and suffix of a string $w$ of length $n$, respectively.

Let $u$ be a string of length $k>1$, and $v\ne u$ correspond to a permutation of the characters in $u$. Let $i$ be the first position at which $u$ and $v$ differ; in particular, $u_{(i-1)} = v_{(i-1)}$, and $u^{(k-i+1)}$ and $v^{(k-i+1)}$ are permutations of each other.
We show the lemma by cases, see Figure~\ref{fig:VennD}.

\begin{figure}
\centering
\includegraphics[scale=0.5]{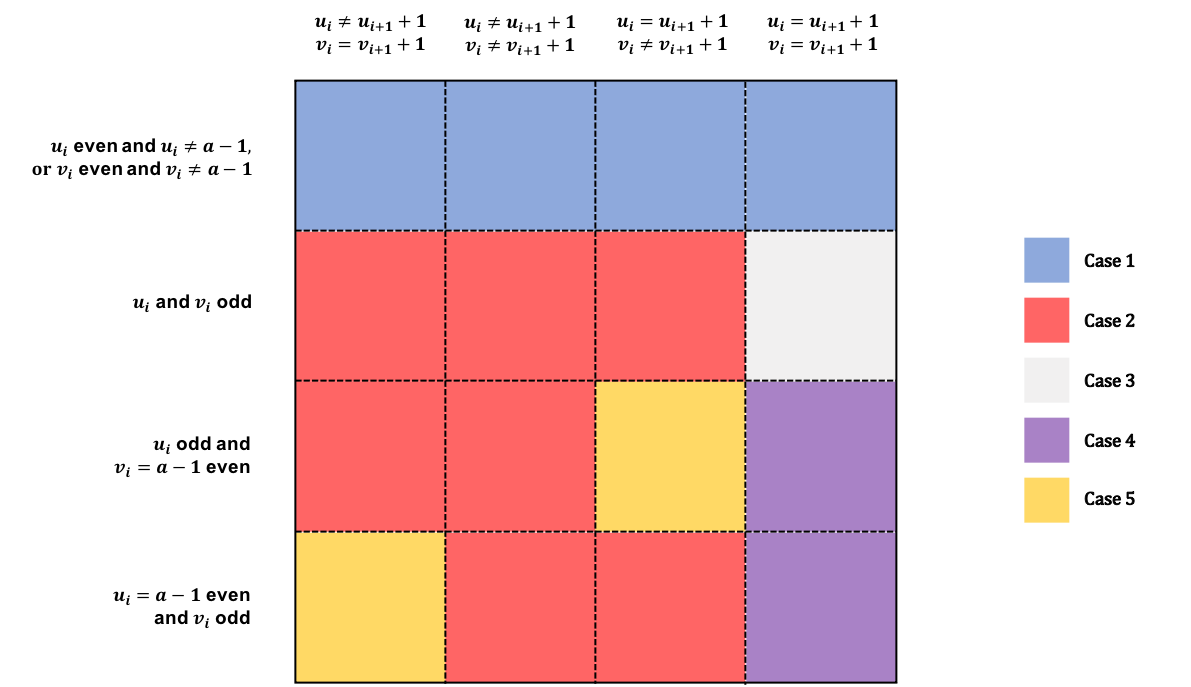}
\caption{Diagram associated with the different cases in the proof of Lemma~\ref{lem:const_red}.}
\label{fig:VennD}
\end{figure}

\textit{Case 1: Without loss of generality assume that $u_i$ even and $u_i\ne(a-1)$.} Define $\alpha:=u_i$; in particular, $\alpha^i(\alpha\!+\!1)^{k+1-i}\in R_{k+1;a}$. We claim that the later string resolves $u$ and $v$. Indeed, we may define
\begin{align*}
\lambda &:= N_\alpha\big(u_{(i-1)}\big) = N_\alpha\big(v_{(i-1)}\big) \\
\gamma &:= N_{\alpha+1}\big(u^{(k-i+1)}\big) = N_{\alpha+1}\big(v^{(k-i+1)}\big).
\end{align*}
Next, using lemmas \ref{lem:two_runs} and~\ref{lem:oneRun} we find that
\[\ell(u,\alpha^i(\alpha\!+\!1)^{k+1-i})\hspace{10cm}\]
\vspace{-24pt}
\begin{align*}
&= \min\{\ell(u_{(i-1)},\alpha^{i}) + \ell(u^{(k-i+1)},(\alpha\!+\!1)^{k+1-i}),\ell(u_{(i)},\alpha^{i}) + \ell(u^{(k-i)},(\alpha\!+\!1)^{k+1-i})\}\\
&\le\ell(u_{(i)},\alpha^{i}) + \ell(u^{(k-i)},(\alpha\!+\!1)^{k+1-i})\\
&= k-\lambda-\gamma,
\end{align*}
where for the second identity we have used that $u_i=\alpha$.  Similarly, using that $v_i\ne\alpha$ we obtain that
\begin{align*}
\ell(v,\alpha^i(\alpha\!+\!1)^{(k+1)-i})&=\min\{k+1-\lambda-\gamma, k+1-\lambda-\gamma+\indicator[v_i=\alpha+1]\}\\
&=k+1-\lambda-\gamma,
\end{align*}
which shows the lemma for the Case 1. 

We emphasize that Case 1 is the only one required for $a=2$. In particular, without any loss of generality we may assume in what remains of this proof that $a\ge3$.

\textit{Case 2: Without loss of generality assume that $u_i\ne u_{i+1}+1$ and that $u_i$ and $v_i$ are odd, or that $u_i$ is odd and $v_i=a-1$ is even.} Define $\alpha:=u_i-1$; in particular, $u_{i+1}\ne\alpha$ and $\alpha^{i+1}(\alpha\!+\!1)^{k-i}\in R_{k+1;a}$. We claim $u$ and $v$ are resolved by the later string. Indeed, preserving the definitions of $\lambda$ and $\gamma$ from Case 1, and using similar arguments to the ones used for that case, we find now that
\[\ell(u,\alpha^{i+1}(\alpha\!+\!1)^{k-i})=\min\{k+2-\lambda-\gamma,k+2-\lambda-\gamma+\indicator[u_{i+1}=\alpha+1]\}=k+2-\lambda-\gamma.\]
On the other hand, note that $v_i\ne\alpha$ otherwise $u_i=a$, which is not possible. Hence, using that $v_i\ne\alpha$ we obtain that
\begin{align*}
\ell(v,\alpha^{i+1}(\alpha\!+\!1)^{k-i}) 
&=\min\{k+1-\lambda-\gamma,\ell(v_{(i+1)},\alpha^i)+\ell(v^{(k-i-1)},\beta^{k-i-1})\}\\
&\le k+1-\lambda-\gamma,
\end{align*}
which shows the lemma for the Case 2.

\textit{Case 3: $u_i$ and $v_i$ odd, $u_i=u_{i+1}+1$, and $v_i=v_{i+1}+1$.} Define $\alpha:=u_i$ and $\beta := \theta(\alpha)$. We claim that $\alpha^i\beta^{k-i-1}\in\theta(R_{k-1;a})$ resolves $u$ and $v$. To show so define 
\begin{align*}
\lambda' &:= N_\alpha(u_{(i-1)}) = N_\alpha(v_{(i-1)})\\
\gamma' &:= N_\beta(u^{(k-i+1)}) = N_\beta(v^{(k-i+1)}).
 \end{align*}
Note that $u_{i+1}\ne\alpha$ and $u_{i+1}\ne\beta$ because $a\ge 3$; in particular, $N_\alpha(u_{(i+1)}) \le i$ and $N_\beta(u^{(k-i)}) \le k-i-1$. As a result, due to lemmas~\ref{lem:two_runs}-\ref{lem:oneRun}, we find that
\[\ell(u,\alpha^i\beta^{k-i-1})\le\ell(u_{(i)},\alpha^i)\!+\!\ell(u^{(k-i)},\beta^{k-i-1})=k-\lambda'-\gamma'-1\]
Likewise:
\[\ell(v,\alpha^i\beta^{k-i-1})\!=\!\min\{\ell(v_{(i)},\alpha^i)\!+\!\ell(v^{(k-i)},\beta^{k-i-1}),\ell(v_{(i+1)},\alpha^i)\!+\!\ell(v^{(k-i-1)},\beta^{k-i-1})\}\!.\]
But note that $N_\alpha(v_{(i+1)})\le i$ because $\alpha$ is odd and $v_{i+1}$ even, and $N_\beta(v^{(k-i)})=N_\beta(u^{(k-i)})\le k-i-1$ because $u^{(k-i+1)}$ and $v^{(k-i+1)}$ are permutations of each other and $u_{i},v_i\ne\beta$. Finally, since $v_i\ne\alpha$ and $v_{i+1}\ne\alpha$, we obtain that
\begin{align*}
\ell(v,\alpha^i\beta^{k-i-1})=\min\{k-\lambda'-\gamma',k-\lambda'-\gamma'+\indicator[v_{i+1} = \beta]\}=k - \lambda'-\gamma,
\end{align*}
which shows the lemma for the Case 3.
        
\textit{Case 4. Without loss of generality assume that $u_i=u_{i+1}+1$ is odd and that $v_i=v_{i+1}+1=a-1$ is even.} In particular, $a$ is odd and $\alpha^{i}\beta^{k-1-i}\in\theta(R_{k-1;a})$ where $\alpha:=u_i$ and $\beta:=\alpha+1$. We claim that $\alpha^{i}\beta^{k-1-i}$ resolves $u$ and $v$. To see this, note that $u_{i+1}\notin\{\alpha,\alpha+1\}$; specifically, $N_\alpha(u_{(i+1)})\le i$ and $N_\beta(u^{(k-i)})\le k-i-1$. So, if  $\lambda'$ and $\gamma'$ are as in Case 3 then Lemma~\ref{lem:two_runs} and Lemma~\ref{lem:oneRun} imply that
\[\ell(u,\alpha^i\beta^{k-i-1})
\le\ell(u_{(i)},\alpha^i)+\ell(u^{(k-i)},\beta^{k-i-1})=k - \lambda'-\gamma'-1.\]
On the other hand, $v_i\ne \alpha$ hence $N_\alpha(v_{(i+1)})\le i$. Additionally, there must be some $v_j=u_{i+1}$ for some $j>i$, so $N_\beta(u^{(k-i)})\le k-i-1$. Thus:
\[\ell(v,\alpha^i\beta^{k-i-1})\hspace{10cm}\]
\vspace{-24pt}
\begin{align*}
&=\min\{\ell(v_{(i)},\alpha^i)+\ell(v^{(k-i)},\beta^{k-i-1}), \ell(v_{(i+1)},\alpha^i)+\ell(v^{(k-i-1)},\beta^{k-i-1})\}\\
&= \min\{k-\lambda'-\gamma'+\indicator[v_i=\beta],k-\lambda'-\gamma'+\indicator[v_i=\beta]-\indicator[v_{i+1}=\alpha]\}\\
&= k - \lambda'-\gamma',
\end{align*}
where for the final identity we have used that $\indicator[v_{i+1}=\alpha]=\indicator[v_{i}=\beta]$. This shows the lemma for the Case 4.

\textit{Case 5. Without loss of generality assume that $u_i=u_{i+1}+1$ is odd and $v_i=a-1\ne v_{i+1}+1$ is even.} In particular, $a$ is odd and $(a-2)^i(a-1)^{k-i-1}\in\theta(R_{k-1;a})$. We claim that $(a-2)^i(a-1)^{k-i-1}$ resolves $u$ and $v$. To show so, define
 \begin{align*}
\lambda'' &:= N_{a-2}(u_{(i-1)}) = N_{a-2}(v_{(i-1)})\\
\gamma'' &:= N_{a-1}(u^{(k-i+1)}) = N_{a-1}(v^{(k-i+1)}).
\end{align*}
Observe that  $0\le u_{i+1}< u_i\le a-2$ so $u_{i+1}\ne a-1$. As a result, due to lemma~\ref{lem:two_runs}-\ref{lem:oneRun}:
\begin{align*}
\ell(u,(a-2)^i(a-1)^{k-i-1})
&\le\ell(u_{(i)},(a-2)^i)+\ell(u^{(k-i)},(a-1)^{k-i-1}\}\\
&=k-\lambda''-\gamma''-\indicator[u_i=a-2]\\
&\le k-\lambda''-\gamma''.
\end{align*}
On the other hand, since $v_i = a-1$, $N_{a-2}(v_{(i+1)}) \le i$. Additionally, $v_{i+1}\ne a-2$. So:
\[\ell(u,(a-2)^i(a-1)^{k-i-1})\hspace{9cm}\]
\vspace{-24pt}
\begin{align*}
&=\min\{\ell(v_{(i)},(a\!-\!2)^i)\!+\!\ell(v^{(k\!-\!i)},(a\!-\!1)^{k\!-\!i\!-\!1}),\ell(v_{(i\!+\!1)},(a\!-\!2)^i)\!+\!\ell(v^{(k\!-\!i\!-\!1)},(a\!-\!1)^{k\!-\!i\!-\!1})\}\\
&=\min\{k-\lambda''-\gamma''+1,k-\lambda''-\gamma''+1+\indicator[v_{i+1}=a-1] \}\\
&=k-\lambda''-\gamma''+1,
\end{align*}
which completes the proof of the lemma.
\end{proof}

\begin{corollary}
$\mathbb{L}_{k_1,k_2,a}$ is resolved by a set of size $O\left(a\big((k_2+1)^2-k_1^2\big)\right)$.
\label{cor:Resolution}
\end{corollary}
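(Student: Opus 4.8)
The plan is to exhibit an explicit resolving set and bound its size. Let $R_{k,a}$ be the set from Lemma~\ref{lem:res} and let $\theta$ be the cyclic character shift from Lemma~\ref{lem:const_red}; set
\[
R \;:=\; \bigcup_{k=k_1}^{k_2}\bigl(R_{k,a}\,\cup\,\theta(R_{k,a})\bigr).
\]
Every string in $R_{k,a}$, hence in $\theta(R_{k,a})$, has length exactly $k$, so $R\subseteq V_{k_1,k_2;a}$; and since $\theta$ acts injectively on strings, $|\theta(R_{k,a})|=|R_{k,a}|\le\lfloor a/2\rfloor(k+1)$. Consequently
\[
|R|\;\le\;2\lfloor a/2\rfloor\sum_{k=k_1}^{k_2}(k+1)\;=\;\lfloor a/2\rfloor\bigl[(k_2+1)(k_2+2)-k_1(k_1+1)\bigr]\;=\;O\!\left(a\bigl((k_2+1)^2-k_1^2\bigr)\right),
\]
which is the asserted bound (and, substituted into the argument sketched after Theorem~\ref{thm:betaLev}, yields the upper bound there). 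It remains to check that $R$ resolves $\Lev_{k_1,k_2;a}$.

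Let $u\ne v$ be nodes. If $|u|=|v|=k$, then $R_{k,a}\subseteq R$ resolves them by Lemma~\ref{lem:res} — alternatively, when $u$ and $v$ are permutations of each other and $k_1<k<k_2$, one may instead invoke Lemma~\ref{lem:const_red}, since $\theta(R_{k-1,a})$ and $R_{k+1,a}$ both lie in $R$. If $|u|\ne|v|$, then $k_1<k_2$, so by Theorem~\ref{thm:geo_dist} the geodesic distance on $\Lev_{k_1,k_2;a}$ equals the edit distance. Taking $i=k_2$ and $i=0$ in (\ref{eq:res_Set_Fix_k}) shows $R_{k_2,a}$ contains the single-run string $c^{k_2}$ for every $c\in\{0,\dots,2\lfloor a/2\rfloor-1\}$, while $\theta(R_{k_2,a})$ contributes $c^{k_2}$ for $c\in\theta(\{0,\dots,2\lfloor a/2\rfloor-1\})$; together these give $c^{k_2}\in R$ for all $c\in\{0,\dots,a-1\}$, even for odd $a$. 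For any node $w$ we have $N_c(w)\le|w|\le k_2$, so Lemma~\ref{lem:oneRun} yields $d(w,c^{k_2})=\ell(w,c^{k_2})=k_2-N_c(w)$. Thus if no $c^{k_2}$ resolved $u$ and $v$, we would get $N_c(u)=N_c(v)$ for all $c$, hence $|u|=\sum_c N_c(u)=\sum_c N_c(v)=|v|$, a contradiction; so some $c^{k_2}\in R$ resolves $u$ and $v$. Therefore $R$ is resolving.

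I do not anticipate a genuinely hard step. The equal-length case is handed to us by Lemma~\ref{lem:res} (with Lemma~\ref{lem:const_red} available as a substitute for permutation pairs), and the unequal-length case collapses, via the one-run formula of Lemma~\ref{lem:oneRun}, to the observation that a string's vector of character counts determines its length. The points that need care are bookkeeping: making sure every node put into $R$ has length in $\{k_1,\dots,k_2\}$ (so the union runs over exactly those lengths, and Lemma~\ref{lem:const_red} is unavailable at the two extreme lengths); checking that $\theta(R_{k_2,a})$ really supplies the single-run strings $c^{k_2}$ that $R_{k_2,a}$ omits when $a$ is odd; and verifying $\sum_{k=k_1}^{k_2}(k+1)=\Theta\bigl((k_2+1)^2-k_1^2\bigr)$, including the degenerate cases $k_1=k_2$ and $k_1=0$, so that the $O(\cdot)$ bound holds as stated.
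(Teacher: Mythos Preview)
Your proof is correct, and it takes a cleaner route than the paper's. The paper builds its resolving set as
\[
R_0\cup\bigcup_{i=0}^{\lfloor(k_2-k_1)/2\rfloor}\theta^i(R_{k_2-2i;a})\quad(\text{plus }R_{k_1;a}\text{ if }k_2-k_1\text{ is odd}),
\]
so it only places a copy of $R_{k,a}$ at every \emph{other} length and then invokes Lemma~\ref{lem:const_red} to resolve permutation pairs whose length falls in a gap. You instead drop a copy of $R_{k,a}$ (and $\theta(R_{k,a})$) at \emph{every} length, which lets Lemma~\ref{lem:res} handle all equal-length pairs outright; Lemma~\ref{lem:const_red} becomes superfluous, and your aside about it can be deleted. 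The unequal-length case you treat is identical in spirit to the paper's use of $R_0=\{0^{k_2},\dots,(a-1)^{k_2}\}$; your observation that $R_{k_2,a}\cup\theta(R_{k_2,a})$ already contains every $c^{k_2}$ is a nice way to avoid adjoining $R_0$ separately.

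The trade-off is size: your set is larger than the paper's by roughly a factor of four (you sum over all $k_2-k_1+1$ lengths and double for $\theta$, whereas the paper sums over about half the lengths with no doubling), but both are $O\!\left(a\big((k_2+1)^2-k_1^2\big)\right)$, so the corollary is unaffected. If you wanted to trim your construction without reintroducing Lemma~\ref{lem:const_red}, note that $\theta(R_{k,a})$ is only used at $k=k_2$ (to supply $(a-1)^{k_2}$ when $a$ is odd); for $k<k_2$ it is dead weight.
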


\begin{proof}
Let $\theta$ be the character bijection defined in Lemma \ref{lem:const_red}. Consider the sets
\begin{align*}
R_0 &:= \{0^{k_2},\ldots,(a-1)^{k_2}\};\\
R_1 &:= \bigcup_{i=0}^{\lfloor(k_2-k_1)/2\rfloor}\!\!\!\!\!\!\!\!\theta^i(R_{k_2-2i;a})\,\,\cup\,\,
\begin{cases}
\emptyset, & k_2-k_1\ \text{even};\\ 
R_{k_1;a}, & k_2-k_1\ \text{odd}.
\end{cases}
\end{align*}

We claim that $R:=R_0\cup R_1$ resolves $\Lev_{k_1,k_2;a}$. For this, let $u$ and $v$ be different nodes in this Levenshtein graph. We show that $R$ resolves these nodes by considering different cases.

First, suppose that $u$ and $v$ are not permutations of each other; in particular, for some alphabet character $\alpha$, $N_\alpha(u)\ne N_\alpha(v)$. If $|u|=|v|$ then, due to Lemma~\ref{lem:oneRun}, $\ell(u,\alpha^{k_2})=k_2-N_\alpha(u)\ne k_2-N_\alpha(v)=\ell(v,\alpha^{k_2})$ i.e. $u$ and $v$ are resolved. Instead, if $|u|\ne|v|$ and $R_0$ did not resolve them, then 
\begin{align*}
\abs u = \sum_{\alpha=0}^{a-1} N_\alpha(u) = \sum_{\alpha=0}^{a-1} (k_2 - \ell(\alpha^{k_2},u)) = \sum_{\alpha=0}^{a-1} (k_2 - \ell(\alpha^{k_2},v)) = \sum_{\alpha=0}^{a-1} N_\alpha(v) = \abs v,
\end{align*}
which is not possible. Hence $R_0$ resolves all pairs of nodes in $\Lev_{k_1,k_2;a}$ that are not permutations of each other.

Next, suppose that $u\ne v$ are permutations of each other. Let $k:=|u|=|v|$. If $k_2-k$ is even or $k=k_1$ then $\theta^i (R_{k;a})\subset R$ for some integer $0\le i\le\lfloor(k_2-k_1)/2\rfloor$. Further, since $\theta$ is an automorphism, $u_0:=\theta^{-i}(u)$ and $v_0:=\theta^{-i}(v)$ are distinct strings of the same length $k$, and the distances from $u$ and $v$ to the nodes in $\theta^i(R_{k;a})$ is the same as those from $u_0$ and $v_0$ to $R_{k;a}$. But, due to Lemma~\ref{lem:res}, $u_0$ and $v_0$ are resolved by $R_{k;a}$, so $u$ and $v$ are resolved by $\theta^i (R_{k;a})$.

Instead, if $k_2-k$ is odd and $k\ne k_1$ then $\theta^{i+1}(R_{k-1;a})\cup{\theta^{i}}(R_{k+1;a})\subset R$ for some integer $0\le i<\lfloor(k_2-k_1)/2\rfloor$. But $u_0:= \theta^{-i}(u)$ and $v_0:=\theta^{-i}(u)$ are also permutations of each other so, by Lemma~\ref{lem:const_red}, $u_0$ and $v_0$ are resolved by $\theta(R_{k-1;a}) \cup R_{k+1;a}$. Hence, since $\theta$ is an automorphism, $u$ and $v$ are resolved by $\theta^{i+1}(R_{k-1;a})\cup\theta^{i}(R_{k+1;a})$. This shows that $R$ resolves $\Lev_{k_1,k_2;a}$.
    
Finally, observe that 
\begin{align*}
|R_{k,a}|
&=\begin{cases}
1, & \text{if $k=0$};\\
\lfloor \frac{a}{2} \rfloor (k+1), & \text{if $k> 0$}.
\end{cases}
\end{align*}
Therefore
\begin{align*}
|R|
&\le |R_0|+|R_{k_1;a}|+\sum_{i=0}^{\lfloor\frac{k_2-k_1}{2}\rfloor}|\theta^i(R_{k_2-2i;a})|\\
&= a+\left\lfloor\frac{a}{2}\right\rfloor(k_1+1)+\left\lfloor\frac{a}{2}\right\rfloor\sum_{i=0}^{\lfloor\frac{k_2-k_1}{2}\rfloor}(k_2-2i+1)\\
&=O\big(a(k_2+1)(k_2-k_1+1)\big)\\
&=O\left(a\big((k_2+1)^2-k_1^2\big)\right),
\end{align*}
from which the result follows.
\end{proof}

\section{Automorphisms of Levenshtein Graphs}
\label{sec:auto}

In what follows, $\mathbb{A}(G)$ denotes the automorphism group of a graph $G$.

In addition, $\rho$ denotes the string reversal, i.e. if $u=u_1\cdots u_k$ is a string of length $k\ge1$ then $\rho(u):=u_k\cdots u_1$. By definition, $\rho(\varepsilon):=\varepsilon$. On the other hand, given an alphabet bijection $\xi:\{0,\ldots,a-1\}\to\{0,\ldots,a-1\}$, we define $\xi(u):=\xi(u_1)\cdots\xi(u_k)$ and $\xi(\varepsilon):=\varepsilon$. We refer to any such transformation as a character bijection. 

The main result in this section completes the characterization of automorphisms of Levenshtein graphs. The cases not covered by our result have implicitly been addressed in the literature. In fact, $\Lev_{0,1;a}$ is isomorphic to the complete graph $K_{a+1}$, whose automorphism group is the permutation group $S_{a+1}$ (i.e. the set of all permutations of $\{0,\ldots,a\}$). In particular, $|\mathbb{A}(\Lev_{0,1;a})|=(a+1)!$. These Levenshtein graphs are somewhat degenerate in that they are the only Levenshtein graphs where automorphisms do not necessarily preserve string lengths.

On the other hand, $\Lev_{k,k;a}$ is isomorphic to the Hamming graph $\Ham_{k,a}$ (Lemma~\ref{lem:geokka}), whose automorphism group is $(\times_{i=1}^k S_a) \rtimes S_k$~\cite{ChaBer06,TilLla19}. In other words, the automorphisms of $\Lev_{k,k;a}$ are the composition of character permutations with character-wise alphabet bijections. Accordingly, $|\mathbb{A}(\Lev_{k,k;a})|=k!\cdot(a!)^k $. 

The remaining Levenshtein graphs are addressed by our next result.

\begin{theorem}
Let $k_1\ne k_2$ and $k_2\ge2$. In $\Lev_{k_1,k_2;a}$, a node bijection $\sigma$ is an automorphism if and only if $\sigma$ is a character bijection, string reversal, or a composition of both. In particular, $\Lev_{k_1,k_2;a}$ has $a!\cdot2$ automorphisms.
\label{thm:automorphisms}
\end{theorem}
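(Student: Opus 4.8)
The plan is to establish the ``if'' direction quickly and then devote most effort to the ``only if'' direction, which is the substantive claim. For the easy direction, one checks that every character bijection $\xi$ preserves both string length and the edit distance (since applying $\xi$ to both rows of an optimal alignment yields an alignment of the same score), hence $\xi\in\mathbb{A}(\Lev_{k_1,k_2;a})$; similarly $\rho$ preserves length and, because reversing both rows of an alignment preserves its score, preserves edit distance, so $\rho\in\mathbb{A}(\Lev_{k_1,k_2;a})$. Compositions of these are then automorphisms too. These maps form a group of order $2\cdot a!$: the $a!$ character bijections form a subgroup, $\rho$ has order $2$, $\rho$ commutes with every $\xi$, and $\rho$ is not itself a character bijection once $k_2\ge2$ (a string such as $0^{k_2-1}1$ is not fixed in the ``character'' sense any $\xi$ would force), so the group they generate is $\langle\rho\rangle\times(\text{character bijections})$ of size $2a!$. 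I would also record the needed lemma-level fact, available from the earlier sections, that when $k_1\ne k_2$ the geodesic distance on $\Lev_{k_1,k_2;a}$ equals the edit distance (Theorem~\ref{thm:geo_dist}), so ``automorphism'' means ``edit-distance-preserving bijection of $V_{k_1,k_2;a}$.''

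For the ``only if'' direction, let $\sigma$ be an arbitrary automorphism; the goal is to show $\sigma$ lies in the group just described. The first key step is to prove that $\sigma$ preserves string length, i.e. $|\sigma(u)|=|u|$ for all $u$. I expect this to follow from a metric/combinatorial invariant that distinguishes the ``layers'' $V_{k;a}$: for instance, the number of neighbors a node has, or more robustly the structure of small balls, differs systematically between a node of length $k$ with $k_1<k<k_2$ (which has neighbors of lengths $k-1,k,k+1$) versus the extreme layers. One clean route: a node $w$ of length $k$ with $k_1<k<k_2$ has degree $(a-1)k + (k+1) + a(k+1)$ or similar — in any case the degree is a strictly increasing function of the length in the interior, and the two endpoint layers are distinguished by having strictly smaller degree than their unique interior neighbor-layer; by connectivity and the fact that $\sigma$ preserves degrees and adjacency, $\sigma$ must map each layer to itself (one argues the layer of minimum length maps to a layer of minimum degree-profile, then peels off layers inductively). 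Once length is preserved, restrict $\sigma$ to the top layer $V_{k_2;a}$, where it is an automorphism of $\Lev_{k_2,k_2;a}\cong\HH_{k_2,a}$, whose automorphism group is known to be $(\times_{i=1}^{k_2}S_a)\rtimes S_{k_2}$. So $\sigma|_{V_{k_2;a}}$ is a permutation of positions composed with position-wise alphabet bijections.

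The heart of the argument is then to show that, because $\sigma$ must also respect the \emph{inter-layer} edges (which encode single-character deletions/insertions), the position-wise alphabet bijections must all be equal and the position permutation must be either the identity or the full reversal. For the alphabet bijections: take a string and its image under a single deletion at position $i$; consistency of $\sigma$ across adjacent layers forces the alphabet bijection ``seen'' at the layer $V_{k_2;a}$ and at $V_{k_2-1;a}$ to agree on the surviving positions, and by varying $i$ one propagates equality of all the position-wise bijections to a single global character bijection $\xi$. For the position permutation $\pi$: a deletion at position $i$ of a string $u$ produces $u_1\cdots u_{i-1}u_{i+1}\cdots u_{k_2}$, and the set of strings of length $k_2$ at edit-distance $1$ from a fixed length-$(k_2-1)$ string $v$ is exactly the set of ``insertions'' into $v$; $\sigma$ must carry insertion-neighborhoods to insertion-neighborhoods, and analyzing how $\pi$ acts on the ``position of the inserted/deleted character'' forces $\pi$ to preserve the linear order of positions up to reversal (any $\pi$ that is neither the identity nor the reversal will break the nesting structure of these deletion relations, e.g. it cannot be consistent with deletions at two positions simultaneously). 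After composing $\sigma$ with $\xi^{-1}$ and, if needed, with $\rho$, we reduce to an automorphism fixing $V_{k_2;a}$ pointwise; a final short argument (again using that adjacency across layers is determined by the string content, and that a node in $V_{k_2-1;a}$ is determined by its length-$k_2$ neighbors) shows such an automorphism is the identity on all of $\Lev_{k_1,k_2;a}$, completing the proof.

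\textbf{Main obstacle.} The delicate part is the inter-layer rigidity — proving that the freedom enjoyed by $\HH_{k_2,a}$ (independent alphabet bijection at each of the $k_2$ coordinates, and an arbitrary coordinate permutation) collapses to ``one global bijection plus identity-or-reversal'' once the deletion/insertion edges to layer $k_2-1$ must be respected. Getting this cleanly will likely require a careful bookkeeping of which length-$(k_2-1)$ string is obtained by deleting coordinate $i$ from a well-chosen family of length-$k_2$ strings, and extracting from the compatibility constraints both the equality of the coordinate bijections and the order-up-to-reversal property of $\pi$; the bound $k_2\ge2$ and the hypothesis $k_1\ne k_2$ are exactly what make these constraints non-vacuous.
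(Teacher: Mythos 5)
Your easy direction matches the paper's (Lemma~\ref{prop:givenAuto}), and your overall strategy for the converse is genuinely different: you propose to quotient out the known group $\mathbb{A}(\HH_{k_2,a})\cong(\times_{i=1}^{k_2}S_a)\rtimes S_{k_2}$ on the top layer and then use the insertion/deletion edges to collapse it to ``one global alphabet bijection composed with identity-or-reversal.'' The paper never invokes the Hamming classification; instead it extracts the character bijection $\xi$ from the images of the constant strings $\{0^{k_2},\ldots,(a-1)^{k_2}\}$, normalizes by $\xi^{-1}$ and possibly $\rho$, pins the normalized map on all two-run strings of length $k_2$ by explicit distance computations, concludes it is the identity on the whole top layer because the two-run strings form a resolving set (Lemma~\ref{lem:res}), and then inducts down the layers via Levenshtein's reconstruction result. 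Your route is conceptually appealing, but two steps are problematic as written.

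First, your length-preservation argument fails: the degree of an interior node is $a+r(u)+2|u|(a-1)$ (Lemma~\ref{pro:levDeg}), which depends on the run count, and the degree ranges of consecutive interior layers overlap --- e.g.\ for $a=2$ both $010$ and $0000$ have degree $11$ as interior nodes --- so degree is \emph{not} a strictly increasing function of length and you cannot ``peel off layers'' by degree alone. The paper only uses a degree comparison to show the top layer cannot map into layer $k_2-1$ (plus a connectivity/cardinality argument to exclude layers $\le k_2-2$), and then obtains length preservation for \emph{all} nodes from the identity $|u|=\sum_{\alpha}N_\alpha(u)=\sum_\alpha\bigl(k_2-\ell(u,\alpha^{k_2})\bigr)$, i.e.\ from distances to the constant strings; some such global invariant is needed in your argument too. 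Second, the ``heart'' of your proof --- that respecting the deletion edges forces the $k_2$ coordinate-wise bijections to coincide and forces the coordinate permutation to be order-preserving or order-reversing --- is asserted but not proved; the claim that any other $\pi$ ``breaks the nesting structure'' needs an actual combinatorial argument (note, for instance, that two distinct length-$k_2$ strings share a length-$(k_2-1)$ subsequence exactly when their edit distance is at most $2$, so the structure you must analyze is strictly richer than the Hamming adjacency). Until that rigidity step is carried out, the proposal establishes the easy inclusion and the order count $2\cdot a!$ of the candidate group, but not the converse.
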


The proof of this theorem is given at the end of this section. It is based on the following five lemmas, and a result from~\cite{Lev01}.

\begin{lemma}
The string reversal and character bijections are automorphisms of $\Lev_{k_1,k_2;a}$.
\label{prop:givenAuto}
\end{lemma}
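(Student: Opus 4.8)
The plan is to verify directly that the string reversal $\rho$ and each character bijection $\xi$ preserve the edge relation of $\Lev_{k_1,k_2;a}$, i.e. that $\ell(u,v)=1$ if and only if $\ell(\sigma(u),\sigma(v))=1$, where $\sigma$ is either map. Since both $\rho$ and $\xi$ are visibly bijections of the vertex set $V_{k_1,k_2;a}$ onto itself (they preserve string length, so they map strings of length between $k_1$ and $k_2$ to strings of the same length, and each has an obvious inverse — $\rho$ is an involution, and $\xi^{-1}$ is again a character bijection), it suffices to show that they preserve $\ell(\cdot,\cdot)$ entirely, or at least the property of being at edit distance one. In fact I would prove the cleaner statement that both maps are \emph{isometries} for the Levenshtein distance on all finite strings, from which the edge-preservation is immediate.

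The key step is the observation that an alignment between $u$ and $v$ transforms, under $\sigma$, into an alignment between $\sigma(u)$ and $\sigma(v)$ of the same score. For a character bijection $\xi$: given an alignment $A$ between $u$ and $v$, apply $\xi$ to every non-gap character in both rows of $A$; gaps stay gaps. A column that was a match (two equal characters) stays a match since $\xi$ is injective; a column that was a mismatch stays a mismatch for the same reason; an indel column stays an indel column. Hence the score is unchanged, so $\ell(\xi(u),\xi(v))\le\ell(u,v)$, and applying the same argument to $\xi^{-1}$ gives equality. For the reversal $\rho$: given an alignment $A$ between $u$ and $v$ with columns $c_1,\ldots,c_N$ (top row spelling $u$, bottom spelling $v$, ignoring gaps), the reversed sequence of columns $c_N,\ldots,c_1$ is an alignment whose top row spells $\rho(u)$ and bottom row spells $\rho(v)$; each column's contribution to the score (match/mismatch/indel) is intrinsic to that column and unaffected by reordering, so the score is preserved. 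Thus $\ell(\rho(u),\rho(v))=\ell(u,v)$ as well.

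From these two isometry facts, the lemma follows at once: if $\ell(u,v)=1$ then $\ell(\sigma(u),\sigma(v))=1$, and conversely since $\sigma$ is a bijection with inverse of the same type; so $\{u,v\}\in E_{k_1,k_2;a}$ iff $\{\sigma(u),\sigma(v)\}\in E_{k_1,k_2;a}$. Therefore $\rho$ and every character bijection lie in $\mathbb{A}(\Lev_{k_1,k_2;a})$. (Strictly speaking one should also note $\rho$ and $\xi$ fix the length-$0$ vertex $\varepsilon$ when $k_1=0$, which is handled by the conventions $\rho(\varepsilon)=\varepsilon$ and $\xi(\varepsilon)=\varepsilon$ stated just before the lemma.)

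I do not anticipate a genuine obstacle here; the content is entirely the two ``alignment transport'' arguments, and the only thing to be careful about is phrasing the alignment manipulation precisely enough — in particular being explicit that match/mismatch/indel status of a column is a local property, invariant under reordering of columns (for $\rho$) and under applying an injection to the alphabet (for $\xi$). This is the kind of routine-but-must-be-said verification that the author likely dispatches in a short paragraph.
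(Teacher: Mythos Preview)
Your proposal is correct and follows essentially the same approach as the paper's proof: both argue that $\rho$ and $\xi$ are bijections of $V_{k_1,k_2;a}$ (length-preserving, with obvious inverses), then transport an alignment between $u$ and $v$ to one between $\sigma(u)$ and $\sigma(v)$ of equal score, concluding $\ell(\sigma(u),\sigma(v))\le\ell(u,v)$ and obtaining equality via the inverse map. Your column-by-column justification of why the score is preserved is slightly more explicit than the paper's, which simply asserts $\text{score}(\xi(A))=\text{score}(A)$ as clear.
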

\begin{proof}
Let $\xi$ be a character bijection. Since $\xi$ and $\rho$ preserve string lengths, $\xi(V_{k_1,k_2;a})\subset V_{k_1,k_2;a}$ and $\rho(V_{k_1,k_2;a})\subset V_{k_1,k_2;a}$. Furthermore, since the character bijection associated with the alphabet bijection $\xi^{-1}$ is an inverse for $\xi$, and $\rho$ is an involution, $\xi$ and $\rho$ are bijections from $V_{k_1,k_2;a}$ onto itself. It is convenient to extend $\xi$ to strings formed from the enlarged alphabet $\{0,\ldots,a-1,\mathtt{-}\}$, defining $\xi(\mathtt{-})=\mathtt{-}$. Likewise, extend $\rho$ to strings that may include indels besides alphabet characters.

Let $u,v\in V_{k_1,k_2;a}$ and $A$ an alignment of length $k\ge1$ between them:
\[A=\begin{array}{ccc}
\alpha_1 & \ldots & \alpha_k\\
\beta_1 & \ldots & \beta_k
\end{array}.\]
Define the following alignment between $\xi(u)$ and $\xi(v)$:
\begin{align*}
\xi(A) &:=\begin{array}{ccc}
\xi(\alpha_1) & \ldots & \xi(\alpha_k)\\
\xi(\beta_1) & \ldots & \xi(\beta_k)
\end{array}.
\end{align*}
Clearly, $\text{score}(\xi(A))=\text{score}(A)$, which implies that $\ell(\xi(u),\xi(v))\le\ell(u,v)$, for all $u,v\in V_{k_1,k_2;a}$ and character bijection $\xi$. In particular, $\ell(\xi^{-1}(\xi(u)),\xi^{-1}(\xi(v)))\le\ell(\xi(u),\xi(v))$, implying that $\ell(u,v)=\ell(\xi(u),\xi(v))$. A similar argument shows that $\ell(u,v)=\ell(\rho(u),\rho(v))$, which completes the proof.
\end{proof}

Next, we discuss the degree of nodes on the infinite graph $\mathbb{L}_a$. Our result can be generalized to arbitrary Levenshtein graphs by restricting the length of the neighbors of a node.

Recall that the number of runs in a node $u$ is denoted $r(u)$.

\begin{lemma}
A node $u$ on $\Lev_a$ has $r(u)$ neighbors of length $\abs u -1$, $\abs u(a-1)$ neighbors of length $\abs u$, and $a+\abs u (a-1)$ neighbors of length $\abs u +1$. In particular, $u$ has degree $a+r(u)+2\abs u (a-1)$.
\label{pro:levDeg}
\end{lemma}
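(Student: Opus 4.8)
\textbf{Proof plan for Lemma~\ref{pro:levDeg}.}

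The plan is to count the neighbors of a fixed node $u$ of length $n := |u|$ in $\Lev_a$ according to whether the neighbor has length $n-1$, $n$, or $n+1$, which are the only possibilities because $\ell(u,v)=1$ forces $\bigl||u|-|v|\bigr|\le1$. A neighbor $v$ with $|v|=n$ must satisfy $\ell(u,v)=1$, and since $|u|=|v|$ Lemma~\ref{lem:geokka} (more precisely the argument in its proof) gives $\ell(u,v)=1$ iff $h(u,v)=1$; the strings at Hamming distance $1$ from $u$ are obtained by changing exactly one of the $n$ positions to one of the other $a-1$ characters, giving $n(a-1)$ such neighbors. For neighbors of length $n+1$, I would argue that $\ell(u,v)=1$ with $|v|=n+1$ forces every optimal alignment to consist of $n$ matches and a single insertion; equivalently, $v$ is obtained from $u$ by inserting one character somewhere. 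Naively there are $(n+1)$ insertion slots times $a$ characters, i.e.\ $a(n+1)$ strings, but insertions that fall inside or adjacent to a run produce the same string (inserting the run's character anywhere among consecutive identical positions yields one string, not several). Counting the overcounting run-by-run, I expect to collapse each run of length $t$ contributing a redundancy, and the net count should come out to $a + n(a-1)$; the clean way is to observe that $v$ of length $n+1$ is a neighbor of $u$ iff $u$ is obtained from $v$ by deleting one character, and then apply the length-$(n-1)$ count to $v$. Finally, for neighbors of length $n-1$, $\ell(u,v)=1$ with $|v|=n-1$ means $v$ is $u$ with one character deleted; deleting a character from a run of length $t$ always yields the same string regardless of which of the $t$ positions is removed, so the number of distinct such $v$ equals the number of runs $r(u)$.

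The key steps, in order, are: (1) reduce to the three length cases and dispatch the equal-length case via Corollary~\ref{cor:Lev-Ham}/Lemma~\ref{lem:geokka}; (2) establish the ``neighbor of length $n\pm1$'' $\Leftrightarrow$ ``single insertion / single deletion'' characterization by examining the shape of an optimal alignment of score $1$ (it has exactly one indel column and all other columns are matches); (3) count length-$(n-1)$ neighbors as $r(u)$ using the run-collapsing observation for deletions; (4) count length-$(n+1)$ neighbors, either directly by an inclusion argument on insertion positions within runs, or by the symmetry ``$v$ is a length-$(n+1)$ neighbor of $u$ iff $u$ is a length-$n$ neighbor of $v$, and $v$ has $r(v)$ runs $\ge$'' — here I would instead count directly: a character $c$ inserted into $u$ yields a new string, and two insertions give the same string exactly when they insert the same character into the same maximal block of that character (where ``block for character $c$'' includes the empty gaps adjacent to runs of $c$); summing over positions gives $a(n+1)$ minus the redundancies, and a short bookkeeping identity reduces this to $a + n(a-1)$; (5) add the three counts to get the degree $a + r(u) + 2n(a-1)$.

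The main obstacle is step (4), the length-$(n+1)$ count: the deletion case is easy because removing any character from a run gives the same result, but for insertions one must carefully identify when two (slot, character) pairs yield identical strings. The cleanest route is likely to phrase it as follows: for each character $c\in\{0,\ldots,a-1\}$, the distinct strings obtainable by inserting one copy of $c$ into $u$ correspond bijectively to the ``maximal $c$-segments'' of $u$ — that is, the maximal intervals of positions (including the two boundary gaps) into which a $c$ cannot already be absorbed into an existing $c$-run, so the number of insertion sites for $c$ is (number of $c$-runs) $+$ (number of non-$c$-runs bounded appropriately) — and summing over $c$ collapses to $a+n(a-1)$. I would double-check this on a small example (e.g.\ $u=00$, where insertions of $0$ give $000$ only, and insertions of $1$ give $100,010,001$, total $4 = a + n(a-1) = 2 + 2\cdot1$) to make sure the bookkeeping is right before writing it up. Everything else is routine.
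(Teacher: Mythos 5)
Your proposal is correct and follows essentially the same route as the paper: classify neighbors by length, count substitutions as the $|u|(a-1)$ strings at Hamming distance one, collapse deletions within runs to get $r(u)$, and count distinct single-insertion results to get $a+|u|(a-1)$. The only cosmetic difference is in the insertion bookkeeping --- the paper splits insertions by whether they preserve or increase the run count, whereas you count equivalence classes of insertion sites per character (summing $(|u|+1)-N_c(u)$ over $c$ to get $a(|u|+1)-|u|$) --- but both yield the same total.
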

\begin{proof}
Recall that substitutions keep the length of a node, whereas deletions and insertions reduce and increase, respectively, its length by one unit. In particular, $u$ has $|u|(a-1)$ neighbors of length $|u|$, and $r(u)$ neighbors of length $|u|-1$.

Let us now focus on the neighbors of $u$ that can be reached due to a single insertion. An insertion may either keep or increase the number of runs. The former occurs only if a run is enlarged by one character, and there are $r(u)$ ways to do so. The latter occurs only if a run is split by a character into two, or two consecutive runs are separated by a single-character run, which can be done in $(\abs u +1)(a-1) - (r(u)-1)=a+|u|(a-1)-r(u)$ ways. In particular, $r(u) + a+|u|(a-1)-r(u) = a+\abs u (a-1)$ nodes can be reached from $u$ through a single insertion. From this, the proposition follows.
\end{proof}

The number of strings that can be created by a given number of insertions onto a given string, and a bound on the number of strings that can be formed by a given number of deletions from a given string is discussed in~\cite{Lev01}.

\begin{lemma}
If $k_1+1<k_2$ then any automorphism of $\Lev_{k_1,k_2;a}$ preserves the length of strings of length $k_2$.
\label{lem:k2fix}
\end{lemma}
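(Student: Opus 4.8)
The strategy is to use the degree count from Lemma~\ref{pro:levDeg}, restricted to the finite graph $\Lev_{k_1,k_2;a}$, and show that strings of length $k_2$ are exactly the nodes of \emph{smallest} degree in the graph. Since any automorphism preserves degrees, it would then follow that the set of length-$k_2$ strings is invariant under $\mathbb{A}(\Lev_{k_1,k_2;a})$. To carry this out, first I would write down, for a node $u$ in $\Lev_{k_1,k_2;a}$, the number of neighbors it has. From Lemma~\ref{pro:levDeg} (which counts neighbors in $\Lev_a$ by length), the only modification is that neighbors of a forbidden length ($<k_1$ or $>k_2$) must be discarded. So for $|u|=k$ with $k_1<k<k_2$ the degree is the full $a + r(u) + 2k(a-1)$; for $|u|=k_1$ we lose the $r(u)$ neighbors of length $k_1-1$, giving $a + 2k_1(a-1)$; and for $|u|=k_2$ we lose the $a + k_2(a-1)$ neighbors of length $k_2+1$, giving $r(u) + k_2(a-1) + k_2(a-1) \le k_2 + 2k_2(a-1)$ (using $r(u)\le k_2$), and more importantly the degree is $r(u) + 2k_2(a-1)$ with $1\le r(u)\le k_2$.

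Next I would compare these quantities. The key inequality is that the degree of any length-$k_2$ node is strictly smaller than the degree of any node of length $k<k_2$ (including $k_1$, and including other length-$k_2$ nodes is not needed — we only need that \emph{no shorter node} matches). A length-$k_2$ node has degree $r(u) + 2k_2(a-1)$; a length-$k$ node with $k_1<k<k_2$ has degree at least $a + 2k(a-1) \ge a + 2k_2(a-1) - 2(a-1) \cdot (k_2 - k)$. Hmm — this bound goes the wrong way, so I must be more careful: the $2k(a-1)$ term \emph{decreases} as $k$ decreases. The real point is to compare a length-$k_2$ node against a length-$(k_2-1)$ node, length-$(k_2-2)$ node, etc., and show the length-$k_2$ node always has strictly smaller degree than any of these — then induct. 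For $|u|=k_2$, degree $= r(u) + 2k_2(a-1) \ge 1 + 2k_2(a-1)$. For $|v|$ of length $k_2-1$ (which is $\ge k_1+1 > k_1$, hence interior or equal to $k_1$; either way its degree is at least $a + 2(k_2-1)(a-1)$, since even at length $k_1$ the degree is $a + 2k_1(a-1)$ and here we'd need $k_2 - 1 = k_1$). So I need: $r(u) + 2k_2(a-1) < a + 2(k_2-1)(a-1)$ for all valid $r(u) \le k_2$, i.e. $r(u) < a - 2(a-1) = 2 - a \le 0$ when $a \ge 2$ — that fails. So degrees do \emph{not} separate length $k_2$ from length $k_2-1$ this way; the $2(a-1)$ gap per unit length is too large.

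This means the correct approach is the opposite: a length-$k_2$ node has \emph{larger} degree than shorter nodes is also false in general. Let me reconsider. Actually the honest comparison: for $|u| = k_2$, $\deg(u) = r(u) + 2k_2(a-1)$ ranges in $[1 + 2k_2(a-1),\ k_2 + 2k_2(a-1)]$. For $|u| = k_2 - 1$ (assuming $k_2 - 1 > k_1$, so it is interior), $\deg(u) = a + r(u) + 2(k_2-1)(a-1)$ ranges in $[a + 1 + 2(k_2-1)(a-1),\ a + k_2 - 1 + 2(k_2-1)(a-1)]$. The gap between the base of the length-$k_2$ range and the top of the length-$(k_2-1)$ range is $1 + 2k_2(a-1) - (a + k_2 - 1 + 2(k_2-1)(a-1)) = 1 + 2(a-1) - a - k_2 + 1 = 2 + 2a - 2 - a - k_2 = a - k_2$. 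So when $k_2 < a$ the ranges are disjoint with length-$k_2$ nodes having the \emph{larger} degree; but the hypothesis $k_1 + 1 < k_2$ says nothing forcing $k_2 < a$. Therefore the degree argument as such cannot be the whole story, and \textbf{this is the main obstacle}: I expect the actual proof to use a finer invariant than raw degree — most plausibly, the \emph{number of neighbors that themselves have extremal degree}, or a characterization of length-$k_2$ nodes via the local structure (e.g. a length-$k_2$ node $u$ has exactly $r(u)$ neighbors of length $k_2 - 1$ and these can be recognized because each such neighbor has a neighbor-count profile matching the shorter layer). Concretely, I would proceed by: (i) establishing that within each length class the degree is determined by $r(u)$, hence the \emph{maximum degree over the whole graph} is attained and identify which length(s) realize it; (ii) if that singles out length $k_2$ we are done, and the hypothesis $k_1+1 < k_2$ together with possibly the known small cases handled elsewhere should guarantee it; otherwise (iii) iterate the argument on the subgraph induced by nodes of length $\le k_2$ after removing an automorphism-invariant extremal layer, or use the count of triangles / common neighbors to break ties. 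The cleanest version, which I would bet the paper uses, is: show the length-$k_2$ layer is characterized as the unique length class all of whose nodes have no neighbor of strictly greater degree \emph{and} maximal possible run-sensitivity, then invoke that automorphisms permute this invariantly-defined set; the $k_1 + 1 < k_2$ hypothesis is exactly what rules out the degenerate coincidences that occur when there are only two length classes or $k_2$ is too small. I would therefore spend the bulk of the write-up on steps (i)--(ii), treating the degree/run bookkeeping carefully, and flag the tie-breaking (iii) as the delicate point.
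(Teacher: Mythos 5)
Your plan starts on the right track---the paper's proof is indeed a degree-counting argument based on Lemma~\ref{pro:levDeg}---but it is derailed by an arithmetic slip that creates a phantom obstacle, and it never recovers. When you restrict Lemma~\ref{pro:levDeg} to $\Lev_{k_1,k_2;a}$, a node $u$ of length $k_2$ loses its $a+k_2(a-1)$ would-be neighbors of length $k_2+1$, so its degree is
\[
\bigl(a+r(u)+2k_2(a-1)\bigr)-\bigl(a+k_2(a-1)\bigr)=r(u)+k_2(a-1)\le k_2a,
\]
not $r(u)+2k_2(a-1)$ as you wrote. With the correct value the comparison you declared impossible goes through: a node of length $k_2-1$ is \emph{interior} (this is exactly where the hypothesis $k_1+1<k_2$ enters), so its degree is $r+a+2(k_2-1)(a-1)\ge 1+a+2(k_2-1)(a-1)\ge k_2a+1$, which strictly exceeds the degree of every length-$k_2$ node. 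Hence no length-$k_2$ node can map to a node of length $k_2-1$. Your conclusion that ``the degree argument as such cannot be the whole story'' and the subsequent speculation about finer invariants (tie-breaking by triangles, iterated layer removal, etc.) are chasing a problem that does not exist.

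There is a second, independent gap: your stated goal of showing that length-$k_2$ nodes are \emph{exactly} the nodes of extremal degree is false and cannot be repaired. For example, in $\Lev_{0,3;2}$ the empty string has degree $2$ while length-$3$ nodes have degrees $4$ through $6$, so the length-$k_2$ layer is neither the minimum- nor the maximum-degree layer in general. The paper does not need such a global separation: having ruled out images of length $k_2-1$ by degrees, it observes that no node of length $\le k_2-2$ can be adjacent to a node of length $k_2$ (any alignment needs at least two indels), so the image of the connected subgraph $\Ham_{k_2,a}$ must lie entirely in $V_{k_1,k_2-2;a}$ or entirely in $V_{k_2,k_2;a}$, and the former is impossible by cardinality. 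Your proposal contains no substitute for this connectivity-plus-cardinality step, and as written it ends with an acknowledged unresolved ``delicate point,'' so it does not constitute a proof.
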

\begin{proof}
Let $\sigma$ be an automorphism of $\Lev_{k_1,k_2;a}$ (recall the implicit assumption that $a\ge2$). We claim that $\sigma(V_{k_2,k_2;a})\subset V_{k_1,k_2-2;a}\cup V_{k_2,k_2;a}$. By contradiction suppose that there is a node $u$ such $|u|=k_2$ and $|\sigma(u)|=k_2-1$. Then, due to Lemma~\ref{pro:levDeg}:
\begin{align*}
\deg(u) &= r(u) + k_2(a-1)\\
\deg(\sigma(u)) &= r(\sigma(u)) + a + 2(k_2-1)(a-1).
\end{align*}
As a result, using that $1\le r(w)\le |w|$ for any non-empty string $w$, we obtain that
\begin{align*}
deg(\sigma(u)) &\ge 1+a+ 2(k_2-1)(a-1)\\
&\ge 1+a+(k_2-1)(a-1)+(k_2-1)\\
&= k_2 + k_2(a-1) + 1\\
&> \deg(u),
\end{align*}
which is not possible because automorphisms preserve node degrees. 

Finally, we show that $\sigma(V_{k_2,k_2;a})=V_{k_2,k_2;a}$. For this note that no vertex in $V_{k_2,k_2-2;a}$ can be a neighbor of a vertex in $V_{k_2,k_2;a}$ because any alignment between a word of length $k_2-2$ and another of length $k_2$ must include at least two indels. On the other hand, since $V_{k_2,k_2;a}$ is the vertex set of $\Ham_{k_2;a}$, which is a connected sub-graph of $\Lev_{k_1,k_2;a}$, $\sigma(V_{k_2,k_2;a})$ is the vertex set of a connected subgraph of $\Lev_{k_1,k_2;a}$. As a result, since $\sigma(V_{k_2,k_2;a})\subset V_{k_1,k_2-2;a}\cup V_{k_2,k_2;a}$, either $\sigma(V_{k_2,k_2;a})\subset V_{k_1,k_2-2;a}$ or $\sigma(V_{k_2,k_2;a})\subset V_{k_1,k_2;a}$. Since the former inclusion is not possible because $|V_{k_1,k_2-2;a}|<|V_{k_1,k_2;a}|$, we must have $\sigma(V_{k_2,k_2;a})\subset V_{k_2,k_2;a}$, which shows the proposition.
\end{proof}

\begin{lemma}
Let $k_1\ne k_2$ and $k_2\ge2$, and define $X:=\{0^{k_2},\ldots,(a-1)^{k_2}\}$. If $\sigma$ is an automorphism of $\Lev_{k_1,k_2;a}$ then $\sigma(X)=X$.
\label{lem:X}
\end{lemma}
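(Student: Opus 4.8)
\textbf{Proof proposal for Lemma~\ref{lem:X}.}

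The plan is to characterize the set $X=\{0^{k_2},\ldots,(a-1)^{k_2}\}$ by an internal graph-theoretic property that is preserved by automorphisms, and then argue that $X$ is exactly the set of nodes with that property. The natural candidate property is ``being a single-run string of maximal length $k_2$.'' By Lemma~\ref{lem:k2fix} (when $k_1+1<k_2$) or by direct inspection of the degree formula in Lemma~\ref{pro:levDeg} (to handle the remaining case $k_1+1=k_2$, i.e.\ $k_2=k_1+1$), an automorphism $\sigma$ maps $V_{k_2,k_2;a}$ into itself, so it suffices to identify the single-run strings within $\HH_{k_2,a}$ by their degree in $\Lev_{k_1,k_2;a}$. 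First I would compute, for a string $u$ of length $k_2$, its degree in $\Lev_{k_1,k_2;a}$: it has $k_2(a-1)$ neighbors of length $k_2$, and (since $k_1<k_2$ means length-$(k_2-1)$ strings are present) $r(u)$ neighbors of length $k_2-1$, for a total of $k_2(a-1)+r(u)$. (There are no length-$(k_2+1)$ neighbors since $k_2$ is the maximum length.) Hence among nodes of length $k_2$, the degree is an increasing function of the number of runs $r(u)$, and it is minimized precisely when $r(u)=1$, i.e.\ on $X$.

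The key steps, in order: (1) Establish $\sigma(V_{k_2,k_2;a})=V_{k_2,k_2;a}$. When $k_1+1<k_2$ this is Lemma~\ref{lem:k2fix}. When $k_2=k_1+1$, I would instead note that a length-$k_2$ node $u$ has degree $k_2(a-1)+r(u)\le k_2(a-1)+k_2$, while a length-$k_1=k_2-1$ node $w$ has degree $a+r(w)+2k_1(a-1)\ge a+1+2(k_2-1)(a-1)$; comparing these (using $a\ge2$, $k_2\ge2$) shows every length-$k_2$ node has strictly smaller degree than every length-$(k_2-1)$ node, so $\sigma$ cannot mix the two length classes, and connectivity of $\HH_{k_2,a}$ together with the fact that length-$(k_2-1)$ and length-$(k_2+1)$ nodes are not present on the ``wrong side'' forces $\sigma(V_{k_2,k_2;a})=V_{k_2,k_2;a}$. (Here one must double-check the degree inequality is strict in all subcases of $(k_2,a)$ with $k_2\ge2$; the edge case $k_2=2$, $a=2$ deserves an explicit check but still works since $2(a-1)+2=4 < a+1+2(a-1)=5$.) (2) On $V_{k_2,k_2;a}$, the map $u\mapsto\deg_{\Lev}(u)=k_2(a-1)+r(u)$ is injective as a function of $r(u)$, and $r(u)\ge1$ with equality iff $u\in X$. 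Since $\sigma$ preserves degree, $\sigma$ preserves the value of $r$ on $V_{k_2,k_2;a}$, and in particular $\sigma$ maps $\{u:|u|=k_2,\ r(u)=1\}=X$ bijectively onto itself. Thus $\sigma(X)=X$.

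The main obstacle I anticipate is cleanly handling the boundary case $k_2=k_1+1$, where Lemma~\ref{lem:k2fix} does not directly apply (it requires $k_1+1<k_2$): there one must rerun a degree comparison to rule out $\sigma$ sending a maximal-length string to one of length $k_2-1=k_1$, and verify the inequality $k_2(a-1)+k_2 < a+1+2(k_2-1)(a-1)$ holds for all admissible $(k_2,a)$ with $k_2\ge2$, $a\ge2$. Once $\sigma(V_{k_2,k_2;a})=V_{k_2,k_2;a}$ is in hand, the rest is the routine degree count above. A minor subtlety worth a sentence: when $a=1$ the set $X$ is a single node and the statement is trivial, but the standing assumption $a\ge2$ makes $X$ have $a\ge2$ elements and the degree-of-runs argument meaningful; and when $k_1=0$ the empty string and short strings are present but irrelevant since they are too far (in length) from $V_{k_2,k_2;a}$ to be neighbors, so they do not affect the degree computation restricted to length-$k_2$ nodes.
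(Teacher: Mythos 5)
Your overall strategy is the same as the paper's: confine the image of the relevant length-$k_2$ nodes to $V_{k_2,k_2;a}$, then use that the degree of a length-$k_2$ node equals $k_2(a-1)+r(u)$ to force $r(\sigma(x))=1$. Step (2) of your argument is correct and is exactly the paper's closing step. The problem is in step (1), in the boundary case $k_2=k_1+1$. First, your degree formula for a node $w$ of length $k_1=k_2-1$ is wrong: since $k_1$ is the minimum length, $w$ has no neighbors of length $k_1-1$ in $\Lev_{k_1,k_2;a}$, so its degree is $a+2k_1(a-1)$, not $a+r(w)+2k_1(a-1)$. Second, with the corrected formula your claimed strict separation of degrees between the two length classes fails: the difference between $a+2(k_2-1)(a-1)$ and the maximal degree $k_2(a-1)+k_2$ of a length-$k_2$ node is $(k_2-1)(a-2)$, which vanishes when $a=2$. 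Concretely, in $\Lev_{1,2;2}$ the nodes $01$ and $0$ both have degree $4$, and more generally an alternating binary string of length $k_2$ has degree $2k_2$, the same as every string of length $k_2-1$. So a pure degree count cannot show that $\sigma$ fixes the class $V_{k_2,k_2;a}$ setwise when $a=2$, and the connectivity argument you append does not rescue this, since it presupposes that the classes are not mixed.

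The repair is to prove only what the lemma needs, namely $\sigma(X)\subseteq V_{k_2,k_2;a}$ rather than $\sigma(V_{k_2,k_2;a})=V_{k_2,k_2;a}$. A constant string $x\in X$ has $r(x)=1$ and hence the \emph{minimum} degree $1+k_2(a-1)$ among length-$k_2$ nodes, and comparing this against the degree $a+2(k_2-1)(a-1)$ of any length-$k_1$ node gives a gap of $(a-1)(k_2-1)>0$ for all $a\ge2$, $k_2\ge2$. This is precisely the comparison the paper makes. Once $\sigma(X)\subseteq V_{k_2,k_2;a}$ is known, your run-count argument yields $\sigma(X)\subseteq X$, and bijectivity of $\sigma$ on the finite set $X$ gives $\sigma(X)=X$.
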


\begin{proof}
Let $\sigma$ be an automorphism of $\Lev_{k_1,k_2;a}$. 

We first show that $\sigma(X)\subset V_{k_2,k_2;a}$. Due to Lemma \ref{lem:k2fix}, this is direct when $k_1+1<k_2$. Hence assume that $k_1+1=k_2$; in particular, $V_{k_1,k_2;a}=V_{k_1,k_1;a}\cup V_{k_2,k_2;a}$. Suppose that $\sigma(X)\cap V_{k_1,k_1;a}\ne\emptyset$. Then, there would be $x\in X$ such that $|\sigma(x)|=k_1$. In particular, due to Lemma~\ref{pro:levDeg}, it would follow  that
\begin{align*}
\deg(\sigma(x)) &= a + 2(k_2-1)(a-1)\\
&> a+(k_2-1)(a-1)\\
&= 1+k_2(a-1)\\
&=\deg(x),
\end{align*}
which it is not possible because automorphisms preserve node degrees. As a result, $\sigma(X)\cap V_{k_1,k_1;a}=\emptyset$, i.e. $\sigma(X)\subset V_{k_2,k_2;a}$, which shows the claim.

Finally, since $\sigma(X)\subset V_{k_2,k_2;a}$, for each $x\in X$, Lemma~\ref{pro:levDeg} implies that $\deg(x)=1+k_2(a-1)$ and $\deg(\sigma(x)) = r(\sigma(x)) + k_2(a-1)$. Since $\deg(x)=\deg(\sigma(x))$, we must have $r(\sigma(x))=1$, i.e. $\sigma(x)\in X$, which shows the lemma.
\end{proof}

\begin{lemma}
Let $k_1\ne k_2$ and $k_2\ge2$. If $\sigma$ is an automorphism of $\Lev_{k_1,k_2;a}$ then the following apply.
\begin{enumerate}
\item There is a character bijection $\xi$ such that, for every alphabet character $\alpha$ and string $u\in V_{k_1,k_2;a}$,  $N_{\alpha}(u) = N_{\xi(\alpha)}(\sigma(u))$; in particular, $\sigma(\alpha^k)=\xi(\alpha)^k$ for each alphabet character $\alpha$ and $k_1\le k\le k_2$.
\label{auto_prop_char}
\item For all $u\in V_{k_1,k_2;a}$, $|\sigma(u)| = |u|$.
\label{auto_prop_length}
\item For all $u\in V_{k_1,k_2;a}$ with $|u|=k_2$, $r(\sigma(u)) = r(u)$.
\label{auto_prop_run}
\end{enumerate}
\label{lem:auto_properties}
\end{lemma}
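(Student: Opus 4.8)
The plan is to prove the three properties in order, using the preceding lemmas as the backbone. For part (\ref{auto_prop_char}), I would start from Lemma~\ref{lem:X}, which tells us that $\sigma$ permutes the set $X = \{0^{k_2},\ldots,(a-1)^{k_2}\}$; hence there is a permutation $\xi$ of $\{0,\ldots,a-1\}$ with $\sigma(\alpha^{k_2}) = \xi(\alpha)^{k_2}$ for every character $\alpha$. This $\xi$ is the desired character bijection. To promote this to the statement $N_\alpha(u) = N_{\xi(\alpha)}(\sigma(u))$ for all $u$, the idea is to express $N_\alpha(u)$ in terms of graph distances that $\sigma$ preserves. Using Theorem~\ref{thm:geo_dist} together with Corollary~\ref{cor:Lev-Ham} and Lemma~\ref{lem:oneRun}, for any $u$ with $|u| = k$ we have $d(u,\alpha^k) = \ell(u,\alpha^k) = k - N_\alpha(u)$ whenever $\alpha^k$ is in the graph (which it is, since $k_1 \le k \le k_2$). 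However, $\sigma$ need not fix $\alpha^k$ for $k<k_2$ yet — that is exactly what we are trying to show — so I would first handle length preservation in a way that does not presuppose it, or argue the single-run images inductively on length; in fact $\sigma(\alpha^{k_2-1})$ is a neighbor of $\sigma(\alpha^{k_2}) = \xi(\alpha)^{k_2}$ of one smaller length, and one can pin it down to $\xi(\alpha)^{k_2-1}$ using degree considerations (Lemma~\ref{pro:levDeg}: only single-run strings of length $k_2-1$ have the minimal number $1$ of shorter neighbors) combined with adjacency to $\xi(\alpha)^{k_2}$. Iterating downward gives $\sigma(\alpha^k) = \xi(\alpha)^k$ for all $k_1 \le k \le k_2$, and then the distance identity $d(u,\alpha^{k}) = k - N_\alpha(u)$ applied to both $u$ and $\sigma(u)$ (same length, once we have part (\ref{auto_prop_length})) yields $N_\alpha(u) = N_{\xi(\alpha)}(\sigma(u))$; summing over $\alpha$ gives $|u| = |\sigma(u)|$, which is part (\ref{auto_prop_length}).

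For part (\ref{auto_prop_length}) I would actually want to establish it before or alongside the full strength of (\ref{auto_prop_char}), since the distance argument above is cleanest when lengths already match. The cleanest route: show $\sigma$ maps each ``level set'' $V_{k,k;a}$ to itself. Lemma~\ref{lem:k2fix} (or Lemma~\ref{lem:X}'s proof technique when $k_1+1=k_2$) gives this for $k=k_2$. Then peel off: $V_{k_2-1,k_2-1;a}$ consists precisely of the vertices that are adjacent to $V_{k_2,k_2;a}$ but lie outside it and are themselves at mutual geodesic distances realized within a Hamming graph — more concretely, once $\sigma$ fixes $V_{k_2,k_2;a}$ setwise, a vertex $u$ of length $k_2-1$ has its $|u|(a-1)$ same-length neighbors plus $a + (k_2-1)(a-1)$ neighbors of length $k_2$; a careful count using Lemma~\ref{pro:levDeg} shows its image cannot have length $k_2-2$ (too many short neighbors would be forced) nor length $k_2$ (already occupied by images of length-$k_2$ vertices, and a parity/degree clash), so it must have length $k_2-1$. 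Proceeding inductively downward through the levels establishes (\ref{auto_prop_length}) in full.

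Part (\ref{auto_prop_run}) is then immediate: once we know $|\sigma(u)| = |u| = k_2$ for $u$ of maximal length, Lemma~\ref{pro:levDeg} gives $\deg(u) = a + r(u) + 2k_2(a-1)$ and $\deg(\sigma(u)) = a + r(\sigma(u)) + 2k_2(a-1)$, and since automorphisms preserve degree we get $r(\sigma(u)) = r(u)$.

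The main obstacle I anticipate is the bootstrapping in part (\ref{auto_prop_char})/(\ref{auto_prop_length}): the statements are mutually entangled (to read off $N_\alpha(u)$ from distances I want length preservation, but length preservation itself is most naturally proven by tracking where single-run strings and level sets go), so care is needed to order the sub-arguments without circularity. The edge case $k_1+1=k_2$, where $\Lev_{k_1,k_2;a}$ has only two levels and Lemma~\ref{lem:k2fix} does not directly apply, will need the separate degree argument already used in Lemma~\ref{lem:X}, and the borderline case $(k_2,a)=(2,2)$ may need to be checked by hand or excluded as elsewhere in the paper. Everything else is routine degree-counting via Lemma~\ref{pro:levDeg} and distance identities via Theorem~\ref{thm:geo_dist}, Corollary~\ref{cor:Lev-Ham}, and Lemma~\ref{lem:oneRun}.
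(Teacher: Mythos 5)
Your starting point (Lemma~\ref{lem:X} yields $\xi$ with $\sigma(\alpha^{k_2})=\xi(\alpha)^{k_2}$) and your treatment of Property~3 match the paper exactly. The gap is in how you try to reach Properties~1 and~2. You measure $N_\alpha(u)$ via the distance to $\alpha^{|u|}$, which forces you to know $\sigma(\alpha^k)$ for every length $k$ first, and you correctly notice that this entangles Properties~1 and~2 circularly. But the circularity is an artifact of choosing the wrong landmark: Lemma~\ref{lem:oneRun} gives $\ell(u,\alpha^{k_2})=\max\{|u|,k_2\}-\min\{N_\alpha(u),k_2\}=k_2-N_\alpha(u)$ for \emph{every} $u\in V_{k_1,k_2;a}$, whatever its length. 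Since $k_1<k_2$ makes geodesic distance equal Levenshtein distance (Theorem~\ref{thm:geo_dist}), and $\sigma(\alpha^{k_2})=\xi(\alpha)^{k_2}$ is already known, the identity $d(u,\alpha^{k_2})=d(\sigma(u),\sigma(\alpha^{k_2}))$ immediately yields $N_\alpha(u)=N_{\xi(\alpha)}(\sigma(u))$ for all $u$, with no induction on levels. Property~2 then follows by summing over $\alpha$, and $\sigma(\alpha^k)=\xi(\alpha)^k$ follows from Properties~1 and~2 applied to $u=\alpha^k$. That is the paper's entire argument; the entanglement you anticipate never arises.

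Your fallback plan (peel levels downward by degree counting) also does not close as stated. For a vertex $u$ of length $k_2-1$ whose image had length $k_2-2$, the equality $\deg(u)=\deg(\sigma(u))$ forces $r(\sigma(u))=r(u)+2(a-1)$ (by Lemma~\ref{pro:levDeg}), which is perfectly consistent whenever $k_2\ge 2a+1$ and $r(u)$ is small; so ``too many short neighbors would be forced'' is not a valid degree obstruction. What does work is adjacency: $u$ has neighbors of length $k_2$, their images lie in $V_{k_2,k_2;a}$ (preserved setwise), and no string of length $k_2-2$ can be adjacent to one of length $k_2$ since any alignment between them needs two indels. If you pursue that route, this adjacency argument must be made explicit at every level, and the same care is needed in your proposed identification of $\sigma(\alpha^{k_2-1})$. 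Finally, the case $(k_2,a)=(2,2)$ needs no special treatment here; it is excluded only in the determining-number theorem, not in this lemma.
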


\begin{proof}
Consider an automorphism $\sigma$ of $\Lev_{k_1,k_2;a}$, and let $X$ be as in Lemma \ref{lem:X}. In particular, $\sigma(X)=X$. Since $\sigma$ is bijective, there exists an alphabet bijection $\xi:\{0,\ldots,a-1\}\to\{0,\ldots,a-1\}$ such that $\sigma(x)=\xi(x)^{k_2}$, for each $x\in X$. As before, we denote the automorphism associated with $\xi$ with the same symbol.

Let $\alpha$ be an alphabet character, and $u$ a node in $\Lev_{k_1,k_2;a}$. Since $\alpha^{k_2}\in X$, it follows from Lemma \ref{lem:oneRun} that
\[\ell(\sigma(u),\sigma(\alpha^{k_2})) = \ell(\sigma(u),\xi(\alpha)^{k_2}) = k_2-N_{\xi(\alpha)}(\sigma(u)).\]
Since $\ell(u,\alpha^{k_2}) = k_2 - N_{\alpha}(u)$, and we must have $\ell(u,\alpha^{k_2}) = \ell(\sigma(u),\sigma(\alpha^{k_2}))$, Property \ref{auto_prop_char} follows. From this, Property \ref{auto_prop_length} is immediate because
\[\abs u = \sum_{\alpha=0}^{a-1} N_\alpha(u)=\sum_{\alpha=0}^{a-1} N_{\xi(\alpha)}(\sigma(u))=|\sigma(u)|.\]
Finally, due to Property 2 and Lemma \ref{lem:oneRun}, if $|u|=k_2$ then $\deg(\sigma(u))=r(\sigma(u)) + k_2(a-1)$. Likewise, $\deg(u)=r(u) + k_2(a-1)$. In particular, $r(u)=r(\sigma(u))$ because $\deg(u)=\deg(\sigma(u))$, which shows Property \ref{auto_prop_run}.
\end{proof}

\subsection{Proof of Theorem~\ref{thm:automorphisms}} Let $\sigma$ be an automorphism of $\Lev_{k_1,k_2;a}$, and $\xi$ be the corresponding character bijection described in Lemma \ref{lem:auto_properties}. Observe that $(\xi^{-1}\circ\sigma)$ preserves character counts because, due to property (1) in the lemma, $N_\alpha(u)=N_\alpha((\xi^{-1}\circ\sigma)(u))$ for each character $\alpha$ and $u\in V_{k_1,k_2;a}$. 

Next observe the string $0^{k_2-1}1$. From properties (2) and (3) in Lemma \ref{lem:auto_properties}, we find that $(\xi^{-1}\circ\sigma)(0^{k_2-1}1)$ is a string of length $k_2$ with two runs. In particular, since $(\xi^{-1}\circ\sigma)$ preserves character counts, $(\xi^{-1}\circ\sigma)(0^{k_2-1}1)\in\{0^{k_2-1}1,10^{k_2-1}\}$. If $(\xi^{-1}\circ\sigma)(0^{k_2-1}1) = 10^{k_2-1}$, define $\psi:=\rho$, otherwise define $\psi$ to be the identity. In either case, $\psi$ is its own inverse; in particular, if we define
\[\iota := \psi \circ \xi^{-1} \circ \sigma = \psi^{-1} \circ \xi^{-1} \circ \sigma,\]
then
\begin{equation}
\iota(0^{k_2-1}1)=0^{k_2-1}1.
\label{ide:initcond}
\end{equation}

We aim to show next that $\iota$ is the identity, focusing first on strings of length $k_2$ with two runs. In fact, note that $\iota$ preserves character and run counts because $\psi$ and $(\xi^{-1}\circ\sigma)$ do. Hence, if $\alpha\ne\beta$ are characters and $0<k<k_2$ then 
\begin{equation}
\iota(\alpha^{k_2-k}\beta^k) \in \{\alpha^{k_2-k}\beta^k,\beta^k\alpha^{k_2-k}\}.
\label{ide:1of2}
\end{equation}

First, let $\alpha=0$ and $\beta=1$. Assume that {$\iota(0^{k_2-k}1^k)=1^k0^{k_2-k}$ for some $0<k<k_2$.} Then, using Theorem~\ref{thm:geo_dist}, Corollary~\ref{cor:Lev-Ham}, and Equation~(\ref{ide:initcond}), we find the following distances are
    \begin{align*}
        d(0^{k_2-k}1^k,0^{k_2-1}1) &= h(0^{k_2-k}1^k,0^{k_2-1}1)= k-1;\\
        d(\iota(0^{k_2-k}1^k),\iota(0^{k_2-1}1)) &= h(1^k0^{k_2-k},0^{k_2-1}1)= k+1;
    \end{align*}
which is not possible because automorphisms preserve distances. Thus $\iota(0^{k_2-k}1^k)=0^{k_2-k}1^k$, for all $0<k<k_2$. 

Second, if $\alpha=1$, $\beta=0$, {and $\iota(1^{k_2-k}0^k)=0^k1^{k_2-k}$ for some $0<k<k_2$, then $\iota(1^{k_2-k}0^k) = 0^k1^{k_2-k} = \iota(0^k1^{k_2-k})$,} which is not possible because $\iota$ is one-to-one. {Therefore $\iota(1^{k_2-k}0^k)=1^{k_2-k}0^k$, for all $0<k<k_2$.}

Third, let $\alpha\ne1$ and $\beta = 1$. Assume that $\iota(\alpha^{k_2-k}1^k)\ne\alpha^{k_2-k}1^k$ for some $0<k<k_2$. Then, due to Equation~(\ref{ide:1of2}):
\begin{align*}
d(\alpha^{k_2-k}1^k,0^{k_2-k}1^k) &= h(\alpha^{k_2-k}1^k,0^{k_2-k}1^k)=(k_2-k)\indicator[\alpha\ne0];\\
d(\iota(\alpha^{k_2-k}1^k),\iota(0^{k_2-k}1^k)) &= h(1^k\alpha^{k_2-k},0^{k_2-k}1^k)\\
        &=\begin{cases}
        k_2, & {0<k< k_2/2} \text{ and } \alpha\ne0;\\
        2k, & {0<k< k_2/2} \text{ and } \alpha= 0;\\
        2(k_2-k), & {k_2/2\le k<k_2.}
        \end{cases}
\end{align*}
In particular, $d(\alpha^{k_2-k}1^k,0^{k_2-k}1^k)\ne d(\iota(\alpha^{k_2-k}1^k),\iota(0^{k_2-k}1^k))$, which is a contradiction because $\iota$ must preserve distances. So, $\iota(\alpha^{k_2-k}1^k)=\alpha^{k_2-k}1^k$ {for all $\alpha\ne1$ and $0<k<k_2$.}

Finally, let $\alpha\ne\beta$ be arbitrary characters in the alphabet. {If $\alpha = 1$ let $\gamma=0$, otherwise let $\gamma=1$.} Through our second and third cases we have shown that $\iota(\alpha^{k_2-k}\gamma^k)=\alpha^{k_2-k}\gamma^k$ {for all $0<k<k_2$.} Next, assume that $\iota(\alpha^{k_2-k}\beta^k) \ne \alpha^{k_2-k}\beta^k$ {for some $0<k<k_2$. Then, as we have argued before we find that:}
\begin{align*}
d(\alpha^{k_2-k}\beta^k,\alpha^{k_2-k}\gamma^k) &= h(\alpha^{k_2-k}\beta^k,\alpha^{k_2-k}\gamma^k)=k\indicator[\beta\ne \gamma];\\
d(\iota(\alpha^{k_2-k}\beta^k),\iota(\alpha^{k_2-k}\gamma^k)) 
&= h(\beta^k\alpha^{k_2-k},\alpha^{k_2-k}\gamma^k)\\
&=\begin{cases}
k_2, & k_2/2\le k<k_2\text{ and } \beta \ne \gamma;\\
2(k_2-k), & k_2/2\le k<k_2\text{ and } \beta = \gamma;\\
2k, & 0<k< k_2/2.
\end{cases}
\end{align*}
But then, once again we find that $d(\alpha^{k_2-k}\beta^k,\alpha^{k_2-k}\gamma^k)\ne d(\iota(\alpha^{k_2-k}\beta^k),\iota(\alpha^{k_2-k}\gamma^k))$, which is not possible. Consequently, for all $\alpha\ne\beta$ and $0<k<k_2$, $\iota(\alpha^{k_2-k}\beta^k)=\alpha^{k_2-k}\beta^k$.

Thus far, we have shown that if $u$ is a string where $|u|=k_2$ and $r(u)\le 2$ then $\iota(u)=u$. 

Let $R_{k_2,a}=\{r_1,\ldots,r_n\}$ be as defined by Equation~(\ref{eq:res_Set_Fix_k}). Note, for any $r_i\in R_{k_2,a}$ that $|r_i|=k_2$ and $r(r_i)=k_2$, implying that $\iota(r_i)=r_i$. Further, from Lemma~\ref{lem:res}, the transformation $\Phi(u) := \big(d(u,r_1),\ldots,d(u,r_n)\big)$ {is one-to-one over nodes} of length $k_2$. Consider an arbitrary node $u$ such that $|u| = k_2$. From Theorem \ref{lem:auto_properties}, we know that $|\iota(u)| = k_2$. As a result:
\begin{align*}
\Phi(u) &= \big(d(u,r_1),\ldots,d(u,r_n)\big)\\
&= \big(d(\iota(u),\iota(r_1)),\ldots,d(\iota(u),\iota(r_n))\big)\\
&= \big(d(\iota(u),r_1),\ldots,d(\iota(u),r_n)\big)\\
&= \Phi\big(\iota(u)\big).
\end{align*}
In particular, since $\Phi$ is one-to-one over vectors of length $k_2$, $\iota(u)=u$ for all node $u$ such that $|u|=k_2$.

Finally, we prove by induction $k$, with $k_1\le k\le k_2$, that $\iota(v)=v$ for all $v\in V_{k,k_2;a}$. The base case with $k=k_2$ was just shown above. Next, consider a $k_1\le k<k_2$ and suppose that $\iota(v)=v$, for all $v\in V_{k+1,k_2;a}$. If $k=0$, property \ref{auto_prop_length} of Lemma \ref{lem:auto_properties} implies that $\iota(\epsilon)=\epsilon$; in particular, $\iota(v)=v$ for all $v\in V_{k,k_2;a}$. Instead, if $k>0$, consider a string $u$ of length $k$. From Lemma~\ref{pro:levDeg}, $u$ has $a+|u|(a-1) \ge 3$ neighbors of length $k+1$. Let $v_1,v_2,$ and $v_3$ be different neighbors of $u$ of length $k+1$. By the inductive hypothesis: $\iota(v_i)=v_i$, for $1\le i \le 3$. So, since $\iota$ is an automorphism, $v_1$, $v_2$, and $v_3$ are also neighbors of $\iota(u)$. The end of the proof relies on the following result.

\begin{lemma}
(Adjusted from \cite[Theorem 4]{Lev01}.) 
A node $v$ in $\Lev_a$ is uniquely determined by three of its different neighbors of length $|v|+1$. 
\label{lem:supseq_reconstruction}
\end{lemma}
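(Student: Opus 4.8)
The plan is to reduce the reconstruction of $v$ from three of its longer neighbors to the classical result of Levenshtein on reconstructing a string from its supersequences. Recall that a neighbor of $v$ of length $|v|+1$ in $\Lev_a$ is exactly a string obtained from $v$ by a single insertion, i.e.\ a supersequence of $v$ obtained by one edit. So the claim is equivalent to: any string $v$ is uniquely determined among all strings of length $|v|$ by any three distinct single-insertion supersequences of it. Levenshtein's Theorem~4 in \cite{Lev01} gives precisely a bound of the form ``$N+1$ insertions-supersequences suffice to reconstruct a length-$n$ string for $N$ deletions,'' and for the case $N=1$ this specializes to the number three (since the worst-case overlap between the single-insertion balls of two distinct strings of the same length has size two). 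So the core of the proof is to quote \cite[Theorem 4]{Lev01} with $N=1$ and translate its statement into the graph-theoretic language of $\Lev_a$.

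First I would make explicit the dictionary between the two settings: for a node $v$, its neighbors of length $|v|+1$ in $\Lev_a$ are exactly the elements of the ``insertion ball'' $I_1(v)$ of strings obtained from $v$ by inserting one character; two nodes $v, v'$ of the same length are adjacent to a common longer node $w$ iff $w \in I_1(v) \cap I_1(v')$. Then I would invoke the combinatorial fact (a special case of Levenshtein's analysis) that for distinct strings $v \neq v'$ with $|v| = |v'|$, one has $|I_1(v) \cap I_1(v')| \le 2$. Given this, suppose $v$ and $v'$ were distinct strings of the same length both adjacent to three common distinct longer neighbors $v_1, v_2, v_3$; then $\{v_1,v_2,v_3\} \subseteq I_1(v) \cap I_1(v')$, contradicting $|I_1(v) \cap I_1(v')| \le 2$. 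This forces $v = v'$, which is exactly the uniqueness asserted. (One must also note that a node of length $|v|+1$ determines $|v|$, so ``the three neighbors have length $|v|+1$'' already pins down the length of $v$, and then the overlap argument pins down $v$ itself.)

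The main obstacle is making sure the cited bound is applied correctly: \cite{Lev01} states its reconstruction result in terms of a general number $N$ of deletions/insertions and in terms of the maximum size of the intersection of deletion (or insertion) balls, and one needs to check that the relevant worst-case intersection size for a \emph{single} insertion is indeed $2$ and not larger, and that this is uniform over all alphabets $a \ge 2$ and all string lengths. I would verify the $|I_1(v)\cap I_1(v')|\le 2$ claim directly by a short argument: if $w$ is a common single-insertion supersequence of $v$ and $v'$, then deleting one character of $w$ yields $v$ and deleting another yields $v'$; since $v \neq v'$, these two deletion positions differ, and the structure of $w$ (the run containing the two deletion positions, together with the requirement that deleting either endpoint of a run gives the same string) limits the number of unordered pairs of deletion positions that produce the specific pair $\{v,v'\}$ to at most two. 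This run-based case analysis is essentially the content behind Levenshtein's formula, and spelling it out — or citing it precisely — is the only delicate point; everything else is a one-line pigeonhole.

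Finally, with Lemma~\ref{lem:supseq_reconstruction} in hand, the proof of Theorem~\ref{thm:automorphisms} concludes: since $\iota$ is an automorphism fixing $v_1, v_2, v_3$, and these are three distinct length-$(k+1)$ neighbors of both $u$ and $\iota(u)$, and $\iota$ preserves length so $|\iota(u)| = |u| = k$, Lemma~\ref{lem:supseq_reconstruction} gives $\iota(u) = u$; this completes the inductive step and hence shows $\iota$ is the identity on all of $V_{k_1,k_2;a}$, so $\sigma = \xi \circ \psi$ is a character bijection, a reversal, or their composition.
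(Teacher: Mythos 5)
Your main line of argument---translate ``neighbors of length $|v|+1$'' into single-insertion supersequences, invoke Levenshtein's bound that two distinct equal-length strings share at most two common supersequences of length $|v|+1$, and conclude by pigeonhole---is exactly how the paper treats this lemma: the paper offers no proof at all beyond the citation of \cite[Theorem 4]{Lev01}, so your reduction and the explicit remark that the neighbors' length pins down $|v|$ make the proposal, if anything, more complete than the source. One caveat on your optional ``direct verification'' of $|I_1(v)\cap I_1(v')|\le 2$: as sketched, you bound the number of unordered pairs of deletion positions \emph{within a single} common supersequence $w$ that yield the pair $\{v,v'\}$, but what is needed is a bound on the number of distinct $w$'s; these are different counts, and the run-based case analysis has to be reorganized around the positions $i$ and $j$ of the first and last disagreement of $v$ and $v'$ (any common $w$ is forced outside a window determined by $i$ and $j$, leaving at most two candidates). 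Since you present that verification only as a backup to the citation, this does not affect the correctness of the proof as a whole.
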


The lemma implies that $\iota(u)=u$ for all $|u|=k$, i.e. $\iota(v)=v$ for all $v\in V_{k,k_2;a}$.

The above shows that $\iota=\psi^{-1} \circ \xi^{-1} \circ \sigma$ is the identity. In particular, $\sigma=\xi \circ \psi$, where $\xi$ is a character bijection and $\psi$ is either the string reversion or the identity, which completes the proof of Theorem~\ref{thm:automorphisms}.

\section{Determining Number of Levenshtein Graphs}
\label{sec:detSet}

For a graph $G= (V,E)$, a set of nodes $D\subset V$ is called determining when the identity is the only $\sigma\in\mathbb{A}(G)$ such that $\sigma (x) = x$, for all $x\in D$ (this is equivalent to the definition given at the end of the Introduction). The determining number of $G$, denoted $\text{\normalfont Det}(G)$, is the size of its smallest determining set. (A graph with a trivial automorphism group has a determining number of $0$.)

We implicitly encountered determining sets of Levenshtein graphs in the proof of Theorem \ref{thm:automorphisms}, which essentially uses that {$\{0^{k_2},\ldots,(a-1)^{k_2},w\}$, with $w$ any non-palindromic string such that $k_1\le|w|\le k_2$,} is a determining set of $\Lev_{k_1,k_2;a}$ when $k_1\ne k_2$ and $k_2\ge2$.

Since $\Lev_{0,1;a}$ is isomorphic to $K_{a+1}$, it follows from~\cite{Bou06} that $\text{Det}(\Lev_{0,1;a}) = a$. On the other hand, since $\Lev_{k,k;a}$ is isomorphic to $\Ham_{k,a}$, which may be described as the Cartesian product of $k$ copies of $K_a$, tight bounds on $\text{Det}(\Lev_{k,k;a})$ follow from~\cite{Bou09}.

On the other hand, it can be shown by an exhaustive test that if $k_1\ne k_2$ and $(k_2,a)=(2,2)$ then $\text{Det}(\Lev_{k_1,2;2}) = 2>\lfloor a/k_2\rfloor$. In this case, $\{01,00\}$ is one of a few minimal determining sets. Our following result addresses the determining number of the remaining Levenshtein graphs.

\begin{theorem}
If $k_1\ne k_2$, $k_2\ge 2$, and $(k_2,a)\ne (2,2)$ then
\[\text{\normalfont Det}(\Lev_{k_1,k_2;a})=\Big\lceil\frac{a}{k_2} \Big\rceil.\]
\label{thm:detNo}
\end{theorem}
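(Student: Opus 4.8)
The plan is to establish the formula $\text{Det}(\Lev_{k_1,k_2;a})=\lceil a/k_2\rceil$ by proving a matching upper and lower bound. For the upper bound, I would exhibit an explicit determining set of the claimed size. Recall from the proof of Theorem~\ref{thm:automorphisms} that every automorphism $\sigma$ of $\Lev_{k_1,k_2;a}$ has the form $\sigma=\xi\circ\psi$, where $\xi$ is a character bijection and $\psi\in\{\text{id},\rho\}$. Hence a set $D$ is determining precisely when the only pair $(\xi,\psi)$ fixing every element of $D$ is $(\text{id},\text{id})$. The natural candidate is to take strings of length $k_2$ that (i) break the reversal symmetry by being non-palindromic, and (ii) collectively ``use up'' all $a$ alphabet characters in a way that pins down $\xi$. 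A single non-palindromic string of length $k_2$ involves at most $k_2$ distinct characters, so we need at least $\lceil a/k_2\rceil$ strings to see all $a$ characters; I would construct $m:=\lceil a/k_2\rceil$ non-palindromic length-$k_2$ strings $w^{(1)},\dots,w^{(m)}$ whose character-sets cover $\{0,\dots,a-1\}$ and such that fixing all of them forces $\xi=\text{id}$. Concretely, one can arrange the first string to contain $0,1,\dots,k_2-1$ in a non-palindromic pattern, the next to contain $k_2,\dots,2k_2-1$, etc., padding the last string with already-seen characters; a small amount of care (e.g.\ ensuring each $w^{(i)}$ has at least two distinct characters and is not a palindrome, which is possible once $k_2\ge2$ and $(k_2,a)\neq(2,2)$) makes each individual constraint $\sigma(w^{(i)})=w^{(i)}$ rule out any nontrivial $\psi$ and progressively rule out any nontrivial $\xi$. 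This gives $\text{Det}\le\lceil a/k_2\rceil$.

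For the lower bound, suppose $D$ is any determining set with $|D|<\lceil a/k_2\rceil$. Every node in $D$ has length at most $k_2$, so involves at most $k_2$ distinct alphabet characters; therefore the union of the character-sets of the elements of $D$ has size at most $k_2\cdot|D|<k_2\lceil a/k_2\rceil$. The point is that this bound is not by itself $<a$, so I would instead argue more carefully: if $|D|\le\lceil a/k_2\rceil-1$, then $\sum_{w\in D}|\{$characters in $w\}|\le k_2(\lceil a/k_2\rceil-1)<a$ using $k_2\lceil a/k_2\rceil< a+k_2$, hence $\le a-1$; so there exist two distinct characters $\alpha,\beta$ neither appearing in any $w\in D$ — wait, we only get one missing character from this count. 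To get two, note that actually $k_2(\lceil a/k_2\rceil -1)\le a - 1$ only guarantees one omitted character, so I would refine: pick the character $\alpha$ omitted by all of $D$, and let $\beta$ be \emph{any} character appearing in some $w\in D$ with the fewest occurrences; better yet, I would directly use that a character bijection fixing all of $D$ must fix every character that appears in some element of $D$, but is completely free on the omitted character(s). If exactly one character $\alpha$ is omitted, then a bijection fixing all appearing characters must fix $\alpha$ too (a bijection of a set fixing all but one element fixes that element), so this count alone is insufficient — the obstruction is real and is exactly why the reversal automorphism must be invoked.

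Therefore the lower bound argument must combine the counting with the reversal symmetry: I would show that if $|D|<\lceil a/k_2\rceil$ then either some nontrivial character bijection $\xi$ fixes all of $D$ (when at least two characters are omitted, i.e.\ when $k_2(\lceil a/k_2\rceil-1)\le a-2$), or else — in the boundary case where exactly one character is omitted — every element of $D$ has length $k_2$ and is forced to be a palindrome, so that $\rho$ fixes all of $D$ and is a nontrivial automorphism (this is where $(k_2,a)\neq(2,2)$ and $k_2\ge 2$ enter, ensuring $\rho\neq\text{id}$). Unwinding: $k_2\lceil a/k_2\rceil - a \in\{0,1,\dots,k_2-1\}$ is the number of ``wasted slots'', and the case $=k_2-1$ (one omitted character) is precisely the delicate one; there, each of the $\lceil a/k_2\rceil-1$ strings in $D$ must have length exactly $k_2$ with all distinct characters, and I must produce a reversal or a transposition-type automorphism fixing them — if any such string is non-palindromic, adapt $\xi$; if all are palindromic, use $\rho$. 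The main obstacle is exactly this boundary case analysis: showing that when the counting is tightest, a symmetry (reversal or a single transposition among the omitted/rarely-used characters) still survives and fixes $D$ pointwise, which requires carefully tracking which character bijections are compatible with the precise multiset structure of the short strings in $D$.
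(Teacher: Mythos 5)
Your overall strategy---an explicit set of $\lceil a/k_2\rceil$ length-$k_2$ strings for the upper bound, and for the lower bound a count of the alphabet characters that must appear in a determining set followed by a boundary-case analysis invoking reversal---is the same as the paper's, and you correctly locate where the difficulty sits. But both halves have genuine gaps. In the lower bound, your resolution of the boundary case is wrong as stated: if $|D|=\lceil a/k_2\rceil-1$, then the total number of character slots in $D$ is at most $k_2(\lceil a/k_2\rceil-1)\le a-1$, while by the ``at least $a-1$ characters must appear'' lemma it is at least $a-1$; hence it equals $a-1$ and \emph{every character occurring in $D$ occurs exactly once}. Consequently no string of $D$ of length $\ge2$ is a palindrome, and the plain reversal $\rho$ fixes nothing---the opposite of what you assert. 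The correct witness (the paper's Lemma~\ref{lem:Detlb}) is $\mu\circ\rho$, where $\mu$ is the character bijection that reverses the label order \emph{within each string} of $D$ and fixes the omitted character; this fixes $D$ pointwise, is an automorphism by Theorem~\ref{thm:automorphisms}, and is not the identity. You gesture at ``adapt $\xi$'' but explicitly leave this as the main obstacle, so the lower bound is not established.

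In the upper bound, requiring each $w^{(i)}$ to be non-palindromic is not sufficient: what must be excluded is every composition $\xi\circ\rho$ with $\xi$ an \emph{arbitrary} character bijection, and $01$ is non-palindromic yet fixed by $\xi\circ\rho$ when $\xi$ transposes $0$ and $1$. This is exactly why the paper's constructions either begin a string with a doubled character ($0^{k_2-a+2}1\cdots(a-2)$, or $0012\cdots(k_2-2)$), so that $\xi(\rho(w))$ ends in a doubled character while $w$ does not, or use two overlapping strings ($01$ and $12$ when $k_2=2$) so that fixing the first forces $\xi(1)=0$ and the second then cannot be fixed. Your construction needs an analogous device; note also that covering $a-1$ characters (rather than all $a$) already pins down the character bijection, which is what the paper exploits. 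With these two repairs your argument coincides with the paper's proof.
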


The remainder of this section is devoted to stating and proving two auxiliary results and showing this theorem.

\begin{lemma}
If $k_1\ne k_2$ and $k_2\ge 2$ then at least $(a-1)$ of the $a$ alphabet characters must be represented in a determining set of $\Lev_{k_1,k_2;a}$. 
\label{lem:a-1}
\end{lemma}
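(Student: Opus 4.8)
The statement is that a determining set $D$ of $\Lev_{k_1,k_2;a}$ must contain nodes in which all but at most one of the $a$ alphabet characters appear. The natural strategy is contrapositive: suppose two distinct characters $\alpha\ne\beta$ are both entirely absent from $D$ (i.e. $N_\alpha(d)=N_\beta(d)=0$ for every $d\in D$), and exhibit a nontrivial automorphism $\sigma$ that fixes every element of $D$, contradicting that $D$ is determining. The obvious candidate for $\sigma$ is the transposition character bijection $\tau_{\alpha\beta}$ that swaps $\alpha$ and $\beta$ and fixes every other character. By Lemma~\ref{prop:givenAuto}, $\tau_{\alpha\beta}$ is indeed an automorphism of $\Lev_{k_1,k_2;a}$. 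If no $d\in D$ uses either $\alpha$ or $\beta$, then $\tau_{\alpha\beta}$ acts as the identity on $D$, yet $\tau_{\alpha\beta}$ is not the identity automorphism (for instance it moves the string $\alpha^{k_2}$ to $\beta^{k_2}\ne\alpha^{k_2}$, and both lie in $V_{k_1,k_2;a}$ since $k_2\ge 2\ge k_1$ is compatible with $k_1\le k_2$; one should just note $k_2\ge 1$ so these single-run strings of length $k_2$ are genuine vertices). Hence $D$ cannot be determining, which proves the claim.

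\textbf{Steps, in order.} First I would restate the goal as: if $D\subseteq V_{k_1,k_2;a}$ and there exist two distinct characters $\alpha,\beta$ with $N_\alpha(d)=0$ and $N_\beta(d)=0$ for all $d\in D$, then $D$ is not a determining set. Second, introduce $\tau=\tau_{\alpha\beta}$, the character bijection transposing $\alpha$ and $\beta$, and invoke Lemma~\ref{prop:givenAuto} to conclude $\tau\in\mathbb{A}(\Lev_{k_1,k_2;a})$. Third, observe that for any string $d$ with $N_\alpha(d)=N_\beta(d)=0$, applying $\tau$ character-by-character changes nothing, so $\tau(d)=d$; thus $\tau$ fixes every element of $D$. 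Fourth, verify $\tau\ne\mathrm{id}$: since $k_2\ge 2$ (in particular $k_2\ge 1$), the string $\alpha^{k_2}\in V_{k_1,k_2;a}$ and $\tau(\alpha^{k_2})=\beta^{k_2}\ne\alpha^{k_2}$. Fifth, conclude that the only automorphism fixing $D$ pointwise is not forced to be the identity, so $D$ is not determining. Contrapositively, every determining set must have at least $a-1$ characters represented among its nodes.

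\textbf{Main obstacle.} There is very little real difficulty here; the only thing to be careful about is the bookkeeping of which characters ``appear'' in $D$ versus the full alphabet, and making sure the candidate vertex witnessing $\tau\ne\mathrm{id}$ actually lies in $V_{k_1,k_2;a}$ — which holds because single-run strings of length $k_2$ are vertices whenever $k_1\le k_2$ and $k_2\ge 1$, both of which are in force. One could alternatively witness $\tau\ne\mathrm{id}$ using any string containing $\alpha$, but $\alpha^{k_2}$ is the cleanest choice and connects directly to the set $X$ used in Lemma~\ref{lem:X}. No appeal to the finer structure of $\mathbb{A}(\Lev_{k_1,k_2;a})$ from Theorem~\ref{thm:automorphisms} is needed for this lemma; only the existence of transposition automorphisms (Lemma~\ref{prop:givenAuto}) is used, so the argument is self-contained and short.
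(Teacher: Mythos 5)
Your proposal is correct and follows essentially the same argument as the paper: if two characters $\alpha\ne\beta$ are absent from $D$, the transposition character bijection swapping them (an automorphism by Lemma~\ref{prop:givenAuto}) fixes $D$ pointwise yet is not the identity, contradicting that $D$ is determining. Your extra step of witnessing $\tau\ne\mathrm{id}$ on the vertex $\alpha^{k_2}$ is a harmless (and slightly more careful) addition to what the paper states without comment.
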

\begin{proof}
Let $D=\{d_1,...,d_n\}$, with $n\ge1$, be a determining set, and $S$ the set of alphabet characters that occur at least once in $D$, i.e., $S = \{(d_i)_j : 1\le i \le n, 1\le j \le |d_i|\}$. If $|S|<a-1$ then there would exist at least two distinct alphabet characters $\alpha,\beta\notin S$. Let $\mu$ be the character bijection that swaps $\alpha$ and $\beta$, i.e. $\mu(\alpha)=\beta$ and $\mu(\beta)=\alpha$, but acts as the identity on every other character. Then, $\mu(d)=d$, for all $d\in D$; in particular, since $\mu$ is not the identity, $D$ could not be a determining set. Since this is not possible, $|S|\ge a-1$, which shows the lemma. 
\end{proof}

\begin{lemma}
If $k_1\ne k_2$ and $k_2\ge 2$ then $\text{\normalfont Det}(\Lev_{k_1,k_2;a}) \ge \Big\lceil \frac{a}{k_2} \Big\rceil$.
\label{lem:Detlb}
\end{lemma}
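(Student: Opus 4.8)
\textbf{Proof proposal for Lemma~\ref{lem:Detlb}.} The plan is to produce, for any candidate determining set $D$ of size below $\lceil a/k_2\rceil$, a non-identity automorphism of $\Lev_{k_1,k_2;a}$ that fixes every node of $D$, thereby contradicting the determining property. By Theorem~\ref{thm:automorphisms}, every automorphism is a character bijection $\xi$, the reversal $\rho$, or $\xi\circ\rho$; so it suffices to exhibit a suitable character bijection. First I would count how many distinct alphabet characters can possibly appear in $D$: a set $D$ with $n$ nodes, each of length at most $k_2$, contains at most $n k_2$ character occurrences, hence involves at most $n k_2$ distinct characters. If $n<\lceil a/k_2\rceil$, then $n k_2 < a$ (since $n\le \lceil a/k_2\rceil-1$ forces $nk_2 \le a - 1 < a$ when $k_2\nmid a$, and $nk_2 \le a-k_2 < a$ when $k_2\mid a$), so at least two alphabet characters $\alpha,\beta$ are entirely absent from all strings in $D$.

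With two absent characters $\alpha\ne\beta$ in hand, I would let $\mu$ be the transposition character bijection swapping $\alpha$ and $\beta$ and fixing all other characters — exactly as in the proof of Lemma~\ref{lem:a-1}. Since $\alpha,\beta$ never occur in any $d\in D$, applying $\mu$ character-by-character leaves each $d$ unchanged, i.e. $\mu(d)=d$ for all $d\in D$. By Lemma~\ref{prop:givenAuto}, $\mu$ is an automorphism of $\Lev_{k_1,k_2;a}$, and it is clearly not the identity because $a\ge 2$ and $\mu$ moves $\alpha$. Hence $D$ fixes a non-trivial automorphism and cannot be determining. Since this holds for every $D$ with $|D|<\lceil a/k_2\rceil$, every determining set has size at least $\lceil a/k_2\rceil$, which is the claim.

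The only place requiring care is the arithmetic translating $|D| = n < \lceil a/k_2\rceil$ into $n k_2 \le a-2$ (so that \emph{two} characters, not merely one, are missing). Writing $a = qk_2 + s$ with $0\le s<k_2$, one has $\lceil a/k_2\rceil = q$ if $s=0$ and $q+1$ if $s>0$; in the first case $n\le q-1$ gives $nk_2\le a-k_2\le a-2$ since $k_2\ge2$, and in the second case $n\le q$ gives $nk_2 \le a-s \le a-1$, and in fact $nk_2 = a-s \le a-1 < a$ already suffices to lose one character — but to guarantee two missing characters I should note that $D$ occupies at most $nk_2$ occurrences which are at most $nk_2$ distinct symbols, so when $nk_2\le a-1$ there are $\ge a-(a-1)=1$\,\dots\ so in the boundary sub-case $s=1$ this only forces one missing character. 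This edge case is precisely why the hypothesis $(k_2,a)\ne(2,2)$ is needed and why the statement is an inequality: for the lower bound $\mathrm{Det}\ge\lceil a/k_2\rceil$ it is actually enough to observe that a determining set of size $n$ with $nk_2<a$ misses a character, hence (combined with Lemma~\ref{lem:a-1}, which already forces $a-1$ characters to appear) one checks the counting bound $|D|\cdot k_2 \ge a-1$, giving $|D|\ge \lceil (a-1)/k_2\rceil = \lceil a/k_2\rceil$ except exactly when $k_2\mid(a-1)$. I expect the main obstacle to be handling that residual case $k_2\mid (a-1)$ cleanly — likely by a slightly sharper occurrence count (a single-run string of length $k_2$ contributes only one distinct character, and in general the $n$ strings cannot simultaneously be maximally "character-dense" while also fixing the needed structure), or by invoking the explicit determining set from the discussion after Theorem~\ref{thm:detNo} to see the bound is met with equality and arguing minimality directly.
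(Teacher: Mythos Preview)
Your counting argument is correct and handles most cases, but you have yourself identified the genuine gap: when $a\equiv 1\pmod{k_2}$ and $n=(a-1)/k_2$, the bound $nk_2\le a-1$ only guarantees \emph{one} missing character, so the transposition trick (Lemma~\ref{lem:a-1}) is unavailable. Your proposed fixes do not close this gap: a ``sharper occurrence count'' cannot help, because the extremal configuration $d_i=\ell_{i-1}(\ell_{i-1}+1)\cdots(\ell_i-1)$ (all characters distinct, each string of length exactly $k_2$) really does use $a-1$ characters with $n=(a-1)/k_2$ strings; and invoking the explicit constructions after Theorem~\ref{thm:detNo} is circular, since those establish the upper bound, not the lower bound.

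The paper's proof supplies exactly the missing idea. In the boundary case one has $\ell_n:=\sum_i|d_i|=|S|=a-1$, which forces every character in $D$ to occur \emph{exactly once}; up to a character bijection one may then assume $d_i=\ell_{i-1}(\ell_{i-1}+1)\cdots(\ell_i-1)$. Now define $\mu$ to fix the missing character $a-1$ and to reverse each block, i.e.\ $\mu(j)=\ell_i+\ell_{i-1}-1-j$ for $\ell_{i-1}\le j<\ell_i$. Then $\mu\circ\rho$ is an automorphism (by Theorem~\ref{thm:automorphisms}) fixing every $d_i$, yet it is not the identity since it moves $0(a-1)$. This contradicts $D$ being determining and yields $\ell_n\ge a$, hence $nk_2\ge a$ and $n\ge\lceil a/k_2\rceil$. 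The crucial ingredient you were missing is that in the tight case a \emph{reversal-based} automorphism, not a pure character bijection, is what breaks $D$.
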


\begin{proof}
Let $D=\{d_1,...,d_n\}$, with $n\ge1$, be a determining set, and $S$ the set of alphabet characters that occur at least once in $D$. Define $\ell_0=0$ and $\ell_i = \sum_{j=1}^i |d_i|$ for $1\le i\le n$.

We claim that $\ell_n\ge a$. By contradiction, assume that $\ell_n < a$. Since $\ell_n\ge |S|$, Lemma~\ref{lem:a-1} implies that $\ell_n=|S|=a-1$. In particular, up to a character bijection, we may assume that $S=\{0,\ldots,a-2\}$, and that $d_i=\ell_{i-1} \ldots (\ell_i - 1)$ for $1\le i\le n$. Consider the character bijection $\mu$ such that $\mu(a-1)=a-1$, and $\mu(j) = \ell_i + \ell_{i-1} - 1 - j$ for $\ell_{i-1}\le j \le \ell_i - 1$ and $1\le i \le n$. In particular, $\mu$ acts as a reversal on each string in $D$. Then $(\mu\circ\rho)(d_i)=d_i$, for all $1\le i\le n$, hence $(\mu\circ\rho)$ must be the identity. However, this is not possible because $(\mu\circ\rho)(0(a-1))=(a-1)(a-2)$. Hence $\ell_n\ge a$, which implies the lemma because $n\cdot k_2 \ge \sum_{i=1}^n|d_i|=\ell_n\ge a$.
\end{proof}

\subsection{Proof of Theorem~\ref{thm:detNo}} Define $n:=\lceil \frac{a}{k_2}\rceil$; in particular, $n\ge1$. Due to Lemma~\ref{lem:Detlb}, it suffices to construct a determining set of size $n$, for which we consider three cases. First, if $k_2\ge a$, define $D:=\{d\}$ where
\begin{equation*}
d :=  \begin{cases}
0^{k_2-1}1, & a=2;\\
0^{k_2-a+2}1\cdots(a-2), & a\ge 3.
\end{cases}
\end{equation*}
Since at least $a-1$ alphabet characters are represented in $d$, the identity is the only character bijection that preserves $d$. On the other hand, if $\sigma=\mu\circ\rho$, where $\mu$ is {any} character bijection then, for $a=2$, $\sigma(d)=\mu(1)\mu(0)^{k_2-1}$ with $k_2-1\ge2$; in particular $\sigma(d)\ne d$. Similarly, if $a\ge3$ then $\sigma(d)=\mu(a-2)\cdots\mu(1)\mu(0)^{k_2-a+2}$ with $k_2-a+2\ge2$, and again $\sigma(d)\ne d$. Therefore, $D$ is a determining set.

Second, if $2<k_2<a$, let $D:=\{d_1,\ldots,d_n\}$ {be of cardinality $n$ such that $d_1 := 0012\ldots(k_2-2)$, $d_1,\ldots,d_n$ are of length $k_2$, and every character in $\{0,\ldots, a-2\}$ is used by at least one node in $D$.} Since $a-1$ alphabet characters are represented in $D$, the identity is the only character bijection that maps each $d_i$ to itself. However, if $\sigma = \mu \circ \rho$, where $\mu$ is {any} character bijection, then $\sigma(d_1) = \mu(k_2-2)\cdots\mu(1)\mu(0)^2\ne d_1$. So, $D$ is a determining set. 

Finally, if $k_2=2$; {in particular, $a\ge3$,} let $D=\{d_1,\ldots,d_n\}$ {be of cardinality $n$ such that $d_1:=01$, $d_2:=12$, $d_1,\ldots,d_n$ are of length $2$, and every character in $\{0,\ldots, a-2\}$ is used by at least one node in $D$.} Once again, since at least $a-1$ alphabet characters are represented in $D$, the identity is the only character bijection that maps each $d_i$ to itself. Next, let $\sigma=\mu\circ \rho$, where $\mu$ is {any} character bijection. If $\sigma(01)=01$ then $\mu(1)=0$. If this is the case then $\sigma(12)=\mu(2)0\ne 12$, i.e. either $\sigma(01)\ne 01$ or $\sigma(12)\ne 12$. Hence $D$ is determining and the theorem follows.


\newpage

\bibliographystyle{siamplain}

\end{document}